\documentclass[11pt]{amsart}
\usepackage{amsmath,amsthm,amsfonts,amssymb}
\usepackage{mathtools}
\usepackage{latexsym}
\usepackage[dvipsnames]{xcolor}
\usepackage{graphicx}
\usepackage{float}
\usepackage{graphbox}
\usepackage{textcomp}
\usepackage[pagebackref,colorlinks=true,linkcolor=MidnightBlue,urlcolor=black,citecolor=RedViolet]{hyperref}
\usepackage[margin=1in]{geometry}
\usepackage{marginnote}
\usepackage{array}
\usepackage{appendix}
\usepackage{comment}

\newtheorem{theorem}{Theorem}[section]
\newtheorem{proposition}[theorem]{Proposition}
\newtheorem{lemma}[theorem]{Lemma}
\newtheorem{corollary}[theorem]{Corollary}
\newtheorem{definition}[theorem]{Definition}

\newtheorem{remark}[theorem]{Remark}
\newtheorem{example}[theorem]{Example}

\newcommand{\overbar}[1]{\mkern 1.5mu\overline{\mkern-1.5mu#1\mkern-1.5mu}\mkern 1.5mu}

\def\R{\mathbb{R}}
\def\C{\mathbb{C}}

\def\N{\mathbb{N}}

\DeclareMathOperator{\Tr}{Tr}

\DeclareMathOperator{\E}{\mathbb{E}}
\DeclareMathOperator{\Var}{Var}
\DeclareMathOperator{\Prob}{P}

\begin{document}

\title{Moment-based PPT criteria for random bipartite states}

\author{Cécilia Lancien}
\address[Cécilia Lancien]{Université Grenoble Alpes, CNRS, Institut Fourier, France}
\email{cecilia.lancien@univ-grenoble-alpes.fr}
\author{Kieran McShane}
\address[Kieran McShane]{Université Grenoble Alpes, CNRS, Institut Fourier, France}
\email{kieran.mcshane@univ-grenoble-alpes.fr}

\begin{abstract}
Moment-based relaxations of the positive partial transpose (PPT) criterion have been recently introduced, as a hierarchy of entanglement criteria involving only experimentally accessible quantities of a given bipartite state. The goal of this work is to study their typical detection performance on high-dimensional bipartite systems. Concretely, we investigate whether random bipartite mixed states on $\mathbb C^d\otimes\mathbb C^d$, obtained as the marginal over an environment $\mathbb C^s$ of a uniformly distributed pure state, generically satisfy or violate them. For each fixed level $m\in\mathbb N$ in this hierarchy of moment-based PPT criteria, we are able to identify a threshold environment dimension $s=\lambda_md^2$ at which the behavior of the associated random state switches from violating to satisfying it, with probability going to $1$ as $d$ grows. The proof combines combinatorics of permutations techniques to estimate the average value of moments of partially transposed random states and concentration of measure arguments to bound the probability of deviating from such average, when the underlying local dimension $d$ is large. We additionally need tools from the theory of Hankel determinant evaluation via orthogonal polynomials.
\end{abstract}

\date{\today}

\maketitle

\vspace{-1.2em}
\setcounter{tocdepth}{2}
{\small \tableofcontents}

\newpage

\section{Introduction}

Entanglement is a key concept in quantum information theory, yet characterizing it is a notoriously difficult task. Indeed, deciding whether an arbitrary multipartite quantum state $\rho$ is entangled or separable is known to be in general a computationally hard problem \cite{Gurvits2003, Gharibian2010}. This challenge justifies the use of entanglement criteria that are easier to check than separability itself and can act as powerful witnesses for entanglement. The most widely used of these, at least in the bipartite case, is the positive under partial transpose (PPT) criterion \cite{Peres1996,Horodecki1996}. While simple from a mathematical perspective, its practical application is demanding, as it requires the knowledge of the full spectrum of the partially transposed state $\rho^\Gamma$, which in turn requires full tomography of the state $\rho$. A practical solution is found in a further relaxation of the PPT criterion, which involves only experimentally measurable quantities, namely the low-order moments of $\rho^\Gamma$, i.e.~$p_k = \Tr((\rho^\Gamma)^k)$ for $k\in\N$ up to some fixed value, instead of its entire spectrum \cite{Elben2020}.

The goal of this paper is to investigate whether a random high-dimensional bipartite state typically satisfies or violates these moment-based PPT criteria. This can be seen as a more systematic approach to an exploration that was initiated in \cite{Car24}. We analyze a natural model of random mixed states arising as the partial trace of a uniformly distributed pure state in a larger system, where a parameter $\lambda > 0$ quantifies the mixedness of the resulting state. Our main result is the identification, for each level in this hierarchy of moment-based PPT criteria, of a sharp threshold for the parameter $\lambda$ at which the behavior of a random state switches from typically violating to typically satisfying the criterion. For the first (non-trivial) level in this hierarchy, the so-called $p_3$-PPT criterion, such threshold value of $\lambda$ is equal to $1$. While as the level increases, it goes to $4$, which is the known threshold for violating vs satisfying the PPT criterion \cite{Aubrun12}. Our results thus provide a quantitative comparison between the performance of each of these implementable entanglement criteria and that of the less practical PPT criterion, on typical instances. We give below an informal version of our main result, whose precise statement appears in Corollary \ref{cor:gen-threshold}.

\begin{theorem}[Main result, informal version]
Let $\rho$ be the reduced state on $\mathbb C^d\otimes\mathbb C^d$ of a uniformly distributed pure state on $\mathbb C^d\otimes\mathbb C^d\otimes\mathbb C^s$, with $s=\lambda d^2$. Given $m\in\mathbb N$, define
\[
\lambda_m=4\cos^2\left(\frac{\pi}{m+2}\right).
\]
If $\lambda<\lambda_m$, resp.~$\lambda>\lambda_m$, then $\rho$ violates, resp.~satisfies, with high probability the $m^{\text{th}}$ level in the hierarchy of moment-based PPT criteria. %$p_{2m+1}$-PPT criterion with high probability. 
\end{theorem}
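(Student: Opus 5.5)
The strategy is to reduce each level of the hierarchy to the sign of a Hankel determinant built from the moments $p_k=\Tr((\rho^\Gamma)^k)$. We then show that these moments concentrate around those of an explicit limiting measure, and evaluate the limiting determinants through orthogonal polynomials.

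\emph{Step 1: reduction to Hankel determinants.} The $m^{\text{th}}$ level of the hierarchy says the following. If $\rho^\Gamma\geq 0$, then $(p_k)$ is the moment sequence of the positive measure $\sum_i\delta_{\mu_i}$ supported on $[0,\infty)$, where the $\mu_i$ are the eigenvalues of $\rho^\Gamma$. Hence the shifted Hankel matrix $H_m=(p_{i+j+1})_{0\leq i,j\leq m}$ is positive semidefinite; equivalently, all its leading principal minors are nonnegative. I would rescale by $X=d^2\rho^\Gamma$ and set $q_k=\frac{1}{d^2}\Tr(X^k)=d^{2(k-1)}p_k$. Then $H_m=d^{-2}D\widetilde H_m D$ with $D=\mathrm{diag}(d^{-2i})$ and $\widetilde H_m=(q_{i+j+1})$. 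So positivity of $H_m$ is equivalent to positivity of $\widetilde H_m$, and all entries of $\widetilde H_m$ are of order $1$.

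\emph{Step 2: limiting moments and concentration.} I would use the combinatorics of permutations (Wick/Wishart calculus for $\rho=WW^*/\Tr(WW^*)$ with $W$ a Gaussian $d^2\times s$ matrix) to show that $\E q_k=\int x^k\,d\nu_\lambda+O(d^{-2})$. Here $\nu_\lambda$ is the semicircle law with mean $1$ and variance $1/\lambda$, recovering Aubrun's result; only the non-crossing pairings compatible with the partial-transpose permutation survive at leading order. I would then show $\Var(q_k)=O(d^{-4})$ (or at least $o(1)$) by the same genus expansion, or alternatively by Lipschitz concentration on the sphere after truncating the operator norm of $X$. Chebyshev's inequality and a union bound over $k\leq 2m+1$ then give $\widetilde H_m\to (\int x^{i+j+1}d\nu_\lambda)_{i,j}$ entrywise, in probability. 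I expect this to be the main technical obstacle: controlling the variance of $\Tr(X^k)$ at the correct scale and handling the normalization $\Tr(WW^*)$. Since the limiting determinants computed below are nonzero for $\lambda\neq\lambda_m$, any $o(1)$ error suffices, so crude bounds are enough.

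\emph{Step 3: evaluating the limiting determinants.} For a measure $\nu$ with monic orthogonal polynomials $P_n$, the classical identity (Christoffel's formula) gives
\[
\frac{\det(m_{i+j+1})_{0\le i,j\le n}}{\det(m_{i+j})_{0\le i,j\le n}}=(-1)^{n+1}P_{n+1}(0),
\]
and the denominator is positive. For $\nu_\lambda$ we have $P_n(x)=\sigma^nU_n\bigl(\tfrac{x-1}{2\sigma}\bigr)$ with $\sigma=\lambda^{-1/2}$, so the sign of the $(n+1)^{\text{st}}$ leading minor of the limit of $\widetilde H_m$ is the sign of $(-1)^{n+1}U_{n+1}(-\sqrt\lambda/2)=U_{n+1}(\sqrt\lambda/2)$. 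This is positive exactly when $\sqrt\lambda/2$ exceeds the largest zero $\cos(\pi/(n+2))$ of $U_{n+1}$, as long as $\sqrt\lambda/2<1$; for $\lambda\geq4$ all minors are positive. Since $\cos(\pi/(n+2))$ increases in $n$, the following holds. If $\lambda>\lambda_m$, all leading minors of order $\leq m+1$ have strictly positive limits, so $\widetilde H_m>0$ with high probability and $\rho$ satisfies the criterion. If $\lambda<\lambda_m$, let $n\leq m$ be minimal with $\lambda<\lambda_n$. The earlier minors are positive in the limit, and the $(n+1)^{\text{st}}$ is strictly negative; at the boundary $\lambda=\lambda_{n-1}$ I would take the first nonvanishing negative one. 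This minor stays negative with high probability by Step 2 and continuity of the determinant, so $H_m\not\geq0$ and the criterion is violated.
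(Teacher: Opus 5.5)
Your proposal is correct and has the same skeleton as the paper. You rescale so that the shifted Hankel matrix has entries of order $1$, and compute the mean and variance of the rescaled partially transposed moments by Wick calculus, with $NC_{\le 2}(k)$ giving the leading order. You then deduce concentration, and read off the signs of the limiting leading minors from $U_{n+1}(\sqrt\lambda/2)$.

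Two ingredients differ from the paper. For concentration you use Chebyshev's inequality, which only gives probability tending to $1$. That is all the informal statement asks for; the paper's polynomial-chaos and locally Lipschitz Lévy bounds are what produce the exponential rates of Corollary~\ref{cor:gen-threshold}. Your claim $\Var(q_k)=O(d^{-4})$ is right, and sharper than the paper's $\sigma_k^2\le C(\lambda,k)^2d^{4k+2}$. In Proposition~\ref{prop:second_moment} the factorizing permutations $\pi=\pi_1\pi_2$ reproduce $(\E\,p_k(W))^2$ exactly, and all other permutations have exponent at most $4k$, so $\Var(p_k(W))=O(d^{4k})$. The normalization by $\Tr(W)$ is also not a real obstacle. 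By Remark~\ref{rem:normalization} you may rescale by the random factor $\Tr(W)/d^2$ just as you rescaled by $d^2$, and work with $W/d^2$ directly.

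For the determinants, the paper proves orthogonality of its recurrence polynomials via generating functions. It then evaluates the tridiagonal Gram matrix $(\mathcal L[xP_kP_l])_{k,l}$, obtaining the exact value $\lambda^{(m+1)^2/2}U_{m+1}(\sqrt\lambda/2)$ without ever naming a limiting measure. You instead recognize $M_k(\lambda)/\lambda^k$ as the moments of the semicircle law with mean $1$ and variance $1/\lambda$. You take its orthogonal polynomials $\sigma^nU_n((x-1)/(2\sigma))$ off the shelf and apply $\det(m_{i+j+1})/\det(m_{i+j})=(-1)^{n+1}P_{n+1}(0)$. This is shorter and explains the threshold: $\lambda_m$ is exactly where the smallest zero $1-2\lambda^{-1/2}\cos(\pi/(m+2))$ of $P_{m+1}$ crosses $0$. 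That makes $\lambda_m\to4$ transparent, since these zeros fill out the support $[1-2/\sqrt\lambda,\,1+2/\sqrt\lambda]$.

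A few small points need repair.

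First, $U_{n+1}(\sqrt\lambda/2)$ is not positive \emph{exactly} when $\sqrt\lambda/2>\cos(\pi/(n+2))$. For example, $U_4(\sqrt\lambda/2)>0$ for small $\lambda>0$, since $U_4(0)=1$. So the negativity of the $(n+1)$st minor needs one more line. From $\lambda_{n-1}\le\lambda<\lambda_n$ you get $\cos(2\pi/(n+2))<\cos(\pi/(n+1))\le\sqrt\lambda/2<\cos(\pi/(n+2))$ for $n\ge1$. Hence $\sqrt\lambda/2$ lies strictly between the two largest zeros of $U_{n+1}$, where $U_{n+1}$ is negative. This is the paper's $\pi<(\ell+2)\theta<2\pi$.

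Second, positive semidefiniteness is not equivalent to nonnegativity of the leading principal minors; think of $\mathrm{diag}(0,-1)$. Your final argument only uses the valid implications: all leading minors positive gives positive definiteness, and one negative leading minor excludes semidefiniteness. So the boundary case $\lambda=\lambda_{n-1}$ needs no special treatment, because the $(n+1)$st minor is strictly negative there too.

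Finally, $H_m=D\widetilde H_mD$, without the factor $d^{-2}$.
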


The $m^{\text{th}}$ level in the hierarchy of moment-based PPT criteria, which is usually referred to as the $p_{2m+1}$-PPT criterion, is properly described in Definition \ref{def:p_k-PPT}. We omit such formal definition for now in order to keep the introduction light. Also, as we will see later on, when we say that a property of a random state holds `with high probability', we mean 'with probability going to $1$ as the underlying local dimension $d$ grows'.

Our analysis is grounded in the methods of random matrix theory and asymptotic geometric analysis. The mathematical techniques involved include the use of Wick's formula and combinatorics of permutations to compute the average and variance of the moments of partially transposed Wishart matrices, and to identify their asymptotic scaling (i.e.~as $d\to\infty$). We then combine these estimates with concentration of measure inequalities for polynomials in Gaussian variables to bound the probability of deviating from the asymptotic average, for $d$ being large but finite. We also crucially need at some point to obtain a closed-form expression for the Hankel determinants associated with the sequence of asymptotic average partially transposed moments, which is done through orthogonal polynomial methods.

The paper is organized as follows. Section \ref{sec:p_k-PPT} provides the necessary background on the moment-based PPT criteria for bipartite states. In Section \ref{sec:random-states}, we formally define the random state model that we look at and review previously established thresholds for their entanglement-related properties. Section \ref{sec:average} contains our first main technical results, namely estimates of the average and variance of the moments of partially transposed Wishart matrices. We then recall in Section \ref{sec:concentration} known deviation inequalities for polynomials in Gaussian variables and apply them to prove that the moments, and hence the associated Hankel determinants, we are interested in concentrate around their average. Combined with the derivation of a closed-form expression for the latter quantities, this allows us to establish in Section \ref{sec:general_threshold} our threshold result for the typical violation or satisfaction of moment-based PPT criteria. Finally, we discuss in Section \ref{sec:unbalanced} how our results generalize to the regime where the two local dimensions are not the same. We conclude in Section \ref{sec:conclusion} with a discussion of potential avenues for future work. Appendix \ref{app:second_approach} gathers an alternative approach (to the one presented in Section \ref{sec:concentration}) to proving concentration of moments, based on estimating local Lipschitz constants of the functions we are dealing with, which gives slightly better results than the one presented in the main text but is considerably more tedious. Finally, Appendix \ref{app:moment} contains one technical result (namely a moment recurrence relation) needed for computing the asymptotic Hankel determinants in Section \ref{sec:general_threshold}. 

\section{Hierarchy of moment-based PPT criteria for entanglement}
\label{sec:p_k-PPT}

\subsection{Brief reminder about entanglement detection} \hfill\vspace{0.1cm}

A quantum state on a Hilbert space $\mathcal H$ is described by an operator $\rho$ on $\mathcal H$ that is Hermitian positive semi-definite (PSD) and has trace $1$. The extreme points of the set of quantum states on $\mathcal H$ are the so-called pure states, which are simply rank-$1$ projections, i.e.~of the form $\rho=xx^*$ for
$x\in\mathcal H$ a unit vector. General states on $\mathcal H$, also called mixed states, are convex combinations of pure states.

If $\mathcal H$ is bipartite, i.e.~$\mathcal H=\mathcal{H}_1 \otimes \mathcal{H}_2$ for some Hilbert spaces $\mathcal H_1,\mathcal H_2$, an important distinction is that of separable vs entangled states. A state $\rho$ on $\mathcal{H}_1 \otimes \mathcal{H}_2$
is called separable if it can be written as a convex combination of product states, i.e.~if
\[ \rho=\sum_{i=1}^r \lambda_i \rho_1^{(i)}\otimes\rho_2^{(i)}, \]
for some $r\in\mathbb N$, $\lambda_1,\ldots,\lambda_r\geq 0$ such that
$\lambda_1+\cdots+\lambda_r=1$ and states $\rho_1^{(1)},\ldots,\rho_1^{(r)}$, resp.~$\rho_2^{(1)},\ldots,\rho_2^{(r)}$, on $\mathcal H_1$, resp.~$\mathcal H_2$. And it is called entangled if it cannot be written in such a way. Only systems that are in an entangled state exhibit genuinely quantum correlations between their subsystems, and can thus provide an advantage over classical systems in many information processing tasks. Being able to characterize entanglement is thus crucial in practice. However, this problem (and even approximate versions of it) is known to be NP-hard \cite{Gurvits2003,Gharibian2010}.

To address this issue, a solution is to design necessary conditions for separability
that are easier to check than separability itself, aka entanglement criteria. The most
famous such criterion is the so-called positive under partial transpose (PPT) criterion
\cite{Peres1996, Horodecki1996}. It consists in the obvious observation that, if a state
$\rho$ on $\mathcal{H}_1\otimes\mathcal{H}_2$ is separable, then its partial
transposition, say with respect to the second subsystem, is PSD, i.e.
\[  \rho^\Gamma:=\left[\mathrm{id}_{\mathcal H_1}\otimes\mathrm{T}_{\mathcal H_2}\right](\rho) \geq 0 . \]
Here, $\mathrm{id}_{\mathcal H_1}$, resp.~$\mathrm T_{\mathcal H_2}$, denotes the identity, resp.~transposition, map on the space of operators on $\mathcal H_1$, resp.~$\mathcal H_2$. Being PPT is thus a
necessary condition for separability, which is not sufficient in general (except if one subsystem has dimension $2$ and the other one has dimension $2$ or $3$). This means that there exist entangled states which nevertheless satisfy the PPT condition. 

Checking whether a given state $\rho$ is PPT is mathematically and computationally easy, as it only requires checking whether the matrix $\rho^\Gamma$ is PSD. Nevertheless, from an experimental point of view, accessing the full spectrum of $\rho^\Gamma$ requires full state tomography of $\rho$, and hence a number of measurements that is quadratic in the dimension of $\mathcal H$ \cite{ODonnell2015,Haah2015}. This observation motivates the search for further relaxations of the PPT criterion that would involve only
quantities that are more efficiently accessible experimentally, such as e.g.~the low order moments of $\rho^\Gamma$. Indeed, one can design practically implementable protocols (based on local randomized measurements and classical shadows) that approximate $\Tr((\rho^\Gamma)^k)$ for small $k\in\mathbb N$ with a number of measurements that is only linear in the dimension of $\mathcal H$ \cite{Elben2020}.

\subsection{Hankel matrices of partially transposed moments and $p_{2m+1}$-PPT criterion} \hfill\vspace{0.1cm}

Let us now explain more concretely how the PPT property of a state $\rho$ on $\mathcal H=\mathcal H_1\otimes\mathcal H_2$ can be investigated using the moments of its partial transpose. 
For $k\in\mathbb N$, the $k$-th moment of $\rho^\Gamma$ is defined as
\[ p_k=p_k(\rho):=\Tr\left(\left(\rho^\Gamma\right)^k\right). \]
We have $p_0=\Tr(I)$ and $p_1=\Tr(\rho^\Gamma)=\Tr(\rho)=1$. Since $\rho^\Gamma$ is Hermitian but not necessarily PSD, the moments $p_k$ are real but not necessarily non-negative for $k>1$.

Given $m\in\mathbb N$, there are two natural Hankel matrices of size $m+1$ associated with this moment sequence. The unshifted Hankel
matrix, which involves the moments $p_0,\ldots,p_{2m}$, is
\[
    A_m\left(\rho^\Gamma\right)
    :=
    \left(p_{i+j}(\rho)\right)_{0\leq i,j\leq m}
    =
    \begin{pmatrix}
        p_0 & p_1 & p_2 & \cdots & p_m \\
        p_1 & p_2 & p_3 & \cdots & p_{m+1} \\
        p_2 & p_3 & p_4 & \cdots & p_{m+2} \\
        \vdots & \vdots & \vdots & \ddots & \vdots \\
        p_m & p_{m+1} & p_{m+2} & \cdots & p_{2m}
    \end{pmatrix},
\]
while the shifted Hankel matrix, which involves the moments $p_1,\ldots,p_{2m+1}$, is
\begin{equation}\label{eq:def-B_k}
    B_m\left(\rho^\Gamma\right) := \left(p_{i+j+1}(\rho)\right)_{0\leq i,j\leq m} =
    \begin{pmatrix}
        p_1 & p_2 & p_3 & \cdots & p_{m+1} \\
        p_2 & p_3 & p_4 & \cdots & p_{m+2} \\
        p_3 & p_4 & p_5 & \cdots & p_{m+3} \\
        \vdots & \vdots & \vdots & \ddots & \vdots \\
        p_{m+1} & p_{m+2} & p_{m+3} & \cdots & p_{2m+1}
    \end{pmatrix}.
\end{equation}

Let us first explain why this distinction is in fact essential. Given $a=(a_0,\ldots,a_m)\in\mathbb R^{m+1}$, define the real polynomial $q(x)=\sum_{i=0}^m a_i x^i$. Then, for any Hermitian operator $M$ on $\mathcal H$, we have
\[ a^T A_m(M)a = \sum_{i,j=0}^m a_i a_jp_{i+j} = \Tr\left(q(M)^2\right)\geq 0. \]
Thus $A_m(M)$ carries no information about whether $M$ is PSD. In contrast, we have
\[ a^T B_m(M)a = \sum_{i,j=0}^m a_i a_jp_{i+j+1} = \Tr\left(Mq(M)^2\right), \]
which might detect whether $M$ is PSD or not. 

This brings us to the following definition, which serves as starting point for this work. It was first introduced in \cite{YIG}.

\begin{definition} \label{def:p_k-PPT}
For $m\geq0$, we say that a state $\rho$ on $\mathcal H_1\otimes\mathcal H_2$ satisfies the $p_{2m+1}$-PPT criterion if
\[ B_m\left(\rho^\Gamma\right)\geq 0. \]
Since $B_\ell$ is a principal sub-matrix of $B_m$ whenever $\ell\leq m$, these criteria form a nested hierarchy. What is more, the level $m$ uses only the first $2m+1$ moments of $\rho^\Gamma$.
\end{definition}

\begin{remark} \label{rem:normalization}
    Let us make here a very simple observation, that will actually turn out to be particularly useful to us later on. Given a state $\rho$ on $\mathcal H_1\otimes\mathcal H_2$ and $\alpha>0$, set $\rho_\alpha:=\alpha\rho$. Then, checking that $B_m(\rho^\Gamma)\geq 0$ is equivalent to checking that $B_m(\rho_\alpha^\Gamma)\geq 0$. Indeed, since for all $k\in\mathbb N$, $\Tr((\rho_\alpha^\Gamma)^k)=\alpha^k\Tr((\rho^\Gamma)^k)$, it is easy to see that, setting 
    \[ D_\alpha:=\mathrm{diag}\left(\alpha^{1/2},\alpha^{3/2},\ldots,\alpha^{m+1/2}\right), \]
    we have
    \[ B_m\left(\rho_\alpha^\Gamma\right) = D_\alpha B_m\left(\rho^\Gamma\right) D_\alpha^T. \]
    Hence, since $D_\alpha$ is clearly invertible, we have $B_m(\rho_\alpha^\Gamma)\geq 0$ if and only if $B_m(\rho^\Gamma)\geq 0$. This shows that, in order to check that $\rho$ satisfies the $p_{2m+1}$-PPT criterion, one can in fact check the positive semi-definiteness of the Hankel matrix $B_m$ associated to the partial transposition of any positive multiple of $\rho$.
\end{remark}

The connection between satisfying the $p_{2m+1}$-PPT criterion and satisfying the actual PPT criterion is made precise by the following result.

\begin{proposition}\label{prop:moment_ppt_equivalence}
Given $M$ a Hermitian operator acting on a finite-dimensional Hilbert space $\mathcal H$, we have
\[
    M\geq 0
    \quad\Longleftrightarrow\quad
    B_m(M)\geq 0 \text{ for all } m\in\mathbb N.
\]
What is more, if $\mathcal H$ is of dimension $D$, it is enough to check the conditions up to $m=D-1$.
\end{proposition}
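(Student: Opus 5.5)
The plan is to work in an eigenbasis of $M$ and read each $B_m(M)$ as a quadratic form on polynomials. Write $M=\sum_{j=1}^D \mu_j v_jv_j^*$ with real eigenvalues $\mu_1,\ldots,\mu_D$. Then $p_k=\sum_j \mu_j^k$. For any $a\in\mathbb R^{m+1}$ with associated polynomial $q(x)=\sum_i a_ix^i$, the identity already noted before Definition~\ref{def:p_k-PPT} gives
\[ a^TB_m(M)a=\Tr\left(Mq(M)^2\right)=\sum_{j=1}^D \mu_j\, q(\mu_j)^2 . \]

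The forward implication is immediate from this formula. If $M\geq 0$, every $\mu_j\geq 0$, so the right-hand side is a sum of non-negative terms. Hence $B_m(M)\geq 0$ for every $m$. Since $B_m(M)$ is a real symmetric matrix, positivity of the real quadratic form is exactly positive semi-definiteness.

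For the converse I will argue by contraposition, and it suffices to work at the single level $m=D-1$. Suppose some eigenvalue is negative, say $\mu_1<0$. Let $\nu_1=\mu_1,\nu_2,\ldots,\nu_r$ be the distinct eigenvalues, with $r\leq D$. By Lagrange interpolation there is a real polynomial $q$ of degree at most $r-1\leq D-1$ with $q(\nu_1)=1$ and $q(\nu_\ell)=0$ for $\ell\geq 2$. Taking $a$ to be its coefficient vector, padded with zeros up to length $D$, gives
\[ a^TB_{D-1}(M)a=\mu_1\cdot \mathrm{mult}(\mu_1)<0 , \]
so $B_{D-1}(M)$ is not PSD. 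Thus $B_{D-1}(M)\geq 0$ already forces $M\geq 0$.

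This proves the main equivalence together with the refinement that levels up to $m=D-1$ suffice. Levels $m\geq D$ are in any case implied by the forward direction. The nesting remark in Definition~\ref{def:p_k-PPT} confirms that checking $m=D-1$ covers all lower levels. The only step requiring care is ensuring the interpolating polynomial has degree at most $D-1$, which holds because there are at most $D$ distinct eigenvalues. I do not expect a genuine obstacle.
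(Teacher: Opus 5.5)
Your proposal is correct and follows essentially the same route as the paper: you read $B_m(M)$ as the quadratic form $a\mapsto\Tr\left(Mq(M)^2\right)$, and for the converse you use a Lagrange interpolating polynomial of degree at most $r-1\leq D-1$ that isolates a negative eigenvalue. The only difference is cosmetic: the paper concludes that $B_{r-1}(M)$ is not PSD, while you pad the coefficient vector with zeros and conclude directly at level $D-1$.
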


\begin{proof}
Given $m\in\mathbb N$ and $a=(a_0,\ldots,a_m)\in\mathbb R^{m+1}$, set
\[ q(x)=\sum_{i=0}^m a_i x^i. \]
By the definition of $B_m(M)$, we have
\[ a^T B_m(M)a = \sum_{i,j=0}^m a_i a_j \Tr\left(M^{i+j+1}\right) = \Tr\left(Mq(M)^2\right). \]
If $M\geq0$, since $q(M)^2\geq 0$ by definition, we have $\Tr(Mq(M)^2)\geq 0$. We have thus shown that, for all $a\in\mathbb R^{m+1}$, $a^T B_m(M)a\geq 0$, which means that $B_m(M)$ is PSD.

Conversely, suppose that $M$ has a negative eigenvalue $\lambda_1<0$. Let
$\lambda_1,\ldots,\lambda_r$ be the distinct eigenvalues of $M$, with multiplicities $m_1,\ldots,m_r$. By interpolation, there exists a real polynomial $q$ of degree at most $r-1$ such that $q(\lambda_1)=1$ and $q(\lambda_j)=0$ for all $2\leq j\leq r$. Then, for such $q$,
\[ \Tr\left(Mq(M)^2\right)=m_1\lambda_1 <0. \]
Thus by the same reasoning as before, there exists $a=(a_0,\ldots,a_{r-1})\in\mathbb R^{r}$ such that $a^T B_{r-1}(M)a<0$, which means that $B_{r-1}(M)$ is not PSD. Since $r\leq D$, the conditions up to $m=D-1$ already detect any negative eigenvalue.
\end{proof}

As an immediate consequence of Proposition \ref{prop:moment_ppt_equivalence} we have that a state $\rho$ on $\mathbb C^d\otimes\mathbb C^d$ is PPT if and only if it satisfies the $p_{2m+1}$-PPT criteria for all $1\leq m\leq d^2-1$. This provides a hierarchy of relaxations of the PPT criterion which collapses to the actual PPT criterion at a finite level. It is naturally indexed by odd moments due to the observation that shifted rather than unshifted Hankel matrices are the relevant objects for detecting non-PSD matrices.

\begin{example}
As a concrete example, let us look at the simplest non-trivial criterion in the hierarchy, namely the $p_3$-PPT criterion. It requires to check the positive semi-definiteness of
\[ B_1 = \begin{pmatrix} p_1 & p_2 \\ p_2 & p_3 \end{pmatrix}. \]
$B_1$ being PSD is equivalent to its principal minors being non-negative. The $1 \times 1$ minor condition is $p_1 \geq 0$. As we have already explained, $p_1=1$ for any quantum state, so this is always satisfied. The $2 \times 2$ minor condition is $\det(B_1) \geq 0$, i.e.
\[ p_1 p_3 - p_2^2 \geq 0. \]
This provides a concrete necessary condition based on the first three moments of the partial transpose that must hold for any PPT (and thus any separable) state.
\end{example}

\section{Random bipartite mixed states and their entanglement-related properties}
\label{sec:random-states}

\subsection{Model of random induced states} \hfill\vspace{0.1cm}

We now properly introduce the model of random bipartite mixed states that we will consider in this work. The definitions and results presented below can be found for instance in \cite{Zyczkowski_2001} and \cite[Section 6.2.3]{AubrunSzarek2017}. They form the foundation for understanding random states.

In what follows, we denote by $\mu_D$ the uniform distribution on the unit sphere (for the Euclidean norm) of $\C^D$. It is characterized by the property of being the unique probability measure on the unit sphere of $\C^D$ such that, if $x\sim\mu_D$, then $Ux\sim\mu_D$ for any unitary operator $U$ on $\C^D$. We also recall the following useful alternative way of describing $\mu_D$: if $g\in\C^D$ is a standard Gaussian vector (i.e.~its entries are independent complex Gaussian variables with mean $0$ and variance $1$), then $g/\|g\|\sim\mu_D$.

\begin{definition}[Random induced states]
Given $D,s\in\mathbb N$, let $x \in \C^D \otimes \C^s$ be a uniformly distributed unit vector and consider $xx^*$ the associated random pure state on $\C^D\otimes\C^s$. The random mixed state $\rho$ on $\C^D$ defined as
\[ \rho = \left[\mathrm{id}_D \otimes \Tr_s\right](xx^*) \]
is said to be induced by the environment $\C^s$. Here $\mathrm{id}_D$, resp.~$\Tr_s$, denotes the identity on $\mathbb C^D$, resp.~the trace on $\mathbb C^s$. Its distribution will be denoted by $\nu_{D,s}$ in what follows.
\end{definition}

Note that, if $s\leq D$, then by construction $\rho\sim\nu_{D,s}$ has rank at most $s$. More precisely, $\rho$ is uniformly distributed (i.e.~according to the induced Lebesgue measure) on the set of mixed states of $\C^D$ of rank at most $s$. In particular, $\nu_{D,D}$ is nothing else than the uniform measure on the set of mixed states of
$\C^D$.

It is easy to deduce a Gaussian description of random induced states from that of uniformly distributed unit vectors. Let $G \in M_{D,s}(\C)$ be a standard Gaussian matrix (i.e.~its entries are independent complex Gaussian variables with mean $0$ and
variance $1$) and define the PSD matrix $W\in M_D(\C)$ by
\[ W = GG^*. \]
Its distribution is called the Wishart distribution with parameters $(D, s)$ and is denoted by $\mathcal W_{D,s}$. The following result connects $\nu_{D,s}$ and $\mathcal W_{D,s}$.

\begin{proposition} \label{prop:induced-state-Wishart}
Given $D,s\in\mathbb N$, let $W\sim\mathcal W_{D,s}$. Then, $W/\Tr(W)\sim\nu_{D,s}$.
What is more, the random variables $W/\Tr(W)$ and $\Tr(W)$ are independent.
\end{proposition}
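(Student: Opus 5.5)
The plan is to realise $W$ through a Gaussian vector and then use the polar decomposition of a standard Gaussian vector. Identify $M_{D,s}(\C)$ with $\C^D\otimes\C^s$ via the linear isometry $G=\sum_{i,j}G_{ij}e_ie_j^*\mapsto g=\sum_{i,j}G_{ij}\,e_i\otimes e_j$. Under this identification $g$ is a standard Gaussian vector in $\C^{Ds}$. One also checks directly that
\[
\left[\mathrm{id}_D\otimes\Tr_s\right](gg^*)=GG^*=W
\qquad\text{and}\qquad
\Tr(W)=\|g\|^2 .
\]
Hence
\[
\frac{W}{\Tr(W)}=\left[\mathrm{id}_D\otimes\Tr_s\right]\left(xx^*\right),\qquad x:=\frac{g}{\|g\|}.
\]

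By the description of $\mu_{Ds}$ recalled above, $x\sim\mu_{Ds}$. Since $W/\Tr(W)$ is obtained from $x$ exactly as in the definition of random induced states, this gives $W/\Tr(W)\sim\nu_{D,s}$. (The event $g=0$ has probability zero and is ignored.)

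For independence, it suffices to show that $x=g/\|g\|$ and $\|g\|$ are independent. Then $W/\Tr(W)$, a measurable function of $x$, is independent of $\Tr(W)=\|g\|^2$, a measurable function of $\|g\|$.

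I would argue via unitary invariance of the standard Gaussian measure on $\C^{Ds}$: its density $\pi^{-Ds}e^{-\|g\|^2}$ is invariant under every unitary $U$. Fix a Borel set $B\subset(0,\infty)$ with $\Prob(\|g\|\in B)>0$. The conditional law of $x$ given $\{\|g\|\in B\}$ is a probability measure on the unit sphere. It is unitarily invariant, because $(Ug)/\|Ug\|=Ux$, $\|Ug\|=\|g\|$, and $Ug$ has the same law as $g$. By the uniqueness characterisation of $\mu_{Ds}$ recalled above, this conditional law equals $\mu_{Ds}$, whatever $B$ is. This is precisely independence of $x$ and $\|g\|$.

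Alternatively, one can write the Gaussian density in polar coordinates. It factorises as a function of $r=\|g\|$ alone times surface measure on the sphere.

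The only mildly delicate step is this independence argument, specifically justifying that a conditioned law inherits invariance. That follows in one line from $\Prob(Ux\in A,\|g\|\in B)=\Prob(x\in A,\|g\|\in B)$.
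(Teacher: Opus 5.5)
Your proof is correct and follows the route the paper has in mind. The paper does not prove this proposition itself and defers to the cited references, but it states that the result is deduced from the Gaussian description $g/\|g\|\sim\mu_{Ds}$, and its subsequent remark rewrites $W/\Tr(W)$ as $XX^*$ with $X=G/\|G\|_2$, which is your vectorization step; it also treats the independence of $\|G\|_2$ and $G/\|G\|_2$ as well known. Your conditioning argument, showing that the law of $x$ given $\{\|g\|\in B\}$ is unitarily invariant and hence equal to $\mu_{Ds}$ by uniqueness, is a valid, self-contained justification of the independence claim that the paper leaves to the literature.
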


An equivalent way of phrasing Proposition \ref{prop:induced-state-Wishart} is as follows: letting $G \in M_{D,s}(\C)$ be a standard Gaussian matrix and setting $X=G/\|G\|_2$ (where $\|\cdot\|_2$ stands for the Hilbert--Schmidt or Schatten-$2$ norm), $XX^*\sim\nu_{D,s}$. Indeed, if $W=GG^*$, then $\Tr(W)=\|G\|_2^2$, so we do have $XX^*=W/\Tr(W)$. This is interesting because such $X$ is uniformly distributed on the unit sphere for the Hilbert--Schmidt norm of $M_{D,s}(\C)$. So this shows that an alternative way of viewing $\rho\sim\nu_{D,s}$ is as $\rho=XX^*$ for such uniformly
distributed unit matrix $X\in M_{D,s}(\C)$.

Throughout this work, we will consider a balanced bipartite quantum system $\C^d \otimes \C^d$, so that $D=d^2$, coupled to an environment $\C^s$. We moreover focus on the case where the local system dimension $d$ is large and where the environment dimension $s$
scales as $s = s_d \sim \lambda d^2$, where $\lambda > 0$ is a constant parameter. As we will see, this scaling is the appropriate one for studying the entanglement-related properties that we are interested in.

\subsection{Known threshold results for entanglement-related properties} \hfill\vspace{0.1cm}

A general question that arises when studying such random induced states on a bipartite quantum system is the following: Given an entanglement criterion, is there a threshold $s_d$ for $s$ at which the behavior of a random states $\rho$ on $\C^d\otimes\C^d$, induced by an environment $\C^s$, transitions from typically violating it to typically satisfying it? The following result, established in \cite{Aubrun12}, provides a positive answer for the PPT criterion. It shows that $\rho$ is typically not PPT for $s<4d^2$ and typically PPT for $s>4d^2$.

\begin{theorem}
Let $d,s\in\mathbb N$ and let $\rho$ be a random state on $\C^d\otimes\C^d$, induced by an environment $\C^s$. For every $\epsilon > 0$, there exist $c(\epsilon),C(\epsilon)>0$ such that
\begin{itemize}
    \item[(1)] if $s \leq (4 - \epsilon)d^2$, then
    \[
    \Prob(\rho \text{ is not PPT}) \geq 1-C(\epsilon) e^{-c(\epsilon)s};
    \]
    \item[(2)] if $s\geq (4 + \epsilon)d^2$, then
    \[
    \Prob(\rho \text{ is PPT}) \geq 1 - C(\epsilon) e^{-c(\epsilon)s}.
    \]
\end{itemize}
\end{theorem}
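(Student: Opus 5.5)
By Proposition~\ref{prop:induced-state-Wishart} and Remark-style scale invariance, $\rho$ is PPT if and only if $W^\Gamma\geq 0$, where $W\sim\mathcal W_{d^2,s}$. So the plan is to work with the normalized matrix
\[ Y:=\frac{1}{d\sqrt{s}}\left(W^\Gamma - \frac{s}{1}\,I\right)\cdot\frac{1}{1}, \]
or more simply with $W^\Gamma/s$. The aim is to locate its extreme eigenvalues. The key input is the asymptotic spectral distribution of $W^\Gamma$ in the regime $s\sim\lambda d^2$. By the moment method (Wick's formula and the genus expansion), the normalized moments $\frac{1}{d^2}\Tr((W^\Gamma/d)^k)\cdot d^{-k}\cdot(\ldots)$ converge to those of a semicircle law with mean $\lambda$ and variance $\lambda$ (after rescaling by $d$): only non-crossing pairings with a particular parity structure survive. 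The limiting support is therefore $[\lambda-2\sqrt\lambda,\lambda+2\sqrt\lambda]$. Its left edge is negative exactly when $\lambda<4$, which is consistent with the threshold $s=4d^2$.

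Convergence of the empirical distribution only controls bulk eigenvalues, so each part needs its own argument.

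For part (1), $\lambda\leq 4-\epsilon$, the limiting law puts positive mass on $(-\infty,-\delta)$. It then suffices to show that the empirical spectral measure concentrates exponentially. I would do this with concentration of measure for Lipschitz functions of the Gaussian matrix $G$: take the linear statistic $\frac{1}{d^2}\Tr f(W^\Gamma/d)$ for a bounded Lipschitz test function $f$ supported on the negative axis. To keep Lipschitz constants under control, restrict first to the event $\|G\|_\infty\leq C\sqrt s$, which has probability at least $1-e^{-cs}$. On that event the statistic is Lipschitz in $G$ with constant $O(1/d)$ in Hilbert--Schmidt norm. Gaussian concentration then gives deviation probabilities of order $e^{-cd^2}=e^{-c's}$. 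Hence with this probability some eigenvalue is negative, so $\rho$ is not PPT.

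For part (2), $\lambda\geq4+\epsilon$, I need a bound on the smallest eigenvalue of $W^\Gamma$ itself. This is the main obstacle. Here I would use the convexity and Lipschitz property of $G\mapsto\lambda_{\min}(W^\Gamma)$, or rather of $G\mapsto\|(GG^*)^\Gamma-sI\|_\infty^{1/2}$ suitably linearized. First I would bound the expectation $\mathbb E\,\|W^\Gamma-sI\|_\infty\leq(2+o(1))d\sqrt s$. The plan for this is a high-moment computation, $\mathbb E\Tr((W^\Gamma-sI)^{2p})$ with $p\sim\log d$, controlling the contributions of the non-planar pairings uniformly, as in Haagerup--Thorbjørnsen type arguments. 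I expect this combinatorial estimate to be the hard step. A $\varepsilon$-net argument on $\C^{d}\otimes\C^d$ only gives constants that are too weak to reach the sharp threshold.

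Given the expectation bound, Gaussian concentration for the Lipschitz function $\|W^\Gamma-sI\|_\infty$ (again on the event $\|G\|_\infty\leq C\sqrt s$) yields
\[ \lambda_{\min}(W^\Gamma)\geq s-(2+\eta)d\sqrt s>0 \]
with probability at least $1-Ce^{-cs}$, since $s>4d^2(1+\epsilon/4)$. This gives $\rho$ PPT.
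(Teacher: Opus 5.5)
The paper does not prove this statement. It quotes it from \cite{Aubrun12} as a benchmark, so the relevant comparison is with that argument. Your outline has the same overall architecture: a moment method for the non-centered semicircular limit, a separate moment bound at the spectral edge, and concentration of measure for the probability estimates. Indeed, the $M_k(\lambda)$ of Theorem \ref{th:first_moment_asymptotic} are exactly the moments of a semicircle law with mean $\lambda$ and variance $\lambda$, so your identification of the support $[\lambda-2\sqrt\lambda,\lambda+2\sqrt\lambda]$ is right.

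At fixed $\lambda=s/d^2$, the routine steps you describe are fine. The convexity you invoke for $G\mapsto\lambda_{\min}((GG^*)^\Gamma)$ does not hold: this function is even in $G$, vanishes at $G=0$, and takes negative values. You never need it, though. All you use is the Lipschitz bound $\bigl|\lambda_{\min}((GG^*)^\Gamma)-\lambda_{\min}((HH^*)^\Gamma)\bigr|\le\|GG^*-HH^*\|_2\le(\|G\|_\infty+\|H\|_\infty)\|G-H\|_2$ on a good event.

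The genuine gap is that part (2) rests entirely on $\E\|W^\Gamma-sI\|_\infty\le(2+o(1))d\sqrt s$, which you announce but do not prove. This is the only non-routine input, and it is exactly where $4=2^2$ comes from. A crude count of permutations by defect class, in the spirit of the proof of Lemma \ref{lem:op-norm-W-Gamma}, would give $O(d\sqrt s)$ but not the constant $2$, hence no sharp threshold.

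In \cite{Aubrun12} this estimate is obtained as follows (cf.\ the remark following Lemma \ref{lem:op-norm-W-Gamma}). One splits $W^\Gamma=\mathrm{diag}(W)+(W-\mathrm{diag}(W))^\Gamma$. The diagonal entries $W_{ii}=\|g_i\|^2$, with $g_i$ the rows of $G$, are all at least $s-o(d\sqrt s)$ with overwhelming probability. One then bounds $\|(W-\mathrm{diag}(W))^\Gamma\|_\infty$ by moments of growing order.

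If you keep your direct route, note that centering removes exactly the fixed points from the Wick expansion. Thus $\E\Tr((W^\Gamma-sI)^{2p})$ is the sum \eqref{eq:first_moment_formula} restricted to fixed-point-free $\pi\in S_{2p}$, and the non-crossing pairings contribute $\mathrm{Cat}(p)\,d^2(d\sqrt s)^{2p}$. What remains to be shown is that the full sum is at most $d^2\bigl(2(1+\eta)d\sqrt s\bigr)^{2p}$ for $p$ a large multiple of $\log d$.

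There is also a uniformity issue. The constants may depend only on $\epsilon$, and (1), resp.\ (2), must hold for every $d$ and every $s\le(4-\epsilon)d^2$, resp.\ $s\ge(4+\epsilon)d^2$. Your arguments instead fix $\lambda=s/d^2$ and let $d\to\infty$.

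For (1), work with the centered matrix $Y=(W^\Gamma-sI)/(d\sqrt s)$ that you introduced and then dropped.
\begin{itemize}
\item Its normalized moments $d^{-2}\E\Tr(Y^k)$ match those of the standard semicircle up to $O_k(s^{-1/2})$, uniformly in $d\ge\sqrt{s/4}$.
\item $W^\Gamma\ngeq0$ as soon as $Y$ has an eigenvalue below $-\sqrt{4-\epsilon}$, which lies strictly inside $(-2,0)$.
\item The good event should be $\|G\|_\infty\le 3(d+\sqrt s)$. The event $\|G\|_\infty\le C\sqrt s$ is unlikely when $s\ll d^2$, since $\|G\|_\infty$ is at least the norm of a column, which is about $d$ with overwhelming probability.
\item Bounded $s$ can be absorbed into $C(\epsilon)$.
\end{itemize}

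For (2), a deviation of size $\eta d\sqrt s$ only buys probability $1-e^{-c\eta^2 d^2}$. This is not of the form $1-Ce^{-cs}$ when $s\gg d^2$, for instance with $d$ fixed and $s\to\infty$, where nothing can be absorbed into constants. You must feed the whole margin $s-(2+\eta)d\sqrt s$, which is of order $s$, into the concentration inequality. Your edge estimate must also hold uniformly in $s/d^2\ge 4+\epsilon$. For $s\ge Kd^4$, the crude bound $\lambda_{\min}(W^\Gamma)\ge s-\|W-sI\|_2$, together with $\E\|W-sI\|_2\le d^2\sqrt s$, already suffices.
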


Let us point out that, for all most studied entanglement criteria, such sharp transition phenomenon occurs, at a threshold environment dimension $s_d$ that scales as $d^2$. This is for instance the case for the realignment criterion \cite{AN12} or the extendibility
criterion \cite{Lan16}. On the other hand, the threshold for switching from separability to entanglement happens at a much higher environment dimension $s_d$, of order $d^3$ \cite{ASY14}. This illustrates the fact that, on high dimensional bipartite systems, all
entanglement criteria actually perform quite poorly, in the sense that a lot of entangled states are not detected.

Our goal in what follows is to identify, for any fixed $m\in\mathbb N$, what is the threshold dimension $s_d(m)$ at which the typical behavior of a random induced state $\rho$ on $\C^d\otimes\C^d$ switches from violating to satisfying the $p_{2m+1}$-PPT criterion. In order to do this, we need to estimate the value of $\Tr((\rho^\Gamma)^k)$ for all $1\leq k\leq 2m+1$ and understand whether the Hankel matrix $B_m$ associated to such moments is PSD. As explained in Remark \ref{rem:normalization}, one can equivalently do this with any positive multiple of $\rho$ instead of $\rho$, for instance with $W$ the Wishart matrix of parameters $(d^2,s)$ such that $\rho=W/\Tr(W)$. This is what we proceed to do in the next sections.

\section{Expectation and variance of moments of partially transposed Wishart matrices} \label{sec:average}

Our goal in this section is to compute the expectation and variance of moments of a partially transposed Wishart matrix $W\sim\mathcal W_{d^2,s}$, and estimate their asymptotic scaling (as $d,s\to\infty$ with $s$ of order $d^2$). In order to do this, we first need to recall some well-known facts about Gaussian variables and combinatorics of permutations.

\subsection{Wick's formula for the expectation of products of Gaussian variables} \hfill\vspace{0.1cm}

In what follows, we say that $Z$ is a complex Gaussian variable with mean $0$ and variance $1$, which we denote by $Z \sim \mathcal{N}_{\C}(0,1)$, if $Z=(X+iY)/\sqrt{2}$ with $X,Y$ independent real Gaussian variables with mean $0$ and variance $1$. Recall that for $Z \sim \mathcal{N}_{\C}(0,1)$, we have
$\E(Z)=0$, $\E(Z^2)=0$ and $\E(|Z|^2)=1$. Wick's theorem provides a formula for the expectation of products of Gaussian variables \cite{Zvon97}. For our purposes, we need the complex version involving pairs of variables and their conjugates, whose joint
expectation can be computed via pairings.

\begin{theorem}[Wick's formula for complex Gaussian variables]
\label{thm:wick_complex_general}
Let $\{ Z_t \}_{t=1}^k$ and $\{ Z_t' \}_{t=1}^k$ be two sets of complex centered jointly Gaussian random variables. Then the expectation of the product of these variables and their conjugates is given by the sum over all pairings between the set $\{Z_t\}_{t=1}^k$ and the set $\{\overbar{Z}_t'\}_{t=1}^k$. Concretely, we have
\[
\E \left[ Z_1 \dots Z_k \overbar{Z}_1' \dots \overbar{Z}_k' \right] = \sum_{\pi \in S_k} \prod_{t=1}^k \E \left[ Z_t \overbar{Z}_{\pi(t)}' \right]
\]
where $S_k$ is the symmetric group on $\{1, \dots, k\}$.
\end{theorem}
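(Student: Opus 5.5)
The plan is to prove the complex Wick formula by reduction to its real counterpart. The real counterpart says that for centered jointly Gaussian real variables $X_1,\dots,X_{2k}$, $\E[X_1\cdots X_{2k}]$ is the sum over all pair partitions of $\{1,\dots,2k\}$ of the products of the covariances of the paired variables. That real statement follows by differentiating the characteristic function $\E[e^{i\langle t,X\rangle}]=e^{-\frac12 t^T\Sigma t}$ once in each of the $2k$ variables $t_1,\dots,t_{2k}$, viewed as distinct variables, and evaluating at $t=0$. Expanding the exponential, only the term of order $k$ in $-\frac12 t^T\Sigma t$ contributes a monomial that is linear in each $t_j$. Its coefficient is exactly the sum over pairings, because the factor $\frac{1}{k!\,2^k}$ cancels the ordering of the pairs and of the two points within each pair.

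Apply this to the $4k$ real and imaginary parts of the $Z_t$ and $Z'_t$. By multilinearity, the real formula extends to any complex linear combinations of these parts. So, writing $Y_1,\dots,Y_{2k}$ for the list $Z_1,\dots,Z_k,\overbar{Z}_1',\dots,\overbar{Z}_k'$, we get
\[
\E[Y_1\cdots Y_{2k}]=\sum_{\text{pairings } p}\ \prod_{\{a,b\}\in p}\E[Y_aY_b],
\]
where $\E[Y_aY_b]$ denotes the bilinear (not sesquilinear) second moment.

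The remaining step is to discard the pairings that pair two $Z$'s or two $\overbar{Z}'$'s. Such pairings involve $\E[Z_tZ_u]$ or $\E[\overbar{Z}'_t\overbar{Z}'_u]$. These vanish under the circular-symmetry hypothesis implicit in the statement: the family is jointly circularly symmetric, i.e.\ invariant in law under multiplication by $e^{i\theta}$, equivalently all pseudo-covariances $\E[Z_tZ_u]$, $\E[Z'_tZ'_u]$ and $\E[Z_tZ'_u]$ are zero. This is the situation throughout the paper, where all variables are linear combinations of independent $\mathcal N_\C(0,1)$ entries, for which $\E(Z^2)=0$. The pseudo-covariance $\E[Z_tZ'_u]$ does not in fact enter the formula.

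What survives are the pairings matching each $Z_t$ with a unique $\overbar{Z}'_{\pi(t)}$, and these are in bijection with $\pi\in S_k$. This gives exactly the stated formula. As a consistency check, for $Z\sim\mathcal N_\C(0,1)$ one recovers $\E|Z|^{2k}=k!$.

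The only real obstacle is bookkeeping. First, make the hypothesis precise: without circular symmetry the formula is false, e.g.\ for $Z$ real Gaussian, $\E[Z^2\overbar Z^2]=3\ne 2$. Second, the combinatorial coefficient count in the real case needs care. Alternatively, one could give a direct proof via the complex characteristic-type generating function $\E[e^{\langle w,Z\rangle+\langle w',\overbar{Z}'\rangle}]$. With $\langle w,Z\rangle=\sum_t w_tZ_t$ (bilinear, complex $w_t$) and $\langle w',\overbar{Z}'\rangle=\sum_t w'_t\overbar{Z}'_t$, circular symmetry reduces it to $\exp\bigl(\sum_{t,u}w_tw'_u\E[Z_t\overbar{Z}'_u]\bigr)$ (after symmetrizing). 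Extracting the coefficient of $w_1\cdots w_k w'_1\cdots w'_k$ then gives the permanent-type sum over $S_k$ immediately.
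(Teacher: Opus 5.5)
Your argument is correct. The paper contains no proof of this statement: it recalls the formula from the literature (Zvonkin's survey) and applies it only to entries of a matrix with independent $\mathcal N_{\C}(0,1)$ entries. So your proposal supplies what the paper omits rather than competing with a proof given there. Your route is the standard one:
\begin{itemize}
\item the real Wick formula via the characteristic function, where the factor $1/(k!\,2^k)$ absorbs the orderings;
\item extension to complex linear combinations of the $4k$ real and imaginary parts by multilinearity of both sides;
\item discarding the non-bipartite pairings;
\item matching the $k!$ perfect matchings of $K_{k,k}$ with $S_k$.
\end{itemize}
Your generating-function variant is an equally valid shortcut that produces the permanent-type sum directly. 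It rests on $\E[e^{L}]=e^{\E[L^2]/2}$ for $L$ a complex combination of jointly Gaussian reals.

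You are also right that the statement as printed is missing a hypothesis. Take $Z_1=Z_2=Z'_1=Z'_2=Z$ with $Z$ a real standard Gaussian; this is a legitimate centered jointly Gaussian complex family. The left side is then $\E[Z^4]=3$, while the right side is $2\,(\E[Z^2])^2=2$. Some circular-symmetry assumption is therefore necessary. It is harmless for the paper, since every application involves $\mathcal N_{\C}(0,1)$ entries, for which $\E(Z^2)=0$.

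Your hypothesis can be weakened slightly. Suppose a pairing of the $k$ slots of type $Z$ and the $k$ slots of type $\overbar{Z}'$ has $m$ pairs of type $ZZ$, $m'$ pairs of type $\overbar{Z}'\overbar{Z}'$, and $r$ mixed pairs. Then $2m+r=k=2m'+r$, so every non-bipartite pairing contains at least one pair of each kind. It therefore suffices that $\E[Z_tZ_u]=0$ for all $t,u$, or alternatively that $\E[Z'_tZ'_u]=0$ for all $t,u$.

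For the case the paper actually uses, an even more elementary proof exists:
\begin{itemize}
\item group the factors by distinct matrix entries and factor the expectation using independence;
\item use $\E[Z^a\overbar{Z}^b]=\delta_{ab}\,a!$ for $Z\sim\mathcal N_{\C}(0,1)$, which follows from rotation invariance and $|Z|^2$ being exponential;
\item observe that this product counts exactly the permutations $\pi$ matching each occurrence of an entry to an occurrence of its conjugate.
\end{itemize}
Your general argument buys the formula for arbitrary circularly symmetric jointly Gaussian families, which the paper states but never needs.
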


In order to compute moments of $W=GG^*$, we need to be able to compute the expectation of quantities of the form $\prod_{t=1}^k Z_t \overbar{Z}_t'$, where $Z_t$ and $Z_t'$ are entries of $G$. By Theorem \ref{thm:wick_complex_general}, this expectation is non-zero
only if the variables can be perfectly paired. Specifically, for
independent $\mathcal{N}_{\C}(0,1)$ entries $G_{\alpha\beta}$, we have
\[
\E \left[
    \prod_{t=1}^k G_{\alpha_t, \beta_t}
    \overbar{G}_{\alpha'_t, \beta'_t}
\right]
= \sum_{\pi \in S_k} \prod_{t=1}^k
    \E \left[
        G_{\alpha_t, \beta_t}
        \overbar{G}_{\alpha'_{\pi(t)}, \beta'_{\pi(t)}}
    \right]
= \sum_{\pi \in S_k} \prod_{t=1}^k
    \delta_{\alpha_t, \alpha'_{\pi(t)}}
    \delta_{\beta_t, \beta'_{\pi(t)}} .
\]
where $S_k$ is the symmetric group on $\{1, \dots, k\}$ and $\delta_{a,b}$ is the Kronecker delta.

\subsection{Combinatorics of permutations and non-crossing partitions} \label{sec:combinatorics} \hfill\vspace{0.1cm}

As we have already seen a first glimpse of in Theorem \ref{thm:wick_complex_general}, permutations play a crucial role when studying the expectation of products of Gaussian random variables. We therefore gather here some background on the combinatorics of permutations that will be useful to us later on. We refer the reader to \cite[Lectures 9 and 23]{Nica_Speicher_2006} for more details on the definitions and results presented below.

Let $S_k$ denote the symmetric group on $\{1, \dots, k\}$ (i.e.~the set of permutations of $k$ elements). Given $\sigma
\in S_k$, we denote by
\begin{itemize}
    \item $\#(\sigma)$ the number of cycles in its disjoint cycle decomposition (including fixed points);
    \item $|\sigma|$ its length, i.e.~the minimal number $q \in \N$ such that $\sigma$ can be written as a product of $q$ transpositions.
\end{itemize}
These parameters satisfy the relation
    \begin{equation} \label{eq:length_cycles}
        \#(\sigma)+|\sigma| = k.
    \end{equation}

What is more, we can define a distance $\mathrm d$ on $S_k$ by, for all $\sigma_1,\sigma_2\in S_k$,
\[
\mathrm d(\sigma_1,\sigma_2) = \left|\sigma_1^{-1}\sigma_2\right|.
\]
By the triangle inequality for $\mathrm d$ we thus have that, for any
$\sigma_1,\sigma_2,\sigma_3\in S_k$,
\[
\left|\sigma_1^{-1}\sigma_2\right| + \left|\sigma_2^{-1}\sigma_3\right| \geq \left|\sigma_1^{-1}\sigma_3\right|, \]
or equivalently, using relation \eqref{eq:length_cycles},
\begin{equation} \label{eq:triangle-inequality}
\#\left(\sigma_1^{-1}\sigma_2\right) + \#\left(\sigma_2^{-1}\sigma_3\right) \leq k + \#\left(\sigma_1^{-1}\sigma_3\right).
\end{equation}
If there is equality in the above inequality, we say that $\sigma_2$ is on the geodesic between $\sigma_1$ and $\sigma_3$. What is more, for fixed $\sigma_1$ and $\sigma_3$, the function $\sigma\in S_k\mapsto \#(\sigma_1^{-1}\sigma) + \#(\sigma^{-1}\sigma_3)$ can only take values $k+\#(\sigma_1^{-1}\sigma_3)-2l$, for some $l\in\N$. This can be shown recursively by observing that multiplying $\sigma$ by a transposition either increases or decreases by $1$ each of the quantities $\#(\sigma_1^{-1}\sigma)$ and $\#(\sigma^{-1}\sigma_3)$.

A partition $\pi$ of the set $\{1, \dots, k\}$ is a collection $\{V_1, \ldots, V_r\}$ of non-empty, mutually disjoint subsets of $\{1, \dots, k\}$, called blocks, such that $V_1\cup\cdots\cup V_r = \{1, \dots, k\}$. Given $1\leq i,j\leq k$, we write $i \sim_\pi j$ if $i$ and $j$ belong to the same block of $\pi$. The number of blocks $r$ is denoted by $\#(\pi)$.

A partition $\pi$ is called non-crossing if there do not exist $1\leq i_1 < i_2 < j_1 < j_2 \leq k$ such that $i_1 \sim_\pi j_1$ and $i_2 \sim_\pi j_2$ but $i_1\not\sim_\pi i_2$. The set of all non-crossing partitions of $\{1, \dots, k\}$ is denoted by $NC(k)$. Its cardinal is the so-called $k^{\text{th}}$ Catalan number, which
we denote $\mathrm{Cat}(k)$ and whose value is
\[
\mathrm{Cat}(k) = \frac{1}{k+1}\binom{2k}{k}.
\]
We also mention that the set of non-crossing partitions of $\{1, \dots, k\}$ is in bijection with the set of non-crossing pairings (i.e.~partitions where each block has cardinal $2$) of $\{1, \dots, 2k\}$, which we denote by $NC_2(2k)$. We thus have
\[
\mathrm{Cat}(k) = |NC(k)| = |NC_2(2k)|.
\]

In what follows, we denote by $\gamma\in S_k$ the $k$-cycle permutation $\gamma=(k,k-1,\ldots,2,1)$. If we apply inequality \eqref{eq:triangle-inequality} to the particular case where $\sigma_1=\gamma$ and $\sigma_3=\mathrm{id}$, we get that, for all $\sigma\in S_k$,
\begin{equation} \label{eq:triangle-inequality-gamma-id}
\#(\gamma^{-1}\sigma) + \#(\sigma) \leq k + 1.
\end{equation}
It turns out that permutations that are on the geodesic between $\gamma$ and $\mathrm{id}$ (i.e.~for which the above inequality is an equality) are exactly those whose associated cycle partition is non-crossing, with the ordering inside each cycle being the same as the ordering inside the cycle of $\gamma$. Those are sometimes called by extension non-crossing permutations, and we make the slight abuse of denoting this subset of $S_k$ by $NC(k)$ as well.

\begin{remark} 
As we will see below, moment formulas involving Gaussian matrices, obtained by making use of Wick's formula, usually involve sums over the full set of permutations $S_k$. On the contrary, their asymptotic behavior is often captured by a restricted sum over the subset of non-crossing permutations $NC(k)$.
\end{remark}

\subsection{Expectation of the moments of partially transposed Wishart matrices} \hfill\vspace{0.1cm}

Let $W\sim\mathcal W_{d^2,s}$ be a Wishart matrix of parameters $(d^2,s)$, meaning that $W = GG^*\in M_{d^2}(\mathbb C)$, where $G\in M_{d^2,s}(\mathbb C)$ has independent $\mathcal{N}_{\C}(0,1)$ entries. We recall that the partial transposition of such Wishart matrix $W$, viewed as an element of $M_d(\C)\otimes M_d(\C)$, is defined as
\[ W^\Gamma := \left[\mathrm{id}_d \otimes\mathrm{T}_d\right](W). \]
Note that $W$ is PSD by construction while $W^\Gamma$ is in general not PSD. Given $k\in\mathbb N$, we now want to compute the quantity $\E[\Tr((W^\Gamma)^k)]$. We recall that we define $\gamma \in S_k$ as the $k$-cycle permutation $\gamma = (k, k-1, \dots, 2, 1)$, so that its inverse is $\gamma^{-1} = (1, 2, \dots,k-1, k)$.

\begin{proposition} \label{prop:first_moment}
Let $W \sim \mathcal W_{d^2, s}$. For any $k\in\mathbb N$, the average of the trace of $(W^\Gamma)^k$ is given by
\begin{equation} \label{eq:first_moment_formula}
\E \left[\Tr\left(\left(W^\Gamma\right)^k\right)\right] = \sum_{\pi \in S_k} d^{\#(\gamma^{-1}\pi) + \#(\pi\gamma)} s^{\#(\pi)}.
\end{equation}
\end{proposition}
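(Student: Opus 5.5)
The plan is to expand the trace in coordinates and apply Wick's formula (Theorem \ref{thm:wick_complex_general}), then identify the resulting sums of Kronecker deltas as powers of $d$ and $s$ indexed by cycle counts. We write the indices of $\C^{d^2}=\C^d\otimes\C^d$ as pairs $(a,b)$. The entries of $W=GG^*$ are
\[
W_{(a,b),(a',b')}=\sum_{c=1}^s G_{(a,b),c}\overbar{G}_{(a',b'),c}.
\]
Partial transposition on the second factor swaps $b$ and $b'$, so
\[
W^\Gamma_{(a,b),(a',b')}=\sum_{c}G_{(a,b'),c}\overbar{G}_{(a',b),c}.
\]
We then expand
\[
\Tr\left(\left(W^\Gamma\right)^k\right)=\sum W^\Gamma_{(a_1,b_1),(a_2,b_2)}W^\Gamma_{(a_2,b_2),(a_3,b_3)}\cdots W^\Gamma_{(a_k,b_k),(a_1,b_1)},
\]
with indices read cyclically via $\gamma^{-1}(t)=t+1$. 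The $t$-th factor contributes $G_{(a_t,b_{t+1}),c_t}\overbar{G}_{(a_{t+1},b_t),c_t}$.

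Taking expectations with Wick's formula gives a sum over $\pi\in S_k$ of products of deltas. Pairing the $t$-th $G$ with the $\pi(t)$-th $\overbar{G}$ forces three conditions:
\[
c_t=c_{\pi(t)},\qquad a_t=a_{\pi(t)+1}=a_{\gamma^{-1}\pi(t)},\qquad b_{t+1}=b_{\pi(t)}.
\]
The last condition can be rewritten as $b_{\gamma^{-1}(t)}=b_{\pi(t)}$, that is $b_u=b_{\pi\gamma(u)}$ after substituting $u=\gamma^{-1}(t)$. Summing freely over all indices, each delta family contributes the dimension raised to the number of cycles of the permutation governing it. This yields $s^{\#(\pi)}$ from the $c$'s, $d^{\#(\gamma^{-1}\pi)}$ from the $a$'s, and $d^{\#(\pi\gamma)}$ from the $b$'s. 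Summing over $\pi$ gives \eqref{eq:first_moment_formula}.

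The main obstacle is purely bookkeeping: getting the composition orders and inverse conventions right so that the exponents come out as exactly $\#(\gamma^{-1}\pi)$ and $\#(\pi\gamma)$, rather than, say, $\#(\pi\gamma^{-1})$. Since $\#$ is a class function, $\#(\gamma^{-1}\pi)=\#(\pi\gamma^{-1})$ by conjugation, so there is some slack. For the $b$'s, conjugating by $\gamma$ similarly gives $\#(\pi\gamma)=\#(\gamma\pi)$. I would check the final convention on $k=1$ and $k=2$ against direct computation:
\[
\E\left[\Tr W^\Gamma\right]=d^2s,\qquad \E\left[\Tr\left(\left(W^\Gamma\right)^2\right)\right]=\E\left[\Tr\left(W^2\right)\right]=d^4s+d^2s^2.
\]
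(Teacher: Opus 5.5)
Your proposal is correct and follows the paper's proof essentially step for step. You expand the trace with the partial-transpose swap of second-factor indices, apply Wick's formula, and read off $s^{\#(\pi)}$, $d^{\#(\gamma^{-1}\pi)}$ and $d^{\#(\pi\gamma)}$ from the three Kronecker-delta constraints, using the same substitution $u=\gamma^{-1}(t)$ for the second-factor indices. Your $k=1,2$ checks, $d^2s$ and $d^4s+d^2s^2$, agree with the formula.
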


\begin{proof}
We write $W = GG^*$, where $G \in M_{d^2, s}(\C)$ has independent $\mathcal{N}_{\C}(0,1)$ entries. The matrix elements of $W$ are $W_{uv} = \sum_{l=1}^s G_{ul} \overbar{G}_{vl}$, , $1\leq u,v\leq d^2$, where an index $u\in\{1, \dots, d^2\}$ corresponds to a pair of indices $(i,a)\in \{1, \dots, d\}\times\{1, \dots, d\}$. The partial transpose acts on indices as $(W^\Gamma)_{(i,a), (j,b)} = W_{(i,b),(j,a)}$.

Expanding the trace, we thus get
\begin{align*}
\Tr\left(\left(W^\Gamma\right)^k\right) & = \sum_{\substack{i_1, \dots, i_k =1 \\ a_1, \dots, a_k =1}}^d W^\Gamma_{(i_1,a_1), (i_2,a_2)} \cdots W^\Gamma_{(i_k,a_k), (i_1,a_1)} \\
& = \sum_{\substack{i_1, \dots, i_k =1 \\ a_1, \dots, a_k =1}}^d W_{(i_1, a_2), (i_2, a_1)} \cdots W_{(i_k, a_1), (i_1, a_k)} \\
& =
\sum_{\substack{i_1,\ldots,i_k=1 \\ a_1,\ldots,a_k=1}}^d
\sum_{l_1,\ldots,l_k=1}^s
\prod_{t=1}^k G_{(i_t, a_{t+1}), l_t} \overbar{G}_{(i_{t+1}, a_t), l_t},
\end{align*}
where indices $t$ are understood cyclically (i.e.~$k+1 \equiv 1$). Setting $\gamma = (k, k-1, \dots, 1)\in S_k$, we can rewrite this as
\[
\Tr\left(\left(W^\Gamma\right)^k\right) =
\sum_{\substack{i_1,\ldots,i_k=1 \\ a_1,\ldots,a_k=1}}^d
\sum_{l_1,\ldots,l_k=1}^s
\prod_{t=1}^k G_{(i_t, a_{\gamma^{-1}(t)}), l_t}
\overbar{G}_{(i_{\gamma^{-1}(t)}, a_t), l_t}.
\]

We now compute the expectation using Wick's formula for complex Gaussian variables, recalled in Theorem \ref{thm:wick_complex_general}, and the fact that the entries $G_{(i,a),l}$ are independent $\mathcal{N}_\C(0,1)$ variables. We get
\begin{align*}
    \E \left[ \prod_{t=1}^k G_{(i_t, a_{\gamma^{-1}(t)}), l_t} \overbar{G}_{(i_{\gamma^{-1}(t)}, a_t), l_t} \right] & = \sum_{\pi \in S_k} \prod_{t=1}^k \E \left[ G_{(i_t, a_{\gamma^{-1}(t)}), l_t}
    \overbar{G}_{(i_{\gamma^{-1}(\pi(t))}, a_{\pi(t)}), l_{\pi(t)}} \right]  \\
    & = \sum_{\pi \in S_k} \prod_{t=1}^k \delta_{i_t, i_{\gamma^{-1}(\pi(t))}} \delta_{a_{\gamma^{-1}(t)}, a_{\pi(t)}} \delta_{l_t, l_{\pi(t)}}.
\end{align*}
Substituting this back into the expectation of the trace and summing over all indices $i_t, a_t \in \{1, \dots, d\}$ and $l_t \in \{1, \dots, s\}$, we eventually obtain
\[
\E \left[\Tr\left(\left(W^\Gamma\right)^k\right)\right]
=
\sum_{\substack{i_1,\ldots,i_k=1 \\ a_1,\ldots,a_k=1}}^d
\sum_{l_1,\ldots,l_k=1}^s
\sum_{\pi \in S_k}
\prod_{t=1}^k
\delta_{i_t, i_{\gamma^{-1}(\pi(t))}}
\delta_{a_{\gamma^{-1}(t)}, a_{\pi(t)}}
\delta_{l_t, l_{\pi(t)}}.
\]

For a fixed permutation $\pi \in S_k$, we see that the sum involves counting the number of free indices determined by the Kronecker delta constraints:
\begin{itemize}
    \item $\prod_{t=1}^k \delta_{l_t, l_{\pi(t)}} = 1$ requires $(l_t)_{t=1}^k$ to be constant on the cycles of $\pi$. Summing over $l_t \in \{1, \dots, s\}$ gives a contribution $s^{\#(\pi)}$.
    \item $\prod_{t=1}^k \delta_{i_t, i_{\gamma^{-1}(\pi(t))}} = 1$ requires $(i_t)_{t=1}^k$ to be constant on the cycles of $\gamma^{-1}\pi$. Summing over $i_t \in \{1, \dots, d\}$ gives a contribution $d^{\#(\gamma^{-1}\pi)}$.
    \item $\prod_{t=1}^k \delta_{a_{\gamma^{-1}(t)}, a_{\pi(t)}} = 1$ requires $a_{\gamma^{-1}(t)} = a_{\pi(t)}$. Letting $u = \gamma^{-1}(t)$, the condition becomes $a_u = a_{\pi(\gamma(u))} = a_{(\pi\gamma)(u)}$. This requires $(a_u)_{u=1}^k$ to be constant on the cycles of $\pi\gamma$. Summing over $a_u \in \{1, \dots, d\}$ gives a contribution $d^{\#(\pi\gamma)}$.
\end{itemize}

Combining these factors obtained for each $\pi \in S_{k}$ yields 
\begin{equation*}
\E \left[\Tr\left(\left(W^\Gamma\right)^k\right)\right] = \sum_{\pi \in S_k} d^{\#(\gamma^{-1}\pi) + \#(\pi\gamma)} s^{\#(\pi)},
\end{equation*}
which is exactly the announced formula.
\end{proof}

\begin{theorem} \label{th:first_moment_asymptotic}
Fix $\lambda>0$ and let $W \sim \mathcal W_{d^2, \lambda d^2}$. For any $k\in\mathbb N$, the asymptotic value (as $d\to\infty$) of the average of the trace of $(W^\Gamma)^k$ is given by
\begin{equation} \label{eq:first_moment_asymptotic}
\E \left[\Tr\left(\left(W^\Gamma\right)^k\right)\right] \underset{d\to\infty}{=} \left(\sum_{l=0}^{\lfloor k/2\rfloor} \binom{k}{2l}\mathrm{Cat}(l)\lambda^{k-l}\right) d^{2k+2} + O\left(d^{2k}\right).
\end{equation}
\end{theorem}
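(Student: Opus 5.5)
Starting from the exact formula \eqref{eq:first_moment_formula} of Proposition \ref{prop:first_moment}, substitute $s=\lambda d^2$. Each $\pi\in S_k$ then contributes $\lambda^{\#(\pi)} d^{E(\pi)}$ with
\[
E(\pi):=\#(\gamma^{-1}\pi)+\#(\pi\gamma)+2\#(\pi).
\]
The plan has three steps: bound $E(\pi)\le 2k+2$, check that the possible exponents only drop in steps of $2$, and count the maximisers weighted by $\lambda^{\#(\pi)}$.

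\textbf{Upper bound.} Applying \eqref{eq:triangle-inequality-gamma-id} gives $\#(\gamma^{-1}\pi)+\#(\pi)\le k+1$. Replacing $\pi$ by $\pi^{-1}$ and using $\#(\pi\gamma)=\#(\gamma^{-1}\pi^{-1})$ gives $\#(\pi\gamma)+\#(\pi)\le k+1$ as well. Hence $E(\pi)\le 2k+2$.

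\textbf{Parity.} By the parity remark in Section \ref{sec:combinatorics}, each of the two sums above equals $k+1-2l$ for some $l\in\mathbb N$. So every non-maximal term is $O(d^{2k})$. There are only $k!$ terms, so their total is $O(d^{2k})$.

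\textbf{Maximisers.} Equality requires $\pi$ to lie on the geodesic between $\mathrm{id}$ and $\gamma$ and also on the geodesic between $\mathrm{id}$ and $\gamma^{-1}$. Thus $\pi$ must be non-crossing with respect to $\gamma$ and with respect to $\gamma^{-1}$. Reversing the cyclic order preserves the non-crossing property of the cycle partition, but the orientation of each cycle must agree with both $\gamma$ and $\gamma^{-1}$. This is only possible for cycles of length at most $2$, since a $2$-cycle is its own reverse.

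So the maximisers are exactly the non-crossing partial pairings: involutions whose $2$-cycles form a non-crossing pairing of their support, with fixed points elsewhere. Such permutations are non-crossing for both orientations, so equality indeed holds.

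\textbf{Counting.} Take $l$ transpositions. Choose the support of size $2l$ in $\binom{k}{2l}$ ways. Removing fixed points does not affect the non-crossing condition, so the pairing of the support can be any non-crossing pairing of $2l$ ordered points, giving $|NC_2(2l)|=\mathrm{Cat}(l)$ choices. Such a $\pi$ has $\#(\pi)=k-l$, so it contributes $\lambda^{k-l}d^{2k+2}$. Summing over $0\le l\le\lfloor k/2\rfloor$ gives the leading coefficient in \eqref{eq:first_moment_asymptotic}.

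\textbf{Main obstacle.} The delicate step is the characterisation of permutations that are simultaneously geodesic for $\gamma$ and for $\gamma^{-1}$. One must argue that a non-crossing cycle of length $\ge3$ has a cyclic order compatible with only one orientation. I would do this by noting that the restriction of $\gamma$ to the cycle's support induces a unique cyclic order, which differs from its reverse when the length is at least $3$.
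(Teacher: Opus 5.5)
Your proposal is correct and follows essentially the same route as the paper. Both use the two triangle inequalities to bound the exponent by $2k+2$, the parity remark to show non-maximal exponents are at most $2k$, the identification of maximisers as non-crossing permutations with cycles of length at most $2$, and the $\binom{k}{2l}\mathrm{Cat}(l)\lambda^{k-l}$ count; your extra justifications (the $\pi\mapsto\pi^{-1}$ trick for the second inequality and the orientation argument for cycles of length $\geq 3$) just make explicit steps the paper states without detail.
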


\begin{proof}
We start from the exact formula given by Proposition \ref{prop:first_moment}, applied to the case where $s=\lambda d^2$, which gives
    \[
    \E \left[\Tr\left(\left(W^\Gamma\right)^k\right)\right] = \sum_{\pi \in S_k} \lambda^{\#(\pi)}d^{\#(\gamma^{-1}\pi) + \#(\pi\gamma) + 2\#(\pi)}.
    \]
We know by inequality \eqref{eq:triangle-inequality-gamma-id} that, for all $\pi\in S_k$, $\#(\gamma^{-1}\pi) + \#(\pi) \leq k + 1$ and $\#(\pi\gamma) + \#(\pi) \leq k + 1$, with equality in the first, resp.~second, inequality if and only if $\pi$ is on the geodesic between $\gamma$, resp.~$\gamma^{-1}$, and $\mathrm{id}$. Both geodesics correspond to permutations whose associated cycle partition is non-crossing, with the ordering inside cycles given by that of either $\gamma$ or $\gamma^{-1}$. Hence, for all
$\pi\in S_k$, we have
\begin{equation} \label{eq:double-triangle-inequality'} 
\#(\gamma^{-1}\pi) + \#(\pi\gamma) + 2\#(\pi) \leq 2k+2, 
\end{equation}
with equality if and only if $\pi$ has a non-crossing cycle partition with cycles ordered both according to $\gamma$ and $\gamma^{-1}$. This is possible if and only if the cycles of $\pi$, in addition to forming a non-crossing partition, all have length $1$ or $2$.

Denoting by $NC_{\leq 2}(k)$ the set of such permutations, we thus have
    \begin{equation} \label{eq:first_moment_asymptotic'}
        \E \left[\Tr\left(\left(W^\Gamma\right)^k\right)\right] \underset{d\to\infty}{\sim} \left( \sum_{\pi \in NC_{\leq 2}(k)} \lambda^{\#(\pi)}\right) d^{2k+2}.
    \end{equation}
Now, a procedure to construct $\pi\in NC_{\leq 2}(k)$ can be described as follows:
    \begin{enumerate}
        \item Fix $l\in\{0,\ldots,\lfloor k/2\rfloor\}$.
        \item Choose the $2l$ elements of $k$ that belong to cycles of length $2$. There are $\binom{k}{2l}$ ways of doing this.
        \item Pair these $2l$ elements into a non-crossing pairing. There are $\mathrm{Cat}(l)$ ways of doing this.
    \end{enumerate}
Such $\pi$ then has $k-l$ cycles ($l$ cycles of length $2$ and $k-2l$ cycles of length $1$). We can therefore rewrite the coefficient in front of $d^{2k+2}$ in equation \eqref{eq:first_moment_asymptotic'} exactly as announced.

Let us now justify that the error term is indeed $O(d^{2k})$. If inequality \eqref{eq:double-triangle-inequality'} is strict, then either $\#(\gamma^{-1}\pi) + \#(\pi) < k+1$ or $\#(\pi\gamma) + \#(\pi) < k+1$. Supposing for instance that we are in the former case (the reasoning being exactly the same for the latter), this actually necessarily implies (as explained in Section \ref{sec:combinatorics}) that $\#(\gamma^{-1}\pi) + \#(\pi) \leq k-1$, and therefore
\[  \#(\gamma^{-1}\pi) + \#(\pi\gamma) + 2\#(\pi) \leq 2k, \]
which is precisely what we wanted to show.
\end{proof}

\subsection{Variance of the moments of partially transposed Wishart matrices} \hfill\vspace{0.1cm}

Given $k\in\mathbb N$, we now want to compute the quantity $\E [ (\Tr((W^\Gamma)^k))^2 ]$. We define two permutations in $S_{2k}$ that will play an important role later on:
\begin{itemize}
    \item $\gamma_1 = (k, k-1, \dots, 1)$ acting on $\{1, \dots, k\}$ as a $k$-cycle (and leaving $\{k+1, \dots, 2k\}$ unchanged);
    \item $\gamma_2 = (2k, 2k-1, \dots, k+1)$ acting on $\{k+1, \dots, 2k\}$ as a $k$-cycle (and leaving $\{1, \dots, k\}$ unchanged).
\end{itemize}
Let $\tilde\gamma = \gamma_1 \gamma_2$. Note that $\gamma_1$ and $\gamma_2$ commute and that $\tilde\gamma^{-1} = \gamma_1^{-1} \gamma_2^{-1}$.

\begin{proposition} \label{prop:second_moment}
Let $W \sim \mathcal W_{d^2, s}$. For any $k\in\mathbb N$, the average of the trace of $(W^\Gamma)^k$ squared is given by
\begin{equation} \label{eq:second_moment_formula}
\E \left[ \left(\Tr\left(\left(W^\Gamma\right)^k\right)\right)^2 \right] = \sum_{\pi \in S_{2k}} d^{\#(\tilde\gamma^{-1}\pi) + \#(\pi\tilde\gamma)} s^{\#(\pi)}.
\end{equation}
\end{proposition}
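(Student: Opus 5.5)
The plan is to follow the proof of Proposition \ref{prop:first_moment} almost verbatim, applied to a product of $2k$ Wishart entries instead of $k$. First we write the square of the trace as a single sum by doubling the index sets. The first copy of $\Tr((W^\Gamma)^k)$ uses indices $i_1,\ldots,i_k$, $a_1,\ldots,a_k$, $l_1,\ldots,l_k$. The second copy uses indices $i_{k+1},\ldots,i_{2k}$, $a_{k+1},\ldots,a_{2k}$, $l_{k+1},\ldots,l_{2k}$. The same expansion as before gives
\[
\left(\Tr\left(\left(W^\Gamma\right)^k\right)\right)^2=\sum_{i,a}\sum_{l}\prod_{t=1}^{2k} G_{(i_t,a_{\tilde\gamma^{-1}(t)}),l_t}\,\overbar{G}_{(i_{\tilde\gamma^{-1}(t)},a_t),l_t}.
\]
Within the first block, ``$t+1$'' is read cyclically in $\{1,\ldots,k\}$, which is exactly $\gamma_1^{-1}(t)$. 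Within the second block it is read cyclically in $\{k+1,\ldots,2k\}$, which is $\gamma_2^{-1}(t)$. Since $\gamma_1$ and $\gamma_2$ have disjoint supports, both are encoded at once by $\tilde\gamma^{-1}=\gamma_1^{-1}\gamma_2^{-1}$.

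Next we apply Wick's formula (Theorem \ref{thm:wick_complex_general}) to this product of $2k$ entries of $G$ and $2k$ conjugate entries. It produces a sum over $\pi\in S_{2k}$ of
\[
\prod_{t=1}^{2k}\delta_{i_t,i_{\tilde\gamma^{-1}(\pi(t))}}\,\delta_{a_{\tilde\gamma^{-1}(t)},a_{\pi(t)}}\,\delta_{l_t,l_{\pi(t)}}.
\]
The essential point is that $\pi$ ranges over all of $S_{2k}$, not over $S_k\times S_k$. This is because Wick pairings may match variables coming from the two different trace factors, and this is precisely what produces the covariance.

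Finally, for each fixed $\pi$ we count the free indices exactly as in the earlier bullet-point argument:
\begin{itemize}
\item the $l$-constraints force $l$ to be constant on the cycles of $\pi$, which gives a factor $s^{\#(\pi)}$;
\item the $i$-constraints force $i$ to be constant on the cycles of $\tilde\gamma^{-1}\pi$, which gives a factor $d^{\#(\tilde\gamma^{-1}\pi)}$;
\item for the $a$-constraints, the substitution $u=\tilde\gamma^{-1}(t)$ turns the condition into $a_u=a_{\pi\tilde\gamma(u)}$, which gives a factor $d^{\#(\pi\tilde\gamma)}$.
\end{itemize}
Summing these contributions over $\pi$ gives \eqref{eq:second_moment_formula}.

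There is no real obstacle here. The only step that needs care is checking the index bookkeeping: the cyclic successor within each block must correspond to $\tilde\gamma^{-1}$, so that the same algebraic manipulations as in the single-trace case go through unchanged.
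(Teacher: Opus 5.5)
Your proposal is correct and follows the paper's proof essentially step by step. You double the index set, encode both within-block cyclic successors by $\tilde\gamma^{-1}=\gamma_1^{-1}\gamma_2^{-1}$, apply Wick's formula over all of $S_{2k}$, and count free indices to get $s^{\#(\pi)}$, $d^{\#(\tilde\gamma^{-1}\pi)}$ and $d^{\#(\pi\tilde\gamma)}$.
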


\begin{proof}
As in the proof of Proposition \ref{prop:first_moment}, we write $W = GG^*$, where $G\in M_{d^2, s}(\C)$ has independent $\mathcal{N}_{\C}(0,1)$ entries. First, we express the squared trace as a single sum by duplicating indices. We let the indices for the first trace factor run from $t=1$ to $t=k$ and for the second trace factor from $t=k+1$ to $t=2k$. We thus get
\begin{align*}
    \left(\Tr\left(\left(W^\Gamma\right)^k\right)\right)^2
    = & \left(
        \sum_{\substack{i_1,\ldots,i_k=1 \\ a_1,\ldots,a_k=1}}^d
        \sum_{l_1,\ldots,l_k=1}^s
        \prod_{t=1}^k
        G_{(i_t, a_{t+1}), l_t}
        \overbar{G}_{(i_{t+1}, a_t), l_t}
    \right) \times \\
& \left(
        \sum_{\substack{i_{k+1},\ldots,i_{2k}=1 \\
                        a_{k+1},\ldots,a_{2k}=1}}^d
        \sum_{l_{k+1},\ldots,l_{2k}=1}^s
        \prod_{t=k+1}^{2k}
        G_{(i_t, a_{t+1}), l_t}
        \overbar{G}_{(i_{t+1}, a_t), l_t}
    \right),
\end{align*}
where the indices $t$ in $a_{t+1}, i_{t+1}$ are interpreted cyclically within $\{1, \dots, k\}$ for the first product and within $\{k+1, \dots, 2k\}$ for the second product. Setting $\tilde\gamma = (k, k-1, \dots, 1)(2k, 2k-1, \dots, k+1)\in S_{2k}$, we can combine the two products and rewrite this as
\[
\left(\Tr\left(\left(W^\Gamma\right)^k\right)\right)^2
=
\sum_{\substack{i_1,\ldots,i_{2k}=1 \\
                a_1,\ldots,a_{2k}=1}}^d
\sum_{l_1,\ldots,l_{2k}=1}^s
\prod_{t=1}^{2k}
G_{(i_t, a_{\tilde\gamma^{-1}(t)}), l_t}
\overbar{G}_{(i_{\tilde\gamma^{-1}(t)}, a_t), l_t}.
\]

We now apply Wick's formula, recalled in Theorem \ref{thm:wick_complex_general}, to compute the expectation of this product of $4k$ complex Gaussian variables ($2k$ variables $G_{(i,a),l}$ and their $2k$ conjugates $\overbar{G}_{(i,a),l}$). Using the fact that the entries $G_{(i,a),l}$ are independent $\mathcal{N}_\C(0,1)$ variables, we get
\begin{align*}
\E \left[
    \prod_{t=1}^{2k}
    G_{(i_t, a_{\tilde\gamma^{-1}(t)}), l_t}
    \overbar{G}_{(i_{\tilde\gamma^{-1}(t)}, a_t), l_t}
\right]
&= \sum_{\pi \in S_{2k}} \prod_{t=1}^{2k}
    \E \left[
        G_{(i_t, a_{\tilde\gamma^{-1}(t)}), l_t}
        \overbar{G}_{(i_{\tilde\gamma^{-1}(\pi(t))}, a_{\pi(t)}), l_{\pi(t)}}
    \right] \\
&= \sum_{\pi \in S_{2k}} \prod_{t=1}^{2k}
    \delta_{(i_t, a_{\tilde\gamma^{-1}(t)}),
    (i_{\tilde\gamma^{-1}(\pi(t))}, a_{\pi(t)})}
    \delta_{l_t, l_{\pi(t)}} \\
&= \sum_{\pi \in S_{2k}} \prod_{t=1}^{2k}
    \delta_{i_t, i_{\tilde\gamma^{-1}(\pi(t))}}
    \delta_{a_{\tilde\gamma^{-1}(t)}, a_{\pi(t)}}
    \delta_{l_t, l_{\pi(t)}}.
\end{align*}

Substituting this back into the expectation of the squared trace, we eventually obtain
\[
\E \left[
    \left(\Tr\left(\left(W^\Gamma\right)^k\right)\right)^2
\right]
=
\sum_{\substack{i_1,\ldots,i_{2k}=1 \\ a_1,\ldots,a_{2k}=1}}^d
\sum_{l_1,\ldots,l_{2k}=1}^s
\sum_{\pi \in S_{2k}} \prod_{t=1}^{2k}
\delta_{i_t, i_{\tilde\gamma^{-1}(\pi(t))}}
\delta_{a_{\tilde\gamma^{-1}(t)}, a_{\pi(t)}}
\delta_{l_t, l_{\pi(t)}} .
\]

For a fixed permutation $\pi \in S_{2k}$, we perform the summation over the indices $i_t, a_t, l_t$. The product of Kronecker deltas is $1$ only if the indices satisfy certain equality constraints defined by the permutations. We count the number of free indices:
\begin{itemize}
    \item The constraint $\prod_{t=1}^{2k} \delta_{l_t, l_{\pi(t)}} = 1$ requires $l_t = l_{\pi(t)}$ for all $t$. Thus $(l_t)_{t=1}^{2k}$ must be constant along the cycles of $\pi$. Summing over $l_t \in \{1, \dots, s\}$ yields a factor $s^{\#(\pi)}$.
    \item The constraint
    $\prod_{t=1}^{2k} \delta_{i_t, i_{\tilde\gamma^{-1}(\pi(t))}} = 1$ requires $i_t = i_{\tilde\gamma^{-1}(\pi(t))}$ for all $t$. Thus $(i_t)_{t=1}^{2k}$ must be constant along the cycles of $\tilde\gamma^{-1}\pi$, giving a factor $d^{\#(\tilde\gamma^{-1}\pi)}$.
    \item The constraint $\prod_{t=1}^{2k} \delta_{a_{\tilde\gamma^{-1}(t)}, a_{\pi(t)}} = 1$ requires $a_{\tilde\gamma^{-1}(t)} = a_{\pi(t)}$ for all $t$. Equivalently, with $u=\tilde\gamma^{-1}(t)$, it requires $a_u=a_{\pi(\tilde\gamma(u))}=a_{(\pi\tilde\gamma)(u)}$ for all $u$. Thus $(a_u)_{u=1}^{2k}$ must be constant along the cycles of $\pi\tilde\gamma$, giving a factor $d^{\#(\pi\tilde\gamma)}$.
\end{itemize}

Combining these factors obtained for each $\pi \in S_{2k}$ yields
\begin{equation*}
\E \left[ \left(\Tr\left(\left(W^\Gamma\right)^k\right)\right)^2 \right]  = \sum_{\pi \in S_{2k}} d^{\#(\tilde\gamma^{-1}\pi) + \#(\pi\tilde\gamma)} s^{\#(\pi)},
\end{equation*}
which is exactly the announced formula.
\end{proof}

\begin{theorem} \label{th:second_moment_asymptotic}
Fix $\lambda>0$ and let $W \sim \mathcal W_{d^2, \lambda d^2}$. For any $k\in\mathbb N$, the asymptotic value (as $d\to\infty$) of the average of the trace of $(W^\Gamma)^k$ squared is given by
\begin{equation} \label{eq:second_moment_asymptotic}
\E \left[ \left(\Tr\left(\left(W^\Gamma\right)^k\right)\right)^2 \right]
\underset{d\to\infty}{=}
\left(
    \sum_{l=0}^{\lfloor k/2\rfloor}
    \binom{k}{2l}\mathrm{Cat}(l)\lambda^{k-l}
\right)^2 d^{4k+4}
+ O\left(d^{4k+2}\right).
\end{equation}
\end{theorem}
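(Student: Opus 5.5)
We start from the exact formula of Proposition \ref{prop:second_moment} with $s=\lambda d^2$, which reads
\[
\E \left[ \left(\Tr\left(\left(W^\Gamma\right)^k\right)\right)^2 \right] = \sum_{\pi \in S_{2k}} \lambda^{\#(\pi)} d^{\#(\tilde\gamma^{-1}\pi) + \#(\pi\tilde\gamma) + 2\#(\pi)} .
\]
The task is to identify the maximal exponent of $d$ and the permutations achieving it. Apply inequality \eqref{eq:triangle-inequality} in $S_{2k}$ with $\sigma_1=\tilde\gamma$, $\sigma_3=\mathrm{id}$, and separately with $\sigma_1=\tilde\gamma^{-1}$, $\sigma_3=\mathrm{id}$. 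Since $\#(\tilde\gamma)=\#(\tilde\gamma^{-1})=2$, this gives
\[
\#(\tilde\gamma^{-1}\pi)+\#(\pi)\leq 2k+2 \quad\text{and}\quad \#(\pi\tilde\gamma)+\#(\pi)\leq 2k+2 .
\]
Hence the exponent is at most $4k+4$.

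Next we characterize equality. Equality in the first bound means $\pi$ lies on a geodesic from $\mathrm{id}$ to $\tilde\gamma$. Such permutations never connect the two cycles of $\tilde\gamma$: a geodesic from $\mathrm{id}$ to $\tilde\gamma$ only uses transpositions cutting cycles of $\tilde\gamma$, so $\pi\leq\tilde\gamma$ in the absolute order. Such $\pi$ therefore split as $\pi=\pi_1\pi_2$, with $\pi_1$ non-crossing with respect to $\gamma_1$ on $\{1,\dots,k\}$ and $\pi_2$ non-crossing with respect to $\gamma_2$ on $\{k+1,\dots,2k\}$. Imposing equality in the second bound as well forces each $\pi_i$ to be non-crossing with respect to both $\gamma_i$ and $\gamma_i^{-1}$, i.e.\ $\pi_i\in NC_{\leq2}(k)$, exactly as in the proof of Theorem \ref{th:first_moment_asymptotic}.

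All quantities factorize over the two blocks: $\#(\pi)=\#(\pi_1)+\#(\pi_2)$, and similarly for $\#(\tilde\gamma^{-1}\pi)$ and $\#(\pi\tilde\gamma)$. The leading coefficient is therefore
\[
\Big(\sum_{\pi_1\in NC_{\leq2}(k)}\lambda^{\#(\pi_1)}\Big)^2 = \Big(\sum_{l=0}^{\lfloor k/2\rfloor}\binom{k}{2l}\mathrm{Cat}(l)\lambda^{k-l}\Big)^2 ,
\]
reusing the counting already done in Theorem \ref{th:first_moment_asymptotic}.

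For the error term we use the parity remark from Section \ref{sec:combinatorics}: for fixed $\sigma_1,\sigma_3$, the quantity $\#(\sigma_1^{-1}\pi)+\#(\pi^{-1}\sigma_3)$ only takes values $2k+\#(\sigma_1^{-1}\sigma_3)-2l$ with $l\in\N$. So if either of the two inequalities is strict, it drops by at least $2$, and the total exponent is at most $4k+2$. Since $S_{2k}$ is finite with $k$ fixed, these terms contribute $O(d^{4k+2})$.

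The main obstacle is the claim that geodesic permutations between $\mathrm{id}$ and $\tilde\gamma$ cannot mix the two cycles. I would justify it directly. If $\pi$ has a cycle meeting both $\{1,\dots,k\}$ and $\{k+1,\dots,2k\}$, then $\pi$ and $\tilde\gamma$ generate a transitive subgroup. The genus-type inequality $\#(\pi)+\#(\tilde\gamma^{-1}\pi)+\#(\tilde\gamma)\leq 2k+2-2g$ for a connected map (with $g\geq0$) then yields $\#(\pi)+\#(\tilde\gamma^{-1}\pi)\leq 2k$, strictly less than $2k+2$. Alternatively, one can argue that each step along a geodesic from $\mathrm{id}$ must split or merge cycles compatibly with $\tilde\gamma$, so only transpositions inside a cycle of $\tilde\gamma$ appear.
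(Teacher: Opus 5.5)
Your proposal is correct and follows the paper's proof. It uses the same exact formula, the same two triangle inequalities in $S_{2k}$, the same identification of the maximizers as $\pi=\pi_1\pi_2$ with $\pi_1,\pi_2\in NC_{\leq 2}(k)$, and parity for the error term. It also improves on the paper in two small ways: you justify, via the genus inequality for the transitive group $\langle \pi,\tilde\gamma\rangle$, the claim that geodesic permutations between $\mathrm{id}$ and $\tilde\gamma$ cannot connect the two cycles (which the paper only asserts), and you get the $O(d^{4k+2})$ bound in one parity step rather than through the paper's split into non-factorizing permutations (shown there to contribute only $O(d^{4k})$) and factorizing ones.
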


\begin{proof}
The argument is very similar to the one developed in the proof of Theorem \ref{th:first_moment_asymptotic}. We might therefore skip a few details here and reuse some of the notations introduced there. We start from the exact formula given by Proposition \ref{prop:second_moment}, applied to the case where $s=\lambda d^2$, which gives
\[
\E \left[ \left(\Tr\left(\left(W^\Gamma\right)^k\right)\right)^2 \right]
=
\sum_{\pi \in S_{2k}}
\lambda^{\#(\pi)}
d^{\#(\tilde\gamma^{-1}\pi) + \#(\pi\tilde\gamma) + 2\#(\pi)} .
\]
We know by inequality \eqref{eq:triangle-inequality} that, since $\#(\tilde\gamma)=2$, for all $\pi\in S_{2k}$, $\#(\tilde\gamma^{-1}\pi) + \#(\pi) \leq 2k+2$ and $\#(\pi\tilde\gamma) + \#(\pi) \leq 2k + 2$, with equality in the first, resp.~second, inequality if and only if $\pi$ is on the geodesic between $\tilde\gamma$, resp.~$\tilde\gamma^{-1}$, and $\mathrm{id}$. Both geodesics correspond to permutations that factorize as $\pi_1\pi_2$ with $\pi_1$, resp.~$\pi_2$, acting non-trivially only on $\{1,\ldots,k\}$, resp.~$\{k+1,\ldots,2k\}$, and whose cycle partitions are both non-crossing (with the ordering inside cycles given by that of either $\gamma_1$, resp.~$\gamma_2$, or $\gamma^{-1}_1$, resp.~$\gamma^{-1}_2$). Hence, for all $\pi\in S_{2k}$, we have
    \begin{equation} \label{eq:double-triangle-inequality}
    \#(\tilde\gamma^{-1}\pi) + \#(\pi\tilde\gamma) + 2\#(\pi) \leq 4k+4,
    \end{equation}
with equality if and only if $\pi=\pi_1\pi_2$ with $\pi_1\in NC_{\leq
2}(\{1,\ldots,k\})$ and $\pi_2\in NC_{\leq 2}(\{k+1,\ldots,2k\})$.

Identifying $NC_{\leq 2}(\{1,\ldots,k\})$ and $NC_{\leq 2}(\{k+1,\ldots,2k\})$ with $NC_{\leq 2}(k)$ and observing that for $\pi=\pi_1\pi_2$, $\#(\pi)=\#(\pi_1)+\#(\pi_2)$, we thus have
\begin{equation} \label{eq:second_moment_asymptotic'}
\E \left[ \left(\Tr\left(\left(W^\Gamma\right)^k\right)\right)^2 \right]
\underset{d\to\infty}{\sim}
\left(
    \sum_{\pi\in NC_{\leq 2}(k)} \lambda^{\#(\pi)}
\right)^2 d^{4k+4}.
\end{equation}
So we know by the proof of Theorem \ref{th:first_moment_asymptotic} that the coefficient in front of $d^{4k+4}$ in equation \eqref{eq:second_moment_asymptotic'} takes exactly the announced form.

Let us now justify that the error term is indeed $O(d^{4k+2})$. There are two ways in which inequality \eqref{eq:double-triangle-inequality} can be strict:
    \begin{enumerate}
        \item because $\pi$ does not factorize as $\pi_1\pi_2$ with $\pi_1$ permutation of $\{1,\ldots,k\}$ and $\pi_2$ permutation of $\{k+1,\ldots,2k\}$;
        \item because $\pi$ does factorize as such $\pi_1\pi_2$ but either $\pi_1\notin NC_{\leq 2}(\{1,\ldots,k\})$ or $\pi_2\notin NC_{\leq 2}(\{k+1,\ldots,2k\})$.
    \end{enumerate}
In the first case, we have that $\pi$ is neither on the geodesic between $\tilde\gamma$ and $\mathrm{id}$ nor on the geodesic between $\tilde\gamma^{-1}$ and $\mathrm{id}$. Hence, both $\#(\tilde\gamma^{-1}\pi) + \#(\pi) < 2k+2$ and $\#(\pi\tilde\gamma) + \#(\pi) < 2k+2$, which actually necessarily implies (as explained in Section \ref{sec:combinatorics}) that $\#(\tilde\gamma^{-1}\pi) + \#(\pi) \leq 2k$ and $\#(\pi\tilde\gamma) + \#(\pi) \leq 2k$, so that
    \[
    \#(\tilde\gamma^{-1}\pi) + \#(\pi\tilde\gamma) + 2\#(\pi) \leq 4k.
    \]
In the second case, we can write
    \[
    \#(\tilde\gamma^{-1}\pi) + \#(\pi\tilde\gamma) + 2\#(\pi) = \#(\gamma_1^{-1}\pi_1) + \#(\pi_1\gamma_1) + 2\#(\pi_1) + \#(\gamma_2^{-1}\pi_2) + \#(\pi_2\gamma_2) + 2\#(\pi_2).
    \]
Supposing for instance that $\pi_1\notin NC_{\leq 2}(\{1,\ldots,k\})$ (the reasoning being exactly the same for $\pi_2\notin NC_{\leq 2}(\{k+1,\ldots,2k\})$), this means that either $\#(\gamma_1^{-1}\pi_1) + \#(\pi_1) < k+1$ or $\#(\gamma_1\pi_1) + \#(\pi_1) < k+1$. And supposing for instance that we are in the former case (the reasoning being exactly the same for the latter), this actually necessarily implies (as explained in Section \ref{sec:combinatorics}) that $\#(\gamma_1^{-1}\pi_1) + \#(\pi_1) \leq k-1$. Hence,
    \[
    \#(\gamma_1^{-1}\pi_1) + \#(\pi_1\gamma_1) + 2\#(\pi_1) \leq 2k,
    \]
and therefore
    \[
    \#(\gamma_1^{-1}\pi_1) + \#(\pi_1\gamma_1) + 2\#(\pi_1) + \#(\gamma_2^{-1}\pi_2) + \#(\pi_2\gamma_2) + 2\#(\pi_2) \leq 4k+2.
    \]
We have thus shown that, if inequality \eqref{eq:double-triangle-inequality} is strict, then it has to be at most $4k+2$, as wanted.
\end{proof}

\section{Concentration of moments and Hankel determinants of partially transposed Wishart matrices}
\label{sec:concentration}

In this section, we use standard Gaussian concentration results to establish that the moments of partially transposed Wishart matrices concentrate around their expected value, and thus Hankel determinants formed from them as well.

\subsection{Concentration for polynomials in Gaussian variables} \hfill\vspace{0.1cm}

A fundamental result from Gaussian analysis, derived from the hypercontractivity property of the Ornstein--Uhlenbeck semigroup, provides an upper bound on the probability that polynomials in Gaussian variables deviate from their average (see e.g.~\cite[Section 5.2.4.3]{AubrunSzarek2017} for a full description of the proof ingredients and \cite[Corollary 5.49]{AubrunSzarek2017} for the statement of the final result).

\begin{theorem}[Deviation inequality for polynomials in Gaussian variables] \label{thm:poly_tail_bound}
Let $X = Q(Z_1, \dots, Z_n)$ be a polynomial of total degree $K$ in $n$ independent Gaussian variables with mean $0$ and variance $1$. Then, for any $t > 0$,
\[
\Prob\left( \left|X-\E(X)\right| \ge t \sqrt{\Var(X)} \right) \le \exp \left( - \inf_{q \ge 2} \left( q \log t - \frac{Kq}{2} \log(q-1) \right) \right).
\]
Optimizing $q$ leads to the simpler, more commonly used, deviation inequality: for any $t \ge (2e)^{K/2}$,
\begin{equation} \label{eq:poly_tail_bound_final}
\Prob\left(  \left|X-\E(X)\right| \ge t \sqrt{\Var(X)}\right) \le \exp\left(-\frac{Kt^{2/K}}{2e}\right).
\end{equation}
\end{theorem}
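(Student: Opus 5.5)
The proof rests on moment bounds coming from hypercontractivity, combined with Markov's inequality; the final form then follows from a choice of $q$. Write $Y=X-\E(X)$; it is again a polynomial of degree at most $K$ in the Gaussian variables $Z_1,\dots,Z_n$. The key input is the hypercontractivity of the Ornstein--Uhlenbeck semigroup $(P_\tau)_{\tau\ge0}$: for $q\ge 2$ and $e^{2\tau}=q-1$, $\|P_\tau f\|_q\le\|f\|_2$. Decompose $Y$ in Wiener chaos as $Y=\sum_{j=1}^K Y_j$, where $Y_j$ lies in the $j$-th chaos. Since $P_\tau Y_j=e^{-j\tau}Y_j$, we may apply hypercontractivity to $f=\sum_j e^{j\tau}Y_j$ and use orthogonality of the chaoses. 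This gives
\[
\|Y\|_q\le \Big(\sum_{j}e^{2j\tau}\E Y_j^2\Big)^{1/2}\le (q-1)^{K/2}\|Y\|_2=(q-1)^{K/2}\sqrt{\Var(X)}.
\]
This is the standard step that I expect to be the main technical point. Because the paper assumes it (citing \cite{AubrunSzarek2017}), I would quote it rather than reprove it. The only care needed is that complex Gaussians can be viewed as pairs of real ones, which does not change the degree.

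Next, Markov's inequality applied to $|Y|^q$ yields
\[
\Prob\big(|Y|\ge t\sqrt{\Var X}\big)\le \frac{\E|Y|^q}{t^q\Var(X)^{q/2}}\le \frac{(q-1)^{Kq/2}}{t^q}=\exp\Big(-\big(q\log t-\tfrac{Kq}{2}\log(q-1)\big)\Big).
\]
This holds for every $q\ge2$, so we may take the infimum of the right-hand side over $q$, i.e.\ the supremum of the exponent. That is the first statement; the "$\inf$" in the statement should be read as optimizing the bound, i.e.\ taking the best $q$.

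For the simplified form \eqref{eq:poly_tail_bound_final}, rather than optimizing exactly, I would plug in the choice $q=t^{2/K}/e$. The condition $t\ge(2e)^{K/2}$ is exactly $q\ge2$. Then $\log(q-1)\le\log q=\tfrac{2}{K}\log t-1$, so the exponent satisfies
\[
q\log t-\tfrac{Kq}{2}\log(q-1)\ge q\log t-q\log t+\tfrac{Kq}{2}=\tfrac{K}{2}q=\frac{K t^{2/K}}{2e}.
\]
This gives the claimed bound. Once the hypercontractive moment comparison is granted, everything else is elementary.
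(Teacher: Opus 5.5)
Your argument is correct and follows the standard route behind the result the paper invokes; the paper itself gives no proof, only the citation to Aubrun--Szarek. That route is the hypercontractive moment comparison $\|Y\|_q\le (q-1)^{K/2}\|Y\|_2$ via the chaos decomposition, then Markov's inequality, then the choice $q=t^{2/K}/e$, which is admissible exactly when $t\ge(2e)^{K/2}$. You are also right that the $\inf_{q\ge 2}$ in the exponent must be read as a supremum: taken literally the infimum is $-\infty$ and the bound is vacuous. Both your version and the statement tacitly require $\Var(X)>0$.
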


\begin{remark}
The above result is usually stated for polynomials in real Gaussian variables, but it holds for polynomials in complex Gaussian variables as well, by identifying $\C^n$ with $\R^{2n}$ (using real and imaginary parts, which are independent and each distributed according to $\mathcal{N}_\R(0, 1/2)$). The total degree $K$ then refers to the degree in the underlying $2n$ independent real variables. Now, if $Z = (X + i Y)/\sqrt{2}$ with $X,Y \sim \mathcal{N}_\R(0,1)$, then $Z, \overbar{Z}$ relate linearly to $X, Y$. So a polynomial of degree $K$ in $Z_j, \overbar{Z}_j$, $1\leq j\leq n$, is a polynomial of degree $K$ in $X_j, Y_j$, $1\leq j\leq n$.
\end{remark}

\subsection{Application to moments of partially transposed Wishart matrices} \hfill\vspace{0.1cm}

Let $W\sim\mathcal W_{d^2,s}$ be a Wishart matrix of parameters $(d^2,s)$, meaning that $W = GG^*\in M_{d^2}(\mathbb C)$, where $G\in M_{d^2,s}(\mathbb C)$ has independent $\mathcal{N}_{\C}(0,1)$ entries. For each $k\in\mathbb N$, we consider the random variable
\[ p_k(W) \coloneqq \Tr\left(\left(W^\Gamma\right)^k\right). \]
Each entry $W_{uv} = \sum_{l=1}^s G_{ul} \overbar{G}_{vl}$, $1\leq u,v\leq d^2$, of $W$ is a polynomial of degree $2$ in the entries of $G$ and $\overbar{G}$. Since $p_k(W) = \Tr((W^\Gamma)^k)$ is formed by summing products of $k$ entries of $W$ (with permuted indices due to $\Gamma$), $p_k(W)$ is a polynomial of degree $2k$ in the entries of $G$ and $\overbar{G}$.

We can now apply Theorem \ref{thm:poly_tail_bound} to $p_k(W)$. We set 
\[ E_k := \E\left(p_k(W)\right) \quad \text{and} \quad \sigma_k^2 := \Var\left(p_k(W)\right). \] 
We consider the regime where $d \to \infty$ and $s\sim \lambda d^2$ for some constant $\lambda>0$. In this regime, the asymptotic value of $E_k$ is given in Theorem \ref{th:first_moment_asymptotic}, while $\sigma_k^2 = \E(p_k(W)^2) - E_k^2$ is bounded by combining Theorems \ref{th:first_moment_asymptotic} and \ref{th:second_moment_asymptotic}. Concretely, setting
\begin{equation} \label{eq:def-M_k}
M_k(\lambda) := \sum_{l=0}^{\lfloor k/2\rfloor} \binom{k}{2l}\mathrm{Cat}(l)\lambda^{k-l},
\end{equation}
we have
\begin{equation} \label{eq:E_k-sigma_k}
\left|E_k - M_k(\lambda) d^{2k+2}\right| \leq C'(\lambda,k)d^{2k} \quad \text{and} \quad \sigma_k^2 \leq C(\lambda,k)^2 d^{4k+2}, 
\end{equation}
for some constants $C'(\lambda,k),C(\lambda,k)>0$ depending only on $\lambda$ and $k$.

By equation \eqref{eq:poly_tail_bound_final} in Theorem \ref{thm:poly_tail_bound}, applied to the polynomial $X = p_k(W)$ of degree $2k$, with $\E(X)=E_k$ and $\sqrt{\Var(X)} = \sigma_k$, we get that, for any $t \ge (2e)^{k}$,
\[ \Prob(|p_k(W) - E_k| > t \sigma_k) \le e^{-kt^{1/k}/e}. \]
In order to get an estimate on the probability of a relative deviation of $p_k(W)$ from its average, we now apply the above statement to 
\[ t = \frac{\epsilon M_k(\lambda) d^{2k+2}}{\sigma_k}, \]
for some fixed $\epsilon>0$. We observe that such $t$ satisfies 
\[ t \geq \frac{\epsilon M_k(\lambda) d}{C(\lambda,k)}, \]
and hence we indeed have $t\geq(2e)^k$ as soon as $\epsilon\geq C''(\lambda,k)/d$, where $C''(\lambda,k)=C(\lambda,k)(2e)^k/M_k(\lambda)$. We therefore get that, for any $\epsilon\geq C''(\lambda,k)/d$,
\begin{equation} \label{eq:pk_concentration}
\Prob\left(\left|p_k(W) - E_k\right| > \epsilon M_k(\lambda) d^{2k+2}\right)  \leq e^{-c(\lambda,k)\epsilon^{1/k}d^{1/k}}, 
\end{equation}
where we have set
\[ c(\lambda,k)=\frac{k}{e}\left(\frac{M_k(\lambda)}{C(\lambda,k)}\right)^{1/k}. \]

We summarize the above discussion in the following theorem, where we use the notations that have just been introduced.

\begin{theorem} \label{th:pk_concentration}
Fix $\lambda>0$ and let $W \sim \mathcal W_{d^2, \lambda d^2}$. Then, for any $k\in\mathbb N$, we have that, for any $\epsilon\geq 2C''(\lambda,k)/d$,
\[ \Prob\left(\left|p_k(W) - M_k(\lambda) d^{2k+2}\right|> \epsilon M_k(\lambda) d^{2k+2}\right)  \leq e^{-c'(\lambda,k)\epsilon^{1/k}d^{1/k}}, \]
where $c'(\lambda,k)=2^{1/k}c(\lambda,k)$.
\end{theorem}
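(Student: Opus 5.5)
The plan is to derive Theorem \ref{th:pk_concentration} from the concentration inequality \eqref{eq:pk_concentration} by replacing the exact mean $E_k$ with its asymptotic value $M_k(\lambda)d^{2k+2}$, using the triangle inequality and the bias bound from \eqref{eq:E_k-sigma_k}. The idea is to use half of the allowed deviation $\epsilon$ to absorb the bias $|E_k-M_k(\lambda)d^{2k+2}|$, and the other half for the fluctuation of $p_k(W)$ around $E_k$.

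First I write the inclusion of events
\[
\left\{\left|p_k(W)-M_k(\lambda)d^{2k+2}\right|>\epsilon M_k(\lambda)d^{2k+2}\right\}\subset\left\{\left|p_k(W)-E_k\right|>\epsilon M_k(\lambda)d^{2k+2}-\left|E_k-M_k(\lambda)d^{2k+2}\right|\right\}.
\]
By \eqref{eq:E_k-sigma_k}, the bias is at most $C'(\lambda,k)d^{2k}$. This is at most $\tfrac{\epsilon}{2}M_k(\lambda)d^{2k+2}$ as soon as $\epsilon\geq 2C'(\lambda,k)/(M_k(\lambda)d^2)$. For $d$ large this is weaker than $\epsilon\geq 2C''(\lambda,k)/d$. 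To avoid any "for $d$ large" caveat, one may simply enlarge the constant $C(\lambda,k)$, and hence $C''(\lambda,k)$, so that $C''(\lambda,k)\geq C'(\lambda,k)/M_k(\lambda)$; then $2C''/d\geq 2C'/(M_kd^2)$ for all $d\geq1$. Under this condition, the event on the right is contained in $\{|p_k(W)-E_k|>\tfrac{\epsilon}{2}M_k(\lambda)d^{2k+2}\}$.

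Next I apply \eqref{eq:pk_concentration} with $\epsilon/2$ in place of $\epsilon$. This is legitimate because $\epsilon/2\geq C''(\lambda,k)/d$ is exactly the hypothesis $\epsilon\geq 2C''(\lambda,k)/d$. It yields the bound $\exp(-c(\lambda,k)(\epsilon/2)^{1/k}d^{1/k})=\exp(-2^{-1/k}c(\lambda,k)\epsilon^{1/k}d^{1/k})$.

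The only real subtlety is the bookkeeping of the constant. Halving $\epsilon$ produces $2^{-1/k}c(\lambda,k)$, whereas the statement writes $c'(\lambda,k)=2^{1/k}c(\lambda,k)$. I would check the exponent convention carefully. If the stated constant is indeed meant to be $2^{-1/k}c$, the argument above is complete. If $2^{1/k}$ is really intended, no argument will reach it, since a constant larger than $c$ cannot be obtained from \eqref{eq:pk_concentration} this way. In either case the constant depends only on $(\lambda,k)$, which is all that the later threshold results use. Beyond this, there is no genuine obstacle: the whole proof is a triangle inequality followed by one application of \eqref{eq:pk_concentration}.
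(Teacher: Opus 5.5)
Your argument is correct and is essentially the paper's proof. Both use the triangle inequality to absorb the bias $|E_k-M_k(\lambda)d^{2k+2}|\leq C'(\lambda,k)d^{2k}$ into half of the allowed deviation, then apply \eqref{eq:pk_concentration} with $\epsilon/2$ in place of $\epsilon$; the only difference is that the paper handles the side condition by taking $d$ large rather than by enlarging $C(\lambda,k)$.

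Your suspicion about the constant is also right. Halving $\epsilon$ produces $2^{-1/k}c(\lambda,k)$, so the $2^{1/k}$ in the statement is an exponent-sign slip. Likewise, your condition $\epsilon\geq 2C'(\lambda,k)/(M_k(\lambda)d^2)$ is the correct form of the paper's garbled threshold $C'(\lambda,k)/(4d^2)$.
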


\begin{proof}
Theorem \ref{th:pk_concentration} is a straightforward consequence of the deviation probability given in equation \eqref{eq:pk_concentration}. We just have to observe that, by the estimate on $E_k$ given in equation \eqref{eq:E_k-sigma_k}, we have that, for any $\epsilon\geq C'(\lambda,k)/(4d^2)$,
\begin{align*} 
\Prob\left(\left|p_k(W) - M_k(\lambda) d^{2k+2}\right|> \epsilon M_k(\lambda) d^{2k+2}\right)  & \leq \Prob\left(\left|p_k(W) - E_k\right| > \left(\epsilon - \frac{C'(\lambda,k)}{d^2}\right) M_k(\lambda) d^{2k+2} \right) \\
& \leq \Prob\left(\left|p_k(W) - E_k\right| > \frac{\epsilon}{2}M_k(\lambda) d^{2k+2} \right).
\end{align*}
And since for $d$ large enough $2C''(\lambda,k)/d\geq C'(\lambda,k)/(4d^2)$, we get exactly the announced result by equation \eqref{eq:pk_concentration} applied with $\epsilon/2$ playing the role of $\epsilon$.
\end{proof}

\begin{remark} \label{rem:pk_concentration}
    In Appendix \ref{app:second_approach} we actually establish an improved version of Theorem \ref{th:pk_concentration}. Concretely, we show in Theorem \ref{th:pk_concentration_improved} that, for any $\epsilon>\hat C(\lambda,k)/d^{1+1/k}$,
    \begin{equation} \label{eq:pk_concentration_improved}
    \Prob\left(\left|p_k(W) - M_k(\lambda)d^{2k+2}\right| > \epsilon d^{2k-2} \right) \leq e^{-\hat c(\lambda,k)d^{2+2/k}\epsilon^2}, 
    \end{equation}
    where $\hat C(\lambda,k),\hat c(\lambda,k)>0$ are constants depending only on $\lambda$ and $k$. This deviation inequality is stronger than the one appearing in Theorem \ref{th:pk_concentration} since the concentration bound is exponential in $d^{2+2/k}$ instead of $d^{1/k}$. This difference is particularly significant for large $k$, where the latter decreases to $1$ (and hence tends to become trivial) while the former remains lower bounded by $d^2$. The reason why we have moved the proof of this improved statement to Appendix \ref{app:second_approach} is because it is quite long and tedious. We have thus made the choice of presenting an alternative, much more straightforward, argument in the main text, even though it gives a weaker final result.
\end{remark}

\subsection{Consequence for Hankel determinants of partially transposed Wishart matrices} \label{sec:concentration-det} \hfill\vspace{0.1cm}

Let $W\sim\mathcal W_{d^2,\lambda d^2}$ and for each $k\in\mathbb N$, define its renormalized partially transposed moment of order $k$ as
\begin{equation} \label{eq:def-tilde-p_k}
\tilde{p}_k\left(W\right) := \frac{1}{d^2}p_k\left(\frac{W}{d^2}\right) = \frac{1}{d^{2k+2}}\Tr\left(\left(W^\Gamma\right)^k\right). 
\end{equation}
We have just established, in Theorem \ref{th:pk_concentration} anr Remark \ref{rem:pk_concentration} following it, that, for any $k\in\mathbb N$, $\tilde{p}_k$ concentrates around its asymptotic average $M_k(\lambda)$. Our goal is now to derive from this fact that, for any $m\in\mathbb N$, the determinant of the Hankel matrix $\tilde{B}_m$ associated to these renormalized moments $\{\tilde p_1,\ldots,\tilde p_{2m+1}\}$ concentrates around the determinant of the Hankel matrix $H_m$ associated to the asymptotic average of these moments $\{M_1(\lambda),\ldots,M_{2m+1}(\lambda)\}$.

Let us start with a simple observation. Given $k\in\mathbb N$ and $\epsilon>0$, define the event 
\[ \Omega_k^\epsilon = \{ |\tilde p_k - M_k(\lambda)| \le \epsilon M_k(\lambda) \}, \]
and denote by $\overbar\Omega_k^\epsilon$ its complementary event. By the union bound, we then have
\[ \Prob\left(\Omega_1^\epsilon\cap\cdots\cap\Omega_k^\epsilon\right) = 1 - \Prob\left(\overbar\Omega_1^\epsilon\cup\cdots\cup\overbar\Omega_k^\epsilon\right) \geq 1 - \left(\Prob\left(\overbar\Omega_1^\epsilon\right) + \cdots + \Prob\left(\overbar\Omega_k^\epsilon\right)\right). \]
And we thus get by equation \eqref{eq:pk_concentration_improved} that, for any $\epsilon$ such that $\epsilon\geq \hat C(\lambda,i)/d^{1+1/i}$ for all $1\leq i\leq k$, i.e.~for any $\epsilon\geq \hat C(\lambda,k)/d^{1+1/k}$,
\begin{equation} \label{eq:concentration-union}
    \Prob\left(\Omega_1^\epsilon\cap\ldots\cap \Omega_k^\epsilon\right) \geq 1 - \sum_{i=1}^k e^{-\hat c(\lambda,i)d^{2+2/i}\epsilon^2} \geq 1 - ke^{-\hat c(\lambda,k)d^{2+2/k}\epsilon^2}.
\end{equation}

Given $m\in\mathbb N$, let us now suppose that the event $\Omega_1^\epsilon\cap\ldots\cap\Omega_{2m+1}^\epsilon$ holds. We then have
\begin{align*}
    \det(\tilde B_m) & = \sum_{\sigma\in S_{m+1}} \varepsilon(\sigma) \prod_{i=0}^m (\tilde B_m)_{\sigma(i),i} \\
    & = \sum_{\sigma\in S_{m+1}} \varepsilon(\sigma) \prod_{i=0}^m \tilde p_{\sigma(i)+i+1} \\
    & = \sum_{\sigma\in S_{m+1}} \varepsilon(\sigma) \prod_{i=0}^m \left(1+\delta_{\sigma(i)+i+1}\right)M_{\sigma(i)+i+1}(\lambda),
\end{align*}
where by assumption $|\delta_1|,\ldots,|\delta_{2m+1}|\leq\epsilon$. We can thus write
\begin{equation} \label{eq:deviation-det}
    \det(\tilde B_m) = \det(H_m)+\Delta_m, 
\end{equation}
with
\[ |\Delta_m| \leq \epsilon\,2^{m+1}\sum_{\sigma\in S_{m+1}}\prod_{i=0}^m M_{\sigma(i)+i+1}(\lambda) = \epsilon\, 2^{m+1}\mathrm{perm}(H_m). \]
This shows that
\begin{equation} \label{eq:concentration-det-moments}
    \Prob\left(\left|\det(\tilde B_m) - \det(H_m)\right| \leq \epsilon\, 2^{m+1}\mathrm{perm}(H_m) \right) \geq \Prob\left(\Omega_1^\epsilon\cap\cdots\cap\Omega_{2m+1}^\epsilon\right)
\end{equation}

We summarize this analysis in the following theorem.

\begin{theorem} \label{th:det_concentration}
Fix $\lambda>0$ and let $W \sim \mathcal W_{d^2, \lambda d^2}$. Then, for any $m\in\mathbb N$, we have that, for any $\epsilon\geq \tilde C(\lambda,m)/d^{1+1/(2m+1)}$,
\[ \Prob\left(\left|\det(\tilde B_m) - \det(H_m)\right| \leq \epsilon \right) \geq 1 - (2m+1)e^{-\tilde c(\lambda,m)d^{2+2/(2m+1)}\epsilon^2}, \]
where we have set
\[ \tilde C(\lambda,m) = 2^{m+1}\mathrm{perm}(H_m)\hat C(\lambda,2m+1) \quad \text{and} \quad \tilde c(\lambda,m) = \frac{\hat c(\lambda,2m+1)}{\left(2^{m+1}\mathrm{perm}(H_m)\right)^2}. \]
\end{theorem}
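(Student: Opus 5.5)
The plan is to combine the deterministic perturbation bound \eqref{eq:concentration-det-moments} with the union-bound concentration estimate \eqref{eq:concentration-union}, after rescaling $\epsilon$. I will write $K:=2^{m+1}\mathrm{perm}(H_m)$. This is a positive constant depending only on $\lambda$ and $m$, since all $M_k(\lambda)>0$.

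\emph{Step 1.} Given a target accuracy $\epsilon$, set $\epsilon':=\epsilon/K$. Suppose the event $\Omega_1^{\epsilon'}\cap\cdots\cap\Omega_{2m+1}^{\epsilon'}$ holds. Expanding $\det(\tilde B_m)$ over $S_{m+1}$ with $\tilde p_k=(1+\delta_k)M_k(\lambda)$ and $|\delta_k|\leq\epsilon'$ gives
\[
\prod_{i=0}^m\left(1+\delta_{\sigma(i)+i+1}\right)-1 \leq (1+\epsilon')^{m+1}-1 \leq (2^{m+1}-1)\,\epsilon' \qquad (\epsilon'\leq 1).
\]
So \eqref{eq:deviation-det} yields $|\det(\tilde B_m)-\det(H_m)|\leq K\epsilon'=\epsilon$. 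The restriction $\epsilon'\leq1$ is harmless. For $\epsilon'>1$, the event $\Omega_k^{\epsilon'}$ contains $\Omega_k^{1}$, and the lower bound on the probability obtained for smaller accuracy still applies. Alternatively one can simply keep the crude bound with a slightly larger constant.

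\emph{Step 2.} Apply \eqref{eq:concentration-union} with $k=2m+1$ and $\epsilon'$ in place of $\epsilon$. This is legitimate provided $\epsilon'\geq \hat C(\lambda,2m+1)/d^{1+1/(2m+1)}$, i.e.\ provided $\epsilon\geq K\hat C(\lambda,2m+1)/d^{1+1/(2m+1)}=\tilde C(\lambda,m)/d^{1+1/(2m+1)}$. It gives
\[
\Prob\left(\Omega_1^{\epsilon'}\cap\cdots\cap\Omega_{2m+1}^{\epsilon'}\right)\geq 1-(2m+1)e^{-\hat c(\lambda,2m+1)d^{2+2/(2m+1)}\epsilon^2/K^2}.
\]
The exponent is exactly $\tilde c(\lambda,m)\,d^{2+2/(2m+1)}\epsilon^2$. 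Chaining this with Step 1 gives the statement.

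\emph{Main obstacle.} The only delicate point is in \eqref{eq:concentration-union}: it uses the worst constants over $1\leq i\leq 2m+1$. The plan is to check that the bound $\sum_i e^{-\hat c(\lambda,i)d^{2+2/i}\epsilon^2}\leq (2m+1)e^{-\hat c(\lambda,2m+1)d^{2+2/(2m+1)}\epsilon^2}$ and the threshold $\hat C(\lambda,i)/d^{1+1/i}\leq \hat C(\lambda,2m+1)/d^{1+1/(2m+1)}$ both hold once $d$ is large. This works because $d^{2+2/i}\geq d^{2+2/(2m+1)}$ and $d^{1+1/i}\geq d^{1+1/(2m+1)}$ for $i\leq 2m+1$. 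If needed, I would also redefine $\hat c(\lambda,2m+1)$ as $\min_{i\leq 2m+1}\hat c(\lambda,i)$ and $\hat C(\lambda,2m+1)$ as $\max_{i\leq 2m+1}\hat C(\lambda,i)$. This is allowed since these constants are only required to exist.
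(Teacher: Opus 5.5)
Your argument is the paper's proof. You set $K=2^{m+1}\mathrm{perm}(H_m)$, run the perturbation bound \eqref{eq:concentration-det-moments} at accuracy $\epsilon/K$, and plug in the union bound \eqref{eq:concentration-union} for $k=2m+1$. Your remark that one may take $\hat c(\lambda,\cdot)$ nonincreasing and $\hat C(\lambda,\cdot)$ nondecreasing in $k$ correctly justifies a step the paper passes over. One small fix: your displayed product estimate should be stated for $\bigl|\prod_{i}(1+\delta_{\sigma(i)+i+1})-1\bigr|$, which follows by expanding the product.

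The one step that does not work is your treatment of $\epsilon'=\epsilon/K>1$.
\begin{itemize}
\item Inclusion of events only gives $\Prob\bigl(|\det(\tilde B_m)-\det(H_m)|\le\epsilon\bigr)\ge 1-(2m+1)e^{-\tilde c(\lambda,m)d^{2+2/(2m+1)}K^2}$. The exponent contains $K^2$ rather than $\epsilon^2$.
\item No enlarged constant can restore a bound linear in $\epsilon'$, since $(1+\epsilon')^{m+1}-1$ grows like $\epsilon'^{m+1}$.
\end{itemize}
This cannot be repaired, because the claim is false for large $\epsilon$. For $m=0$, $\det(\tilde B_0)=\Tr(W)/d^4$, where $\Tr(W)$ is a sum of $\lambda d^4$ independent standard exponential variables. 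Hence $\Prob(\Tr(W)/d^4-\lambda>\epsilon)\ge e^{-d^4\epsilon-O(\log d)}$, which exceeds $e^{-\tilde c(\lambda,0)d^4\epsilon^2}$ once $\epsilon>1/\tilde c(\lambda,0)$ and $d$ is large.

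The paper's estimate $|\Delta_m|\le\epsilon\,2^{m+1}\mathrm{perm}(H_m)$ tacitly assumes $\epsilon\le 1$ as well. So what both arguments actually establish is the statement for $\tilde C(\lambda,m)/d^{1+1/(2m+1)}\le\epsilon\le K$. That range suffices for Corollary \ref{cor:gen-threshold}, where one only needs $\epsilon=|\det(H_\ell)|/2$ and $|\det(H_\ell)|\le\mathrm{perm}(H_\ell)$ because $H_\ell$ has positive entries. You should state this restriction explicitly rather than claim the full range of $\epsilon$.
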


\begin{proof}
    Theorem \ref{th:det_concentration} immediately follows from plugging into equation \eqref{eq:concentration-det-moments} the deviation probability given in equation \eqref{eq:concentration-union}, with $\epsilon/(2^{m+1}\mathrm{perm}(H_m))$ playing the role of $\epsilon$.
\end{proof}

What Theorem \ref{th:det_concentration} tells us is that, if we have $\det(H_m)>0$, resp.~$\det(H_m)<0$, then with probability going to $1$ as $d$ grows, we have as well $\det(\tilde B_m)>0$, resp.~$\det(\tilde B_m)<0$. This shows that, in order to understand the typical sign of $\det(\tilde B_m)$, it is actually enough to understand that of $\det(H_m)$, which is what we proceed to do in the next section.

\section{Threshold result for moment-based PPT criteria} \label{sec:general_threshold}

Let $W\sim\mathcal W_{d^2,s}$ be a Wishart matrix of parameters $(d^2,s)$ with $s\sim\lambda d^2$ as $d\to\infty$, and set $\rho=W/\Tr(W)$. Recall from Definition \ref{def:p_k-PPT} that, for any given $m\in\mathbb N$, $\rho$ satisfies the $p_{2m+1}$-PPT criterion if the Hankel matrix $B_m(\rho^\Gamma)$ associated with the partially transposed moments $p_k(\rho)=\Tr((\rho^\Gamma)^k)$, $1\leq k\leq 2m+1$, is PSD. As explained in Remark \ref{rem:normalization}, this is actually equivalent to the Hankel matrix $B_m((\alpha\rho)^\Gamma)$ being PSD, for any $\alpha>0$. Choosing $\alpha=\Tr(W)/d^2$, we see that we are left with checking whether $B_m((W/d^2)^\Gamma)$ is PSD, which is itself equivalent to checking whether $B_m((W/d^2)^\Gamma)/d^2$ is PSD. And we recognize that the latter matrix is nothing else than the Hankel matrix $\tilde B_m$ associated with the renormalized moments $\{\tilde p_1,\ldots,\tilde p_{2m+1}\}$, defined in equation \eqref{eq:def-tilde-p_k}. By Sylvester's criterion, the positive semi-definiteness of $\tilde B_m$ is equivalent to the simultaneous non-negativity of all its leading principal minors, i.e.~to $\det(\tilde B_0),\dots,\det(\tilde B_m)\geq 0$.

Now, we also know from Theorem \ref{th:det_concentration} that, when the local dimension $d$ is large, for each $k\in\mathbb N$, $\det(\tilde B_k)$ concentrates around the corresponding asymptotic Hankel determinant $\det(H_k)$, where
\begin{equation}\label{eq:def-H_k}
  H_k = H_k(\lambda) := \left(M_{i+j+1}(\lambda)\right)_{0\le i,j\le k},
\end{equation}
with the asymptotic average moment coefficients $M_{i+j+1}(\lambda)$ as defined in equation \eqref{eq:def-M_k}.
The goal of this section is thus to identify, for each $k\in\mathbb N$, the roots of $\lambda\mapsto\det(H_k(\lambda))$, and hence its sign, depending on the value of $\lambda$. 

\subsection{Closed-form evaluation of asymptotic Hankel determinants} \label{sec:gen:OP} \hfill\vspace{0.1cm} %via orthogonal polynomials

The evaluation of Hankel determinants of moment sequences via orthogonal polynomials is a standard technique (see e.g.~\cite{Krattenthaler99,Krattenthaler05} for extensive surveys on determinant evaluations). More recently, closed-form expressions, in terms of Chebyshev polynomials, were obtained in \cite{CK}, for Hankel determinants of linear combinations of moments arising from weighted Motzkin and Dyck paths with constant step weights. A key observation in our case is that the moment sequence $(M_k(\lambda))_{k\in\mathbb N}$ satisfies a three-term recurrence with constant coefficients. The Hankel determinant evaluation would thus follow by applying the general theory to this particular setting. For the convenience of the reader, we nevertheless give a simple self-contained proof below, following the standard orthogonal polynomial approach.

Recall the definition of the asymptotic average moment coefficients: for all $k\in\mathbb N$,
\[ M_k(\lambda) = \sum_{l=0}^{\lfloor k/2\rfloor} \binom{k}{2l}\,\mathrm{Cat}(l)\,\lambda^{k-l}, \]
where $\mathrm{Cat}(l)$ is the $l^{\text{th}}$ Catalan number. Let
\begin{equation} \label{eq:genC-def}
    C(z):=\sum_{k\in\mathbb N}M_k(\lambda)z^k
\end{equation}
be the associated generating function.

\begin{lemma}\label{lem:genC}
The generating function $C(z)$ defined in equation \eqref{eq:genC-def} satisfies the quadratic equation
\begin{equation}\label{eq:gen:Cfunc}
  (1-\lambda z)C(z) = 1 + \lambda z^2C(z)^2.
\end{equation}
\end{lemma}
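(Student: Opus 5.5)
The plan is to reduce the statement to the classical quadratic equation for the Catalan generating function. Write $c(x):=\sum_{l\in\mathbb N}\mathrm{Cat}(l)x^l$, which satisfies $c(x)=1+x\,c(x)^2$; this is the standard first-return decomposition of Dyck paths, or equivalently the recursion $\mathrm{Cat}(l+1)=\sum_{i+j=l}\mathrm{Cat}(i)\mathrm{Cat}(j)$. Throughout I will work with formal power series in $z$, so no convergence issues arise.

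First, I would compute $C(z)$ in closed form by exchanging the order of summation. Setting $k=2l+j$, we get
\[ C(z)=\sum_{l\geq 0}\sum_{j\geq 0}\binom{2l+j}{2l}\mathrm{Cat}(l)\lambda^{l+j}z^{2l+j}=\sum_{l\geq0}\mathrm{Cat}(l)\lambda^l z^{2l}\sum_{j\geq0}\binom{2l+j}{j}(\lambda z)^j. \]
Next, I would use the negative binomial series $\sum_{j}\binom{2l+j}{j}y^j=(1-y)^{-(2l+1)}$ with $y=\lambda z$. This gives
\[ C(z)=\frac{1}{1-\lambda z}\sum_{l\ge0}\mathrm{Cat}(l)\left(\frac{\lambda z^2}{(1-\lambda z)^2}\right)^l=\frac{1}{1-\lambda z}\,c(x),\qquad x:=\frac{\lambda z^2}{(1-\lambda z)^2}. \]
The substitution is legitimate formally, since $x$ has zero constant term.

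Finally, I would plug $c(x)=(1-\lambda z)C(z)$ into $c=1+xc^2$ to obtain
\[ (1-\lambda z)C(z)=1+\frac{\lambda z^2}{(1-\lambda z)^2}(1-\lambda z)^2C(z)^2=1+\lambda z^2C(z)^2, \]
which is exactly \eqref{eq:gen:Cfunc}.

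There is no real obstacle here; the only care needed is the binomial reindexing $\binom{k}{2l}=\binom{2l+j}{j}$. As an alternative check, the same identity follows combinatorially: $M_k(\lambda)$ counts weighted Motzkin-type paths with level steps of weight $\lambda$ and up/down pairs of weight $\lambda$, and the first-return decomposition of such paths yields the same equation directly.
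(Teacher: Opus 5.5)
Your proof is correct, but it takes a different route from the paper's. The paper works coefficient by coefficient. It reduces the lemma to the recurrence $M_n(\lambda)-\lambda M_{n-1}(\lambda)=\lambda\sum_{p+q=n-2}M_p(\lambda)M_q(\lambda)$ for $n\geq 2$, and proves this in Appendix~\ref{app:moment}. The left-hand side is handled with Pascal's rule. On the right-hand side, the paper expands the product, collapses the sum over $p$ with the Chu--Vandermonde-type identity $\sum_{p}\binom{p}{2l}\binom{n-2-p}{2l'}=\binom{n-1}{2l+2l'+1}$, and then applies the Catalan convolution.

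You instead resum the whole series. The negative binomial expansion gives the closed form $C(z)=(1-\lambda z)^{-1}c\bigl(\lambda z^2/(1-\lambda z)^2\bigr)$, and the Catalan equation $c=1+xc^2$ then transfers in one line. Your route buys three things. It is shorter and needs no convolution identity beyond the one already encoded in $c=1+xc^2$. It covers the coefficients of $z^0$ and $z^1$ automatically, whereas the paper only treats $n\geq 2$ and leaves that trivial check implicit. It also shows $M_k(\lambda)$ to be a binomial-type transform of the Catalan numbers. The cost is a formal substitution into $c$, which you justify correctly since $x$ has zero constant term; the paper's route stays entirely within finite binomial sums.

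Your closing Motzkin-path remark is also correct, and arguably the most direct proof. Here $\binom{k}{2l}\mathrm{Cat}(l)$ counts Motzkin paths of length $k$ with $l$ up-steps, each weighted by $\lambda^{k-l}$. The first-step/first-return decomposition then gives $C=1+\lambda zC+\lambda z^2C^2$ at once.
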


\begin{proof}
For all $n\geq 2$, the coefficient of~$z^n$ on the left hand side of equation \eqref{eq:gen:Cfunc} is $M_n(\lambda)-\lambda M_{n-1}(\lambda)$, while on the right hand side it is $\lambda\sum_{p+q=n-2}M_p(\lambda) M_q(\lambda)$. The identity
\[ M_n(\lambda) - \lambda M_{n-1}(\lambda) = \lambda\!\sum_{p+q=n-2}M_p(\lambda)M_q(\lambda) \]
follows by expanding both sides using the combinatorial definition of the coefficients $M_k(\lambda)$ and the Chu--Vandermonde identity (see Appendix \ref{app:moment} for details).
\end{proof}

We now define a linear functional $\mathcal{L}:\mathbb{R}[x]\to\mathbb{R}$ by setting, for all $k\in\mathbb N$, $\mathcal{L}[x^k]:=M_k(\lambda)$. We also define monic polynomials by the three-term recurrence
\begin{equation}\label{eq:gen:3term}
  P_0=1,\qquad P_1=x-\lambda,\qquad
  P_{n+1}=(x-\lambda)\,P_n - \lambda\,P_{n-1} \text{ for all } n\ge 1.
\end{equation}
We then define the sequence of generating functions $(F_n(z))_{n\in\mathbb N}$ by, for all $n\in\mathbb N$,
\begin{equation}
    F_n(z):=\sum_{k\in\mathbb N}\mathcal{L}\left[x^kP_n\right]z^k.
\end{equation}

\begin{lemma}\label{lem:Fn}
For all $n\in\mathbb N$,
\begin{equation}\label{eq:gen:Fn}
  F_n(z) = \lambda^nz^nC(z)^{n+1}.
\end{equation}
\end{lemma}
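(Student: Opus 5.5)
The natural route is strong induction on $n$, using only the three-term recurrence \eqref{eq:gen:3term}, linearity of $\mathcal L$, and the quadratic relation of Lemma \ref{lem:genC}. The key observation is that multiplying by $x$ inside $\mathcal L$ corresponds to a shift in the generating function. For any polynomial $Q$, set $F_Q(z):=\sum_{k}\mathcal L[x^kQ]z^k$. Then
\[ F_{xQ}(z)=\sum_k \mathcal L\left[x^{k+1}Q\right]z^k = \frac{F_Q(z)-F_Q(0)}{z}, \quad\text{with } F_Q(0)=\mathcal L[Q]. \]
This identity is what converts the recurrence on the $P_n$ into a recurrence on the $F_n$.

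\textbf{Base cases.} For $n=0$ we have $F_0=C$ by definition, which is \eqref{eq:gen:Fn}. For $n=1$, the shift identity gives
\[ F_1=\frac{C-1}{z}-\lambda C=\frac{(1-\lambda z)C-1}{z}. \]
By \eqref{eq:gen:Cfunc}, the numerator equals $\lambda z^2C^2$, so $F_1=\lambda zC^2$, as claimed.

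\textbf{Inductive step.} Assume the formula holds for $n-1$ and $n$, with $n\ge1$. From $P_{n+1}=(x-\lambda)P_n-\lambda P_{n-1}$ we get
\[ F_{n+1}=\frac{F_n(z)-F_n(0)}{z}-\lambda F_n-\lambda F_{n-1}. \]
Since $n\ge1$, the hypothesis $F_n=\lambda^nz^nC^{n+1}$ vanishes at $z=0$, so $F_n(0)=\mathcal L[P_n]=0$. Substituting the hypotheses and factoring out $\lambda^{n-1}z^{n-1}C^n$ gives
\[ F_{n+1}=\lambda^{n-1}z^{n-1}C^{n}\left(\lambda(1-\lambda z)C-\lambda z\right). \]
Now use \eqref{eq:gen:Cfunc} in the form $(1-\lambda z)C=1+\lambda z^2C^2$. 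The bracket becomes $\lambda+\lambda^2z^2C^2-\lambda z$.

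At this point the bracket does not obviously equal the target $\lambda^2z^2C^2$, so the recurrence and normalization must be checked carefully. This is the main obstacle. It comes down to matching the constants in $P_1$, $P_{n+1}$ and the quadratic equation, so I would recheck the computation using $P_1$ directly.

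\textbf{Recheck.} Apply the same recurrence at $n=1$, i.e. $P_2=(x-\lambda)P_1-\lambda P_0$:
\[ F_2=\frac{F_1-0}{z}-\lambda F_1-\lambda F_0=\lambda C^2-\lambda^2zC^2-\lambda C=\lambda C\left((1-\lambda z)C-1\right)=\lambda^2z^2C^3. \]
So at $n=1$ the correct grouping is to combine the $-\lambda F_{n-1}$ term with $(1-\lambda z)F_n/z$ before substituting. The general step does the same:
\[ F_{n+1}=\frac{(1-\lambda z)F_n}{z}-\lambda F_{n-1}=\lambda^nz^{n-1}C^{n}\left((1-\lambda z)C-1\right). \]
Here I used $\frac{(1-\lambda z)}{z}\lambda^nz^nC^{n+1}=\lambda^nz^{n-1}(1-\lambda z)C^{n+1}$ and $\lambda F_{n-1}=\lambda^nz^{n-1}C^n$.

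The earlier discrepancy was therefore only an algebra slip in the factoring. With the correct factoring, applying \eqref{eq:gen:Cfunc} gives $(1-\lambda z)C-1=\lambda z^2C^2$, hence $F_{n+1}=\lambda^{n+1}z^{n+1}C^{n+2}$. This closes the induction. The only delicate points are the vanishing $F_n(0)=0$ for $n\ge1$ and doing the factorization in this order.
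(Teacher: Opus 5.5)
Your proof is correct and follows the paper's argument. Both use induction with base cases $n=0,1$ and the step $zF_{n+1}=(1-\lambda z)F_n-\lambda zF_{n-1}$, closed by Lemma \ref{lem:genC}; you are actually more careful than the paper in making explicit that this step needs $F_n(0)=\mathcal L[P_n]=0$ for $n\ge1$, which follows from the induction hypothesis. The ``obstacle'' in your first attempt was only an arithmetic slip (the bracket should be $\lambda(1-\lambda z)C-\lambda$, not $\lambda(1-\lambda z)C-\lambda z$) rather than a matter of the order of factoring, so you can simply delete that detour.
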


\begin{proof}
We prove the statement by induction. For $n=0$, we have $P_0=1$ so
$\mathcal{L}[x^kP_0]=M_{k}(\lambda)$, and hence
\[ F_0(z)=C(z). \]
For $n=1$, we have $P_1=x-\lambda$ so $\mathcal{L}[x^kP_1]=M_{k+1}(\lambda)-\lambda M_k(\lambda)$, and hence
\[ F_1(z) = \frac{1}{z}\left(C(z)-1\right)-\lambda C(z) = \frac{1}{z}\left((1-\lambda z)C(z) - 1\right) = \lambda z C(z)^2, \]
where the last equality is by Lemma \ref{lem:genC}. The initialization step is thus clear. We now turn to the induction step. The recurrence~\eqref{eq:gen:3term} gives $zF_{n+1}(z)=(1-\lambda z)F_n(z)-\lambda zF_{n-1}(z)$ for $n\ge1$. Substituting the induction hypothesis and applying Lemma \ref{lem:genC} yields $F_{n+1}(z)=\lambda^{n+1}z^{n+1}C(z)^{n+2}$, which concludes the proof.
\end{proof}

\begin{lemma}\label{lem:ortho}
For all $n\in\N$, $\mathcal{L}[x^kP_n]=0$ for all $k<n$ and $\mathcal{L}[x^nP_n]=\lambda^n$.
\end{lemma}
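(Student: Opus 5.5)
The plan is to read off both claims directly from Lemma \ref{lem:Fn}, since by definition $\mathcal{L}[x^kP_n]$ is the coefficient of $z^k$ in $F_n(z)$. By Lemma \ref{lem:Fn},
\[ F_n(z)=\lambda^n z^n C(z)^{n+1}. \]
Since $C(z)$ is a formal power series, $C(z)^{n+1}$ is also a formal power series. Consequently $F_n(z)$ has no terms of degree below $n$, which gives $\mathcal{L}[x^kP_n]=0$ for all $k<n$.

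For the coefficient of $z^n$, we need the constant term of $C(z)^{n+1}$, which is $C(0)^{n+1}$. By the definition of $M_0(\lambda)$, the sum defining it reduces to the single term $l=0$, namely $\binom{0}{0}\mathrm{Cat}(0)\lambda^0=1$. So $C(0)=1$, and therefore $\mathcal{L}[x^nP_n]=\lambda^n\cdot 1=\lambda^n$.

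A small point to check is the convention $M_0(\lambda)=1$ here, as opposed to $p_0=\Tr(I)$ for actual states. This is consistent with Lemma \ref{lem:genC}: setting $z=0$ in equation \eqref{eq:gen:Cfunc} forces $C(0)=1$.

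There is no real obstacle. All the work sits in Lemma \ref{lem:Fn}, which in turn relies on the quadratic functional equation of Lemma \ref{lem:genC}. The only care required is to treat everything as identities between formal power series, so that no convergence issues arise.
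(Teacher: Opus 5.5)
Your proposal is correct and follows essentially the same route as the paper: both read off the coefficients of $z^k$ for $k\le n$ from $F_n(z)=\lambda^n z^n C(z)^{n+1}$ (Lemma \ref{lem:Fn}), using $C(0)=M_0(\lambda)=1$ to get the leading coefficient $\lambda^n$. Your extra observation that setting $z=0$ in \eqref{eq:gen:Cfunc} also forces $C(0)=1$ is a valid consistency check.
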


\begin{proof}
We know from Lemma \ref{lem:Fn} that $F_n(z)=\lambda^n z^n C(z)^{n+1}$. Hence, we first observe that $F_n(z)$ has no powers of $z$ of order smaller than $n$, which means that, for all $k<n$, the coefficient in front of $z^k$, i.e.~$\mathcal{L}[x^kP_n]$, is equal to $0$. Moreover, the coefficient in front of $z^n$, i.e.~$\mathcal{L}[x^nP_n]$, is equal to $\lambda^nC(0)^{n+1}=\lambda^n$ (because $C(0)=M_0(\lambda)=1$).
\end{proof}

As an immediate consequence of Lemma \ref{lem:ortho}, we have that, for all $k,n\in\N$,
\begin{equation} \label{eq:ortho}
    \mathcal L\left[P_kP_n\right] = \begin{cases} 0 \quad \text{if } k\neq n \\ \lambda^n \quad \text{if } k=n \end{cases}.
\end{equation}
This is because, for each $k\in\mathbb N$, $P_k$ is a monic polynomial of degree $k$.

With the above observation in mind, we now define the Gram matrix $G_m$ by, for all $0\le k,l\le m$,
\[ (G_m)_{kl}:=\mathcal{L}[xP_kP_l] . \]

\begin{lemma}\label{lem:det_eq}
For all $m\in\mathbb N$, $\det(H_m) = \det(G_m)$.
\end{lemma}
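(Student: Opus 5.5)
The plan is to realize $G_m$ as a congruence transform of $H_m$ by a unit lower-triangular matrix. Since each $P_k$ is monic of degree $k$, write
\[ P_k(x)=\sum_{i=0}^{k} T_{ki}\,x^i, \qquad 0\le k\le m, \]
with $T_{kk}=1$ and $T_{ki}=0$ for $i>k$. This defines a lower-triangular matrix $T=(T_{ki})_{0\le k,i\le m}$ with ones on the diagonal, so $\det(T)=1$.

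Next we expand the entries of $G_m$ using bilinearity of $\mathcal L$ and the definition $\mathcal L[x^j]=M_j(\lambda)$:
\[ (G_m)_{kl}=\mathcal L[xP_kP_l]=\sum_{i,j=0}^m T_{ki}T_{lj}\,\mathcal L\left[x^{i+j+1}\right]=\sum_{i,j=0}^m T_{ki}\,M_{i+j+1}(\lambda)\,T_{lj}. \]
Recalling the definition of $H_m$ in equation \eqref{eq:def-H_k}, this is exactly the matrix identity
\[ G_m=T H_m T^{T}. \]

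Taking determinants then gives $\det(G_m)=\det(T)^2\det(H_m)=\det(H_m)$, which is the claim.

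No step here is a real obstacle. The only points to check are that the index ranges match, namely $0\le i,j\le m$, so that only the moments $M_1,\ldots,M_{2m+1}$ appearing in $H_m$ are involved, and that the coefficients of $P_k$ are real, which follows immediately from the three-term recurrence \eqref{eq:gen:3term}. The substantive use of this lemma comes afterwards, when $\det(G_m)$ is evaluated via the orthogonality relations \eqref{eq:ortho} and the recurrence; that evaluation is not needed for the present statement.
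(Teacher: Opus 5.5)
Your proof is correct and follows essentially the same route as the paper: you write the monic polynomials $P_k$ in the monomial basis to get a unit lower-triangular change-of-basis matrix $T$, derive $G_m = T H_m T^{T}$ from the linearity of $\mathcal L$, and conclude from $\det(T)=1$. The only difference is that the paper calls this matrix $S$ instead of $T$.
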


\begin{proof}
For each $0\leq k\leq m$, we write $P_k(x)=\sum_{i=0}^k S_{ki}x^i$. Since $P_k$ is a monic polynomial, we have $S_{kk}=1$. We then define the lower-triangular matrix $S=(S_{ki})_{0\leq k,i\leq m}$, which is thus such that $\det(S)=1$. Now, it is easy to check that $G_m=SH_mS^T$. Indeed, for any $0\leq k,l\leq m$, we have
\[ (SH_mS^T)_{kl} = \sum_{i=0}^k\sum_{j=0}^l S_{ki}S_{lj}(H_m)_{ij} = \sum_{i=0}^k\sum_{j=0}^l S_{ki}S_{lj}\mathcal L[x^{i+j+1}] =  \mathcal L[xP_kP_l] = (G_m)_{kl}, \]
where the third equality is by linearity of $\mathcal L$.
\end{proof}

In what follows, we denote by $(U_n)_{n\in\mathbb N}$ the sequence of Chebyshev
polynomials of the second type. We recall that they are defined by, for all $n\in\mathbb
N$ and $-\pi/2\leq\theta\leq\pi/2$,
\begin{equation} \label{eq:def-Chebyshev}
    U_n\left(\cos(\theta)\right)\sin(\theta) = \sin\left((n+1)\theta\right).
\end{equation}

\begin{theorem}\label{thm:gen:det}
For all $m\in\mathbb N$ and $\lambda>0$, we have
\[ \det \left( H_m(\lambda) \right) = \lambda^{(m+1)^2/2} U_{m+1}\left(\frac{\sqrt\lambda}{2}\right), \]
where $U_{m+1}$ is as defined in equation \eqref{eq:def-Chebyshev}.
\end{theorem}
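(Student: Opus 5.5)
The plan is to combine Lemma \ref{lem:det_eq} with the orthogonality relations \eqref{eq:ortho}. This reduces $\det(H_m)$ to the determinant of a tridiagonal matrix, which is then evaluated by a three-term recurrence that matches the Chebyshev recurrence.

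\textbf{Step 1: compute $G_m$.} By Lemma \ref{lem:det_eq} it suffices to compute $\det(G_m)$, where $(G_m)_{kl}=\mathcal L[xP_kP_l]$. Rewriting the recurrence \eqref{eq:gen:3term} gives
\[ xP_k = P_{k+1}+\lambda P_k+\lambda P_{k-1}, \]
with the convention $P_{-1}=0$. Hence
\[ (G_m)_{kl}=\mathcal L[P_{k+1}P_l]+\lambda\mathcal L[P_kP_l]+\lambda\mathcal L[P_{k-1}P_l]. \]
By \eqref{eq:ortho}, this vanishes unless $|k-l|\le 1$, and:
\begin{itemize}
\item on the diagonal, $(G_m)_{kk}=\lambda\cdot\lambda^k=\lambda^{k+1}$;
\item above the diagonal, $(G_m)_{k,k+1}=\mathcal L[P_{k+1}^2]=\lambda^{k+1}$;
\item below the diagonal, $(G_m)_{k+1,k}=\lambda\mathcal L[P_k^2]=\lambda^{k+1}$.
\end{itemize}
So $G_m$ is symmetric and tridiagonal.

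\textbf{Step 2: rescale to a Toeplitz matrix.} Conjugate by $D=\mathrm{diag}(\lambda^{-k/2})_{0\le k\le m}$. The matrix $DG_mD$ has diagonal entries $\lambda$ and off-diagonal entries $\lambda^{k+1}\lambda^{-k/2}\lambda^{-(k+1)/2}=\sqrt\lambda$, so it is tridiagonal Toeplitz. Moreover
\[ \det(G_m)=\det(D)^{-2}\det(DG_mD)=\lambda^{m(m+1)/2}\det(DG_mD). \]

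\textbf{Step 3: evaluate the Toeplitz determinant.} Let $T_n$ be the $n\times n$ tridiagonal Toeplitz determinant with diagonal $a=\lambda$ and off-diagonal $b=\sqrt\lambda$. Expanding along the last row gives
\[ T_n=aT_{n-1}-b^2T_{n-2},\qquad T_0=1,\quad T_1=a. \]
Chebyshev polynomials satisfy $U_{n+1}(x)=2xU_n(x)-U_{n-1}(x)$ with $U_0=1$ and $U_1=2x$, which follows from \eqref{eq:def-Chebyshev} by the identity $\sin((n+2)\theta)+\sin(n\theta)=2\cos\theta\sin((n+1)\theta)$. Consequently $b^nU_n(a/(2b))$ satisfies the same recurrence and initial conditions as $T_n$, so $T_n=b^nU_n(a/(2b))$ by induction. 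Here $U_n$ is understood as the polynomial it defines, which extends beyond $[-1,1]$.

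\textbf{Step 4: assemble.} Taking $n=m+1$ gives $\det(DG_mD)=\lambda^{(m+1)/2}U_{m+1}(\sqrt\lambda/2)$. Therefore
\[ \det(H_m)=\lambda^{m(m+1)/2+(m+1)/2}\,U_{m+1}\!\left(\tfrac{\sqrt\lambda}{2}\right)=\lambda^{(m+1)^2/2}\,U_{m+1}\!\left(\tfrac{\sqrt\lambda}{2}\right). \]
As a sanity check, $m=0$ gives $\lambda^{1/2}\cdot\sqrt\lambda=\lambda=M_1(\lambda)$, which is correct.

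The computations are routine. The step needing the most care is the bookkeeping of the entries of $G_m$ from \eqref{eq:ortho}, in particular the asymmetric-looking coefficients of $P_{k\pm1}$, and checking that both off-diagonal entries equal $\lambda^{k+1}$. Everything after that is a standard recurrence argument.
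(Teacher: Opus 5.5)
Your proposal is correct and follows essentially the same route as the paper: you use Lemma \ref{lem:det_eq} and the orthogonality relations to reduce to the tridiagonal Gram matrix $G_m$, then evaluate its determinant through a three-term recurrence identified with the Chebyshev recurrence. The only cosmetic difference is the rescaling. You conjugate symmetrically by $\mathrm{diag}(\lambda^{-k/2})$ to get a Toeplitz matrix with off-diagonal entries $\sqrt\lambda$, whereas the paper factors $\lambda^k$ out of each row, leaving off-diagonal entries $1$ and $\lambda$; both give the same recurrence $T_n=\lambda T_{n-1}-\lambda T_{n-2}$.
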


\begin{proof}
It is easy to see that $G_m$ is tridiagonal, with non-zero entries given by
\begin{align*}
    & (G_m)_{k,k}=\lambda^{k+1} \quad \text{for all } 0\leq k\leq m,\\
    & (G_m)_{k,k-1}=\lambda^k \quad \text{for all } 1\leq k\leq m,\\
    & (G_m)_{k,k+1}=\lambda^{k+1} \quad \text{for all } 0\leq k\leq m-1.
\end{align*}
Indeed, recurrence \eqref{eq:gen:3term} tells us that $xP_k=P_{k+1}+\lambda P_k+\lambda P_{k-1}$, so the result follows from equation~\eqref{eq:ortho}. Factoring $\lambda^k$ from the $k^{\text{th}}$ row of $G_m$, for each $0\leq k\leq m$, extracts a
multiplicative factor $\prod_{k=0}^m \lambda^k=\lambda^{m(m+1)/2}$ and leaves a reduced matrix~$G'_m$ with entries $(G'_m)_{k,k-1}=1$, $(G'_m)_{k,k}=\lambda$, and $(G'_m)_{k,k+1}=\lambda$ whenever the indices are in range.

Setting $d_n:=\det (G'_{n-1})$, cofactor expansion along the first row gives
\[  d_0=1, \qquad d_1=\lambda, \qquad d_{n+1}=\lambda d_n-\lambda d_{n-1} \text{ for all } n\ge 1. \]
Defining $f_n=\lambda^{-n/2}d_n$ and $t=\sqrt\lambda/2$, we get $f_{n+1}=2tf_n-f_{n-1}$ with $f_0=1$ and $f_1=2t$, which is the defining recurrence of the Chebyshev polynomials of the second kind $U_n(t)$. This implies that $d_n=\lambda^{n/2}U_n(\sqrt\lambda/2)$.

Combining this result with Lemma \ref{lem:det_eq}, we eventually obtain
\[ \det(H_m) = \det(G_m) = \lambda^{m(m+1)/2}d_{m+1} = \lambda^{(m+1)^2/2} U_{m+1}\left(\frac{\sqrt\lambda}{2}\right), \]
as announced.
\end{proof}

\subsection{Consequence for the $p_{2m+1}$-PPT criterion threshold} \hfill\vspace{0.1cm}

\begin{theorem} \label{thm:gen-threshold}
For all $m\in\mathbb N$, setting
    \begin{equation} \label{eq:lambda-m}
        \lambda_m := 4\cos^2\left(\frac{\pi}{m+2}\right),
    \end{equation}
we have that, if $\lambda>\lambda_m$, then $\det(H_\ell(\lambda))>0$ for all $0\leq \ell\leq m$, while if $\lambda<\lambda_m$, then $\det(H_\ell(\lambda))<0$ for some $0\leq \ell\leq m$.
\end{theorem}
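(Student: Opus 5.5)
By Theorem \ref{thm:gen:det}, $\det(H_\ell(\lambda))=\lambda^{(\ell+1)^2/2}U_{\ell+1}(\sqrt\lambda/2)$. Since $\lambda^{(\ell+1)^2/2}>0$ for $\lambda>0$, the sign of $\det(H_\ell(\lambda))$ is the sign of $U_{\ell+1}(\sqrt\lambda/2)$. The whole statement therefore reduces to locating the sign changes of the Chebyshev polynomials $U_n$ on $(0,\infty)$, with $n=\ell+1\in\{1,\ldots,m+1\}$ and argument $t=\sqrt\lambda/2$.

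\textbf{Roots and positivity region.} I would use the standard fact, read off from \eqref{eq:def-Chebyshev}, that the roots of $U_n$ are exactly $\cos(j\pi/(n+1))$ for $j=1,\ldots,n$. These are $n$ distinct real roots, hence all of them, and each is simple. The largest root is $\cos(\pi/(n+1))$. Since $U_n$ has positive leading coefficient $2^n$, we get $U_n(t)>0$ for all $t>\cos(\pi/(n+1))$. This threshold is increasing in $n$.

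\textbf{Case $\lambda>\lambda_m$.} Here $t=\sqrt\lambda/2>\cos(\pi/(m+2))\geq\cos(\pi/(\ell+2))$ for every $0\le\ell\le m$. Hence $U_{\ell+1}(t)>0$, and so $\det(H_\ell(\lambda))>0$ for all $\ell\le m$.

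\textbf{Case $0<\lambda<\lambda_m$.} Write $t=\cos\theta$ with $\theta\in(\pi/(m+2),\pi/2)$. Such a $\theta$ exists because $0<t<\cos(\pi/(m+2))$; for $m=0$ we have $\lambda_0=0$ and there is nothing to prove. Note that $\theta$ is strictly between $0$ and $\pi/2$, so $\sin\theta>0$. Then $\operatorname{sign}U_n(t)=\operatorname{sign}\sin((n+1)\theta)$.

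Choose $n$ to be the smallest integer with $(n+1)\theta>\pi$. Since $\theta<\pi/2$, this gives $n\ge 2$. Minimality gives $n\theta\le\pi$, and therefore $\pi<(n+1)\theta\le\pi+\theta<2\pi$. Hence $\sin((n+1)\theta)<0$, so $U_n(t)<0$.

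It remains to check that $\ell=n-1$ lies in range, i.e.\ $n\le m+1$. Since $\theta>\pi/(m+2)$, we have $(m+2)\theta>\pi$. By minimality of $n$, this forces $n+1\le m+2$. So $\det(H_{n-1}(\lambda))<0$ with $0\le n-1\le m$.

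\textbf{Main obstacle.} The only delicate point is choosing the index $\ell$ that exhibits negativity. For a general $\lambda<\lambda_m$, the top determinant $\det(H_m)$ itself may be positive, because it sits between lower roots of $U_{m+1}$. The "first crossing of $\pi$" argument above handles this. The boundary issues, such as $\sin((n+1)\theta)=0$ or small $\lambda$, are excluded by the strict inequalities.
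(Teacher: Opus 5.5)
Your proof is correct and is essentially the paper's argument: your index $n-1$, with $n$ minimal such that $(n+1)\theta>\pi$, is exactly the paper's $\ell$ with $\lambda_{\ell-1}\le\lambda<\lambda_\ell$, and in both cases the sign is read off from $\sin((\ell+2)\theta)$. The only difference is cosmetic and in the positive case: you use the root locations of $U_n$ and its positive leading coefficient to treat every $\lambda>\lambda_m$ at once, whereas the paper handles $\lambda\ge 4$ separately via the recurrence.
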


\begin{proof}
If $\lambda\geq 4$, then $\sqrt{\lambda}/2\geq1$ and the recurrence for the Chebyshev polynomials gives $U_n(\sqrt{\lambda}/2)>0$ for every $n\geq0$. Theorem~\ref{thm:gen:det} then implies $\det(H_\ell(\lambda))>0$ for every $0\leq \ell\leq m$. We may therefore assume that $0<\lambda<4$.

We set $\theta:=\arccos(\sqrt{\lambda}/2)$, so that $0<\theta<\pi/2$. From Theorem \ref{thm:gen:det} and the identity \eqref{eq:def-Chebyshev}, we have that, for all $0\leq \ell\leq m$,
\begin{equation}\label{eq:sign_sin}
  \operatorname{sign}\left(\det (H_\ell(\lambda))\right)
  =\operatorname{sign}\left(U_{\ell+1}\left(\frac{\sqrt{\lambda}}{2}\right)\right)
  =\operatorname{sign}\left(\sin((\ell+2)\theta)\right),
\end{equation}
because $0<\theta<\pi/2$ implies $\sin(\theta)>0$. Define next the increasing sequence $(\lambda_\ell)_{0\leq \ell\leq m}$ by, for all $0\leq \ell\leq m$,
\[ \lambda_\ell := 4\cos^2\left(\frac{\pi}{\ell+2}\right). \]
$\lambda_\ell$ is such that $\theta_\ell=\arccos(\sqrt{\lambda_\ell}/2)=\pi/(\ell{+}2)$, and hence
it is the largest zero of $\det(H_\ell(\lambda))$.

Let us first show that $\lambda>\lambda_m$ implies that $\det(H_\ell(\lambda))>0$ for all $0\leq \ell\leq m$. If $\lambda>\lambda_m$ then $\theta<\pi/(m{+}2)$. So for every $0\leq\ell\leq m$, we have $(\ell{+}2)\theta\leq(m{+}2)\theta<\pi$, hence $\sin((\ell{+}2)\theta)>0$ and $\det (H_\ell(\lambda))>0$ by equation~\eqref{eq:sign_sin}.

Let us now show that $\lambda<\lambda_m$ implies that $\det(H_\ell(\lambda))<0$ for some $0\leq \ell\leq m$. Because the sequence $(\lambda_\ell)_{0\leq \ell\leq m}$ is increasing, there
exists a unique index $1\leq \ell\leq m$ such that
\[ \lambda_{\ell-1}\le\lambda<\lambda_\ell. \]
The corresponding angle~$\theta$ satisfies
\[ \frac{\pi}{\ell+2}<\theta\leq\frac{\pi}{\ell+1}. \]
So we have on the one hand $(\ell{+}2)\theta>\pi$. On the other hand
\[ (\ell{+}2)\theta \leq \frac{(\ell{+}2)\pi}{\ell{+}1} = \pi+\frac{\pi}{\ell{+}1} \leq \frac{3\pi}{2} < 2\pi. \]
Hence $\pi<(\ell{+}2)\theta<2\pi$, so $\sin((\ell{+}2)\theta)<0$ and $\det(H_\ell(\lambda))<0$ by equation \eqref{eq:sign_sin}.
\end{proof}

\begin{corollary}\label{cor:gen-threshold}
Let $d\in\mathbb N$ and let $\rho$ be a random state on $\C^d\otimes\C^d$, induced by some environment $\C^s$ with $s\sim\lambda d^2$ as $d\to\infty$. For all $m\in\mathbb N$, defining $\lambda_m$ as in equation \eqref{eq:lambda-m}, we have
\begin{itemize}
    \item[(1)] if $\lambda < \lambda_m$, then
    \[ \Prob(\rho \text{ violates the } p_{2m+1}\text{-PPT criterion}) \geq 1-(m+1)^2e^{-c(\lambda,m)d^{2+2/(2m+1)}}; \]
    \item[(2)] if $\lambda > \lambda_m$, then
    \[ \Prob(\rho \text{ satisfies the } p_{2m+1}\text{-PPT criterion}) \geq 1-(m+1)^2e^{-c(\lambda,m)d^{2+2/(2m+1)}}, \]
\end{itemize}
where $c(\lambda,m)>0$ is a constant depending only on $\lambda$ and $m$.
\end{corollary}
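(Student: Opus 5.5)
We reduce the corollary to Theorem \ref{thm:gen-threshold} combined with Theorem \ref{th:det_concentration}, applied to each of the leading principal minors $\det(\tilde B_0),\ldots,\det(\tilde B_m)$.

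\emph{Reduction.} Write $\rho=W/\Tr(W)$ with $W\sim\mathcal W_{d^2,s}$, as in Proposition \ref{prop:induced-state-Wishart}. As explained at the beginning of Section \ref{sec:general_threshold}, Remark \ref{rem:normalization} shows that $\rho$ satisfies the $p_{2m+1}$-PPT criterion if and only if $\tilde B_m\geq 0$. Here $\tilde B_\ell$ denotes the leading $(\ell+1)\times(\ell+1)$ block of $\tilde B_m$. The case $s\sim\lambda d^2$ rather than $s=\lambda d^2$ exactly is handled by a small perturbation. Pick $\lambda'$ strictly between $\lambda$ and $\lambda_m$; for $d$ large enough, $s/d^2$ lies on the same side of $\lambda'$ as $\lambda$. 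Alternatively, note that the estimates of Section \ref{sec:average} depend continuously on $s/d^2$, so the constants can be taken uniform in a small neighborhood of $\lambda$. We will use the latter approach.

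\emph{Case (2): $\lambda>\lambda_m$.} By Theorem \ref{thm:gen-threshold}, $\delta:=\min_{0\le\ell\le m}\det(H_\ell(\lambda))>0$. Apply Theorem \ref{th:det_concentration} for each $\ell\le m$ with $\epsilon=\delta/2$. This is admissible once $d$ is large, because the threshold $\tilde C(\lambda,\ell)/d^{1+1/(2\ell+1)}$ tends to $0$. On the intersection of these events, all $\det(\tilde B_\ell)>0$, so $\tilde B_m>0$ by Sylvester's criterion. The event has probability at least $1-\sum_{\ell\le m}(2\ell+1)e^{-\tilde c(\lambda,\ell)d^{2+2/(2\ell+1)}\delta^2/4}$. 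The slowest-decaying exponent is the one with $\ell=m$, since $d^{2+2/(2\ell+1)}\ge d^{2+2/(2m+1)}$. Moreover, $\sum_{\ell\le m}(2\ell+1)=(m+1)^2$, which explains the prefactor. Setting $c(\lambda,m)=\min_\ell\tilde c(\lambda,\ell)\delta^2/4$ gives the stated bound.

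\emph{Case (1): $\lambda<\lambda_m$.} By Theorem \ref{thm:gen-threshold}, there is some $\ell\le m$ with $\det(H_\ell(\lambda))=-\eta<0$. Apply Theorem \ref{th:det_concentration} for this single $\ell$ with $\epsilon=\eta/2$. Then $\det(\tilde B_\ell)<0$ with probability at least $1-(2\ell+1)e^{-\tilde c\, d^{2+2/(2\ell+1)}\eta^2/4}$. A negative principal minor rules out positive semi-definiteness, so $\tilde B_m\not\geq0$ and $\rho$ violates the criterion. The resulting bound is at least as good as the stated one.

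\emph{Main obstacle.} The delicate point is Sylvester's criterion for semi-definiteness. Nonnegativity of the leading minors does not imply that a matrix is PSD. In case (2) this is harmless, because we obtain strict positivity of all leading minors, which does imply $\tilde B_m>0$. In case (1) we only use the easy direction, namely that a PSD matrix has nonnegative principal minors. Beyond this, the remaining care is in the uniformity of constants with respect to $s/d^2\to\lambda$, and in checking that the $\epsilon$-thresholds of Theorem \ref{th:det_concentration} are met for $d$ large. Both are routine.
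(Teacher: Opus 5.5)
Your proposal is correct and follows essentially the same route as the paper: concentrate each leading minor $\det(\tilde B_\ell)$ around $\det(H_\ell(\lambda))$ via Theorem \ref{th:det_concentration}, take a union bound over $0\le\ell\le m$ (which gives the prefactor $(m+1)^2=\sum_{\ell\le m}(2\ell+1)$), and conclude with Sylvester's criterion and the sign information from Theorem \ref{thm:gen-threshold}. Your version is slightly more careful than the paper's in two places: you use only the valid directions of Sylvester's criterion (strict positivity of all leading minors in case (2), a single negative principal minor in case (1)), and you address the gap between $s\sim\lambda d^2$ and $s=\lambda d^2$, which the paper does not discuss.
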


What Corollary \ref{cor:gen-threshold} tells us is that the threshold environment dimension at which a random induced state on $\C^d\otimes\C^d$ switches from typically violating to typically satisfying the $p_{2m+1}$-PPT criterion is $s=\lambda_m d^2$ with $\lambda_m=4\cos^2(\pi/(m+2))$. In particular, we see that $\lambda_m\to 4$ as $m\to\infty$. This is consistent with the fact that the hierarchy of $p_{2m+1}$-PPT criteria converges to the PPT criterion as $m\to\infty$ and that the corresponding threshold for the PPT criterion is $s=4d^2$. We should however insist on the fact that doing such asymptotic extension of our results is not mathematically legit. Indeed, those are technically valid only in the regime where $m$ is fixed while $d$ grows, not in the regime where $m$ would grow with $d$ as well (e.g.~as $m\sim d^2$, which is the scaling at which we know that $p_{2m+1}$-PPT is equivalent to PPT). It remains nonetheless interesting to note that this non-rigorous extrapolation actually gives a correct result.

\begin{proof}
By the concentration result summarized in Theorem \ref{th:det_concentration} we know that, for any $0\leq \ell\leq m$, if $\det(H_\ell)>0$, then
\[ \Prob\left(\det(B_\ell)>0\right) \geq 1-(2\ell+1)e^{-c(\lambda,\ell)d^{2+2/(2\ell+1)}}, \]
while if $\det(H_\ell)<0$, then
\[ \Prob\left(\det(B_\ell)<0\right) \geq 1-(2\ell+1)e^{-c(\lambda,\ell)d^{2+2/(2\ell+1)}}. \]
By the union bound, we thus get that, if $\det(H_\ell)>0$ for all $0\leq\ell\leq m$, then
\[ \Prob\left(\forall\,0\leq\ell\leq m,\,\det(B_\ell)>0\right) \geq 1-(m+1)^2e^{-c(\lambda,m)d^{2+2/(2m+1)}}, \]
while if $\det(H_\ell)<0$ for some $0\leq\ell\leq m$, then
\[ \Prob\left(\exists\,0\leq\ell\leq m: \det(B_\ell)>0\right) \geq 1-(m+1)^2e^{-c(\lambda,m)d^{2+2/(2m+1)}}. \]
Now, by Sylvester's criterion, the fact that $\det(B_\ell)>0$ for all $0\leq\ell\leq m$ is equivalent to $B_m>0$, and thus implies $B_m\geq 0$, while the fact that $\det(B_\ell)<0$ for some $0\leq\ell\leq m$ is equivalent to $B_m\ngeq 0$. The conclusion then straightforwardly follows from Theorem \ref{thm:gen-threshold}, which tells us that, if $\lambda>\lambda_m$, then $\det(H_\ell(\lambda))>0$ for all $0\leq \ell\leq m$, while if $\lambda<\lambda_m$, then $\det(H_\ell(\lambda))<0$ for some $0\leq \ell\leq m$.
\end{proof}

\section{Extension to the unbalanced regime}
\label{sec:unbalanced}

Up to now we have only focused on the balanced regime, where the dimensions $d_1$ and $d_2$ of the two subsystems $\mathcal H_1$ and $\mathcal H_2$ are the same. The results we obtained could be easily generalized to a slightly unbalanced regime, where $d_2\sim\mu d_1$ for some fixed $\mu>0$ as $d_1,d_2\to\infty$. On the other hand, different arguments are needed in order to understand a very unbalanced regime, where say $d_1$ is fixed and only $d_2\to\infty$. In this case, it is known that the threshold for satisfying or violating the PPT criterion occurs for an environment dimension $s=2(d_1+\sqrt{d_1^2-1})d_2$ \cite{BN13}. Let us now sketch how things would go for moment-based PPT criteria. 

We start from the exact formula for the expectation of $\Tr((W^\Gamma)^k)$ established in Proposition \ref{prop:first_moment}, which we can easily generalize to the case where $d_1\neq d_2$. This gives
$$ \E \left[\Tr\left(\left(W^\Gamma\right)^k\right)\right] = \sum_{\pi \in S_k} d_1^{\#(\gamma^{-1}\pi)} d_2^{\#(\pi\gamma)} s^{\#(\pi)}. $$
As we have just recalled, we know that, for the PPT criterion, the relevant scaling for $s$ in this unbalanced regime is when $s$ is of order $d_2$, with a $d_1$-dependent coefficient that is of order $d_1$ when $d_1$ is large. Hence, inserting (as before) $s=\lambda d_1d_2$ in the above expression, we get
$$ \E \left[\Tr\left(\left(W^\Gamma\right)^k\right)\right] = \sum_{\pi \in S_k} \lambda^{\#(\pi)}d_1^{\#(\gamma^{-1}\pi)+\#(\pi)} d_2^{\#(\pi\gamma)+\#(\pi)}. $$
Now, we know that, for all $\pi\in S_k$, $\#(\pi\gamma)+\#(\pi)\leq k+1$, with equality if and only if $\pi$ has a non-crossing cycle partition with the ordering inside cycles given by $\gamma^{-1}$. Denoting by $NC_\leftarrow(k)$ the set of such permutations, we thus have
\begin{equation} \label{eq:average-unbalanced} 
\E \left[\Tr\left(\left(W^\Gamma\right)^k\right)\right] \underset{d_2\to\infty}{\sim} \left(\sum_{\pi \in NC_\leftarrow(k)} \lambda^{\#(\pi)}d_1^{\#(\gamma^{-1}\pi)+\#(\pi)} \right) d_2^{k+1} = M_k(\lambda,d_1)d_2^{k+1}, 
\end{equation}
where we have set
\begin{equation}
    M_k(\lambda,d_1) := \sum_{\pi \in NC_\leftarrow(k)} \lambda^{\#(\pi)}d_1^{\#(\gamma^{-1}\pi)+\#(\pi)}.
\end{equation}

Similarly, we can adapt the exact formula for the expectation of $(\Tr((W^\Gamma)^k))^2$ established in Proposition \ref{prop:second_moment}, which gives, for $s=\lambda d_1d_2$,
$$ \E \left[\left(\Tr\left(\left(W^\Gamma\right)^k\right)\right)^2\right] = \sum_{\pi \in S_{2k}} \lambda^{\#(\pi)} d_1^{\#(\tilde\gamma^{-1}\pi)+\#(\pi)} d_2^{\#(\pi\tilde\gamma)+\#(\pi)}. $$
And by arguments completely analogous to those used in the proof of Theorem \ref{th:second_moment_asymptotic}, we get the asymptotic value, as $d_2\to\infty$,
\begin{equation} \label{eq:variance-unbalanced} 
\E \left[\left(\Tr\left(\left(W^\Gamma\right)^k\right)\right)^2\right] \underset{d_2\to\infty}{=} M_k(\lambda,d_1)^2d_2^{2k+2} +O\left(d_2^{2k}\right). 
\end{equation}
This shows that the random variable $X=\Tr((W^\Gamma)^k)$ is such that $\E(X)$ is of order $d_2^{k+1}$ while $\sqrt{\mathrm{Var}(X)}$ is of order at most $d_2^k$, which is what allows to prove concentration of $X$ around its average, exactly as it was done in the balanced regime.

 However, it is not clear how to obtain a closed-form expression for $M_k(\lambda,d_1)$ for any $d_1\in\mathbb N$. We therefore focus here on understanding its behavior when $d_1$ is large. As we have already argued in the proof of Theorem \ref{th:first_moment_asymptotic}, we have that, for any $\pi\in NC_\leftarrow(k)$, $\#(\gamma^{-1}\pi)+\#(\pi)\leq k+1$, with equality if and only if the ordering inside cycles of $\pi$ is also given by $\gamma$, i.e.~if and only if $\pi\in NC_{\leq 2}(k)$. And as explained in Section \ref{sec:combinatorics}, if $\#(\gamma^{-1}\pi)+\#(\pi)<k+1$, then we actually necessarily have $\#(\gamma^{-1}\pi)+\#(\pi)\leq k-1$. We thus get
 \begin{equation} \label{eq:M_k-asymptotics}
     M_k(\lambda,d_1) \underset{d_1\to\infty}{=} \left(M_k(\lambda) + O\left(\frac{1}{d_1^2}\right)\right)d_1^{k+1},
 \end{equation}  
 where $M_k(\lambda)$ is the asymptotic average moment coefficient obtained in the balanced regime, and defined in equation \eqref{eq:def-M_k}.

 Following the exact same steps as in the balanced regime, we now define the asymptotic (as $d_2\to\infty$) rescaled (by $1/d_1^{k+1}$) Hankel matrix
 \[ H_k(\lambda,d_1) := \left(\frac{M_{i+j+1}(\lambda,d_1)}{d_1^{k+1}}\right)_{0\leq i,j\leq k}. \]
 As a straightforward consequence of equation \eqref{eq:M_k-asymptotics}, we have that
 \[ \det\left(H_k(\lambda,d_1)\right) \underset{d_1\to\infty}{=} \det\left(H_k(\lambda)\right) + O\left(\frac{1}{d_1^2}\right), \]
 where $H_k(\lambda)$ is the asymptotic Hankel matrix obtained in the balanced regime, and defined in equation \eqref{eq:def-H_k}.

 Now, we have established in Section \ref{sec:general_threshold} that the largest zero of $\det(H_k(\lambda))$ is simple and takes value $\lambda_k=4\cos^2(\pi/(k+2))$. By the inverse function theorem, we thus have that the largest zero of $\det(H_k(\lambda,d_1))$ satisfies
 \[ \lambda_k(d_1) \underset{d_1\to\infty}{=} 4\cos^2\left(\frac{\pi}{k+2}\right) +  O\left(\frac{1}{d_1^2}\right). \]
 
 Following the exact same argumentation as in the proofs of Theorem \ref{thm:gen-threshold} and Corollary \ref{cor:gen-threshold} we eventually obtain the following threshold statement in the unbalanced regime: Let $d_1,d_2\in\mathbb N$ and let $\rho$ be a random state on $\mathbb C^{d_1}\otimes\mathbb C^{d_2}$, induced by some environment $\mathbb C^s$ with $s\sim\lambda d_1d_2$ as $d_2\to\infty$. For all $m\in\mathbb N$, there exists $\lambda_m(d_1)$, satisfying
 \[ \lambda_m(d_1) \underset{d_1\to\infty}{=} 4\cos^2\left(\frac{\pi}{m+2}\right) +  O\left(\frac{1}{d_1^2}\right), \]
such that
\begin{itemize}
    \item[(1)] if $\lambda < \lambda_m(d_1)$, then
    \[
    \Prob(\rho \text{ violates the } p_{2m+1}\text{-PPT criterion}) \underset{d_2\to\infty}{\longrightarrow} 1;
    \]
    \item[(2)] if $\lambda > \lambda_m(d_1)$, then
    \[
    \Prob(\rho \text{ satisfies the } p_{2m+1}\text{-PPT criterion}) \underset{d_2\to\infty}{\longrightarrow} 1.
    \]
\end{itemize}

\begin{remark}
    Finding the exact expression for the threshold $\lambda_m(d_1)$, valid for all $d_1$, seems challenging. Nevertheless, pushing computations of $M_k(\lambda,d_1)$ and $H_k(\lambda,d_1)$ to the next order in $d_1$ is cumbersome but doable. This yields the more precise asymptotic expansion 
    \[ \lambda_m(d_1) \underset{d_1\to\infty}{=} 4\cos^2\left(\frac{\pi}{m+2}\right) - \frac{1}{d_1^2} +  O\left(\frac{1}{d_1^4}\right). \]
    It is interesting to observe that the first correction term $-1/d_1^2$ to the leading term $\lambda_m$ is actually independent of $m$. And it is in fact the same as in the threshold for the PPT criterion, which, as we have recalled, is $2(1+\sqrt{d_1^2-1}/d_1)=4-1/d_1^2+O(1/d_1^4)$. 
\end{remark}

\section{Summary and perspectives}
\label{sec:conclusion}

In this work, we have analyzed the performance of the hierarchy of so-called $p_{2m+1}$-PPT criteria, which are moment-based relaxations of the PPT criterion, on random high-dimensional bipartite states. We have established that, for any fixed level $m$ in the hierarchy, a random state on $\mathbb C^d\otimes\mathbb C^d$ that is induced by an environment $\mathbb C^s$ typically violates, resp.~satisfies, the $p_{2m+1}$-PPT criterion if $s<\lambda_m d^2$, resp.~$s<\lambda_m d^2$, where  
\[ \lambda_m = 4\cos^2\left(\frac{\pi}{m+2}\right). \]
This sequence of threshold values interpolates between the threshold value $\lambda_1=1$ for the $p_3$-PPT criterion and the threshold value $\lambda_\infty=4$ for the PPT criterion. One important point to note is that even the first level in this hierarchy of moment-based relaxations of PPT does not perform much worse as PPT on typical instances, in the sense that detection occurs up to $s$ of order $d^2$, as for PPT. As a comparison, the precise generic detection threshold of the $p_3$-PPT criterion, namely $s=d^2$, is already higher than that of the realignment criterion, established in \cite{AN12}, namely $s=(8/(3\pi))^2d^2$.

Several questions remain open. One natural direction would be to design other experimentally accessible relaxations of widely-used entanglement criteria, such as PPT or realignment. Indeed, both these criteria require having access to the full spectrum of a certain rewiring of a bipartite state, which is unfeasible in practice. One could then again ask what is the typical performance of such relaxed criteria when applied to random bipartite states. For instance, some first proposals for moment-based versions of the realignment criterion have appeared in \cite{Zhang2022}, but a more systematic approach to this question is still lacking. On the other hand, an even further relaxation of the $p_{2m+1}$-PPT hierarchy was very recently proposed in \cite{ME26} (where each criterion consists in an inequality involving only three partially transposed moments, as the $p_{3}$-PPT criterion), and would deserve a more in-depth analysis.

Another aspect that one could explore is whether it would be possible to estimate the size of the set of states that satisfy the $p_{2m+1}$-PPT criterion. Standard size parameters include the volume-radius (which is more geometrical) or the mean width (which is more probabilistic). Both have been estimated for the set of states satisfying the full PPT criterion \cite{AS05}. The analysis relied quite crucially on the fact that the set of PPT states can be very simply described from a geometric point of view, as the intersection between the set of all states and its image under partial transposition. In contrast, we do not have any nice geometric characterization (or even just approximation) of the set of $p_{2m+1}$-PPT states, which makes the question of computing its size quite challenging.

On a final, more prospective, note, let us mention potential investigations that could be carried on related to undistillability. We know that the set of PPT states is included inside the set of undistillable states but we do not know whether this inclusion is strict. A question we could naturally ask is whether there exists $m<d^2$ such that the set of $p_{2m+1}$-PPT states is included inside the set of undistillable states. A positive answer to this question would imply that PPT states are strictly included in undistillable states (since PPT states are strictly included in $p_{2m+1}$-PPT states for $m<d^2$). So such statement is unlikely to be (easily) provable. Conversely, on could wonder whether there exists $m\geq 1$ such that the set of undistillable states is included inside the set of $p_{2m+1}$-PPT states. A positive answer to this question would imply an approximation from the outside of undistillable states (in addition to the already known one from the inside). More generally, it would be interesting to investigate whether any fixed level in the hierarchy of $p_{2m+1}$-PPT criteria has an operational meaning, either related or not to distillation.

\subsection*{Acknowledgments} Kieran McShane is supported by the France 2030 program QuanTEdu-France (n\textdegree ANR-22-CMAS-0001) and the ANR PRC project TAGADA (n\textdegree ANR-25-CE40-5672). Cécilia Lancien is supported by the ANR JCJC project PraQPV (n\textdegree ANR-24-CE47-3023), the ANR JCJC project RTFPQuEnt (n\textdegree ANR-25-CE40-5465) and the ANR PRC project TAGADA (n\textdegree ANR-25-CE40-5672).

\appendix

\section{Alternative approach to concentration through locally Lipschitz functions}
\label{app:second_approach}

This appendix provides an alternative way of proving concentration of moments of partially transposed random states. It relies on the observation that such moments are Lipschitz functions in random matrices that are uniformly distributed on the Hilbert--Schmidt unit sphere, with a Lipschitz constant that is `small' when restricting to a subset that has `large' measure. 

\subsection{Lévy's lemma for locally Lipschitz functions} \hfill\vspace{0.1cm}

We start by recalling standard concentration results for Lipschitz functions on the Euclidean unit sphere, which our upcoming analysis will heavily rely on. The reader is for instance referred to \cite[Section 5.2]{AubrunSzarek2017} for much more details on this topic. The first such result is the well-known Lévy's lemma. We state here both its original form, in terms of median, and the one that can be easily derived from it, in terms of mean, which will be most useful to us (see e.g.~\cite[Corollary 5.17 and Proposition 5.20]{AubrunSzarek2017}).

\begin{theorem}[Lévy's lemma] \label{th:Levy} 
    Let $f:S^{D-1}\to\mathbb R$ be an $L$-Lipschitz function (with respect to the Euclidean norm). Then, for every $\epsilon>0$,
    \[ \Prob\left(f\gtrless M_f\pm\epsilon\right) \leq \frac{1}{2}\exp\left(-\frac{D\epsilon^2}{2L^2}\right) \quad \text{and} \quad \Prob\left(f\gtrless\E(f)\pm\epsilon\right) \leq \exp\left(-\frac{D\epsilon^2}{2L^2}\right). \]
\end{theorem}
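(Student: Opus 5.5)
The plan is to treat this as the classical concentration statement on the Euclidean sphere and derive the two inequalities by two different routes: the median version from the spherical isoperimetric inequality, and the mean version from a logarithmic Sobolev inequality combined with Herbst's argument. Deducing the mean version from the median version would degrade the constants, so I would not go that way. Throughout, $S^{D-1}\subset\mathbb R^D$ carries the normalized uniform measure $\sigma$. By replacing $f$ with $f/L$ we may assume $L=1$, and by replacing $f$ with $-f$ it suffices to treat the upper deviation in each case.

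\emph{Median version.} Let $A=\{f\leq M_f\}$, so that $\sigma(A)\geq 1/2$. Since $f$ is $1$-Lipschitz, the Euclidean $\epsilon$-neighbourhood $A_\epsilon$ of $A$ is contained in $\{f< M_f+\epsilon\}$. Hence $\Prob(f\geq M_f+\epsilon)\leq 1-\sigma(A_\epsilon)$. By Lévy's isoperimetric inequality on the sphere, $\sigma(A_\epsilon)\geq\sigma(C_\epsilon)$, where $C$ is a closed hemisphere. It then remains to bound the measure of the complementary cap $1-\sigma(C_\epsilon)$. The Euclidean distance $\epsilon$ corresponds to the cap $\{x: x_1\geq \delta\}$ with $\delta=\epsilon\sqrt{1-\epsilon^2/4}$. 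The elementary cap estimate I would use is the following: the cap is contained in a ball of radius $\sqrt{1-\delta^2}$, and the usual cone-volume comparison gives measure at most $\frac12(1-\delta^2)^{(D-1)/2}$. Using $1-\delta^2\le e^{-\delta^2}$ and some care with $\delta$ versus $\epsilon$ should then yield $\frac12 e^{-D\epsilon^2/2}$. If the constants do not quite close this way, I would instead use the sharper bound $\frac12 e^{-D\epsilon^2/2}$ for caps at geodesic distance $\pi/2+\epsilon$, which is the form in \cite[Corollary 5.17]{AubrunSzarek2017}, and note that geodesic distance dominates Euclidean distance.

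\emph{Mean version.} I would use the logarithmic Sobolev inequality on $S^{D-1}$ with constant $1/(D-1)$; this follows from the Bakry--Émery criterion, since the Ricci curvature of $S^{D-1}$ equals $D-2$, or can be taken as known. Herbst's argument then proceeds as follows. Set $H(s)=\log\E e^{s(f-\E f)}$. Applying the log-Sobolev inequality to $e^{sf/2}$ and using $|\nabla f|\leq 1$ gives the differential inequality $(H(s)/s)'\leq \frac{1}{2(D-1)}$. Since $H(s)/s\to 0$ as $s\to0$, integrating yields $\E e^{s(f-\E f)}\leq e^{s^2/(2(D-1))}$. Chernoff's bound with the choice $s=(D-1)\epsilon$ then gives $\Prob(f\geq \E f+\epsilon)\leq e^{-(D-1)\epsilon^2/2}$. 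To obtain $D$ rather than $D-1$ in the exponent, I would instead pass through Gaussian concentration: write $x=g/\|g\|$ for a standard Gaussian $g$ and use the Gaussian log-Sobolev inequality, whose constant is $1$, together with the concentration of $\|g\|$ around $\sqrt D$. Alternatively I would accept the harmless $D-1$ and cite \cite[Proposition 5.20]{AubrunSzarek2017} for the exact constant.

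\emph{Main obstacle.} I expect the main difficulty to be the bookkeeping of constants rather than anything conceptual: the cap measure estimate with exactly the factor $\frac12 e^{-D\epsilon^2/2}$ in the Euclidean metric, and the exact exponent $D$ rather than $D-1$ in the mean version. The isoperimetric inequality itself is deep, but it can be taken as known. For the subsequent application the exact constants are immaterial anyway, since only the form $\exp(-cD\epsilon^2/L^2)$ is used.
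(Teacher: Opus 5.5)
The paper gives no proof of this statement; it simply quotes \cite[Corollary 5.17 and Proposition 5.20]{AubrunSzarek2017}. Your architecture is the standard one and is sound: isoperimetry plus a cap estimate for the median, and log-Sobolev plus Herbst for the mean. However, as written it proves both inequalities with $D-1$ in place of $D$, and neither of the upgrades you propose works.

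\textbf{Median.} With $\delta=\epsilon\sqrt{1-\epsilon^2/4}$ one has exactly $1-\delta^2=(1-\epsilon^2/2)^2$. Your cap bound therefore reads $\frac12(1-\epsilon^2/2)^{D-1}=\frac12\exp\bigl(-(D-1)(\epsilon^2/2+\epsilon^4/8+\cdots)\bigr)$. This is strictly larger than $\frac12e^{-D\epsilon^2/2}$ for all small $\epsilon$, roughly $\epsilon^2<4/(D-1)$. No care with $\delta$ versus $\epsilon$ repairs this, because the loss sits in the cap estimate itself. Near $\delta=0$, $(1-\delta^2)^{(D-1)/2}$ decreases only quadratically, whereas the true cap measure decreases linearly, like $\frac12-c_D\delta$ with $c_D\approx\sqrt{D/(2\pi)}$. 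Reaching the exponent $D$ needs an estimate that captures this linear decay in the range $\epsilon\lesssim D^{-1/2}$. (The intermediate bound $\frac12(1-\delta^2)^{(D-1)/2}$ is true, but cone-volume comparison does not give it; that yields $(1-\delta^2)^{D/2}$ without the factor $\frac12$.)

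Your fallback is not correct as stated either. For geodesically $1$-Lipschitz $f$, the bound $\frac12e^{-D\epsilon^2/2}$ fails when $D=2$. Take $f$ to be the geodesic distance to the half-circle $\{x_1\le 0\}$. Then $M_f=0$ and $\Prob(f\ge 1.1)=\frac12-\frac{1.1}{\pi}\approx 0.1499>\frac12e^{-1.21}\approx 0.1491$. Switching to geodesic Lipschitz constants throws away the fact that a chord of length $\epsilon$ subtends the larger angle $2\arcsin(\epsilon/2)$.

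\textbf{Mean.} Log-Sobolev plus Herbst gives $e^{-(D-1)\epsilon^2/2}$, and even that requires the sharp constant $1/(D-1)$. That constant comes from Mueller--Weissler, or from the dimensional criterion $\mathrm{CD}(D-2,D-1)$. The plain Bakry--\'Emery criterion with $\mathrm{Ric}=D-2$ only yields $1/(D-2)$.

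The Gaussian route does not recover exactly $D$. The map $g\mapsto f(g/\|g\|)$ has local Lipschitz constant $L/\|g\|$, so you have two options, both lossy. You can discard $\{\|g\|<(1-\eta)\sqrt D\}$, which costs an additive error term and a factor $(1-\eta)^2$ in the exponent. Or you can use the homogeneous extension $\|x\|\,(f-\E f)(x/\|x\|)$, whose gradient picks up a radial component, so its Lipschitz constant exceeds $L$.

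\textbf{Conclusion.} Your argument establishes the statement with $D-1$ in place of $D$, or with a worse constant via the Gaussian route. As you point out, this is all Appendix~\ref{app:second_approach} actually uses, since there $D=2\lambda d^4$ and constants are absorbed. The exact constants in the statement, however, are obtained only through the citation, which is also all the paper offers.
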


We will also need a refined version of Lévy's Lemma, which is suited to the case where the considered function has a global Lipschitz constant that is too large for practical purposes, while its restriction to a subset of large measure has a much smaller Lipschitz constant (see e.g.~\cite[Corollary 5.35]{AubrunSzarek2017}).

\begin{theorem}[Lévy's lemma, restricted version] \label{th:Levy-local} 
    Let $\Omega\subset S^{D-1}$ be a subset of measure at least $3/4$ and let $f:S^{D-1}\to\mathbb R$ be a function such that the restriction of $f$ to $\Omega$ is $L$-Lipschitz (with respect to the Euclidean norm). Then, for every $\epsilon>0$,
    \[ \Prob\left(\left|f-M_f\right|>\epsilon\right) \leq \Prob\left(S^{D-1}\setminus\Omega\right) + 2\exp\left(-\frac{D\epsilon^2}{4L^2}\right). \]
\end{theorem}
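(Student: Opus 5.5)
This is the restricted Lévy lemma, quoted from \cite[Corollary 5.35]{AubrunSzarek2017}. The plan is to reduce it to Theorem \ref{th:Levy} by a Lipschitz extension, and then to compare medians.

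\textbf{Extension.} First extend $f|_\Omega$ to a globally $L$-Lipschitz function $g:S^{D-1}\to\mathbb R$ that agrees with $f$ on $\Omega$. The McShane extension does this:
\[ g(x)=\inf_{y\in\Omega}\left(f(y)+L\|x-y\|\right). \]
It is $L$-Lipschitz as an infimum of $L$-Lipschitz functions. It equals $f$ on $\Omega$, because the Lipschitz property of $f$ on $\Omega$ gives $f(y)+L\|x-y\|\geq f(x)$ there.

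\textbf{Concentration of $g$.} Theorem \ref{th:Levy} in its median form gives, for every $\eta>0$,
\[ \Prob(|g-M_g|>\eta)\leq \exp\left(-\frac{D\eta^2}{2L^2}\right). \]
Since $\{f\neq g\}\subset S^{D-1}\setminus\Omega$, we get
\[ \Prob(|f-M_g|>\eta)\leq \Prob(S^{D-1}\setminus\Omega)+\exp\left(-\frac{D\eta^2}{2L^2}\right). \]

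\textbf{Comparing medians (the main obstacle).} The remaining step is to replace $M_g$ by $M_f$. Because $\Prob(f\neq g)\leq 1/4$, I would show that $M_f$ lies within $\eta_0$ of $M_g$, where $\eta_0$ is chosen so that $\exp(-D\eta_0^2/(2L^2))<1/4$.

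Indeed, if $M_f>M_g+\eta_0$, then
\[ 1/2\leq \Prob(f\geq M_f)\leq \Prob(g> M_g+\eta_0)+1/4<1/2, \]
a contradiction. The case $M_f<M_g-\eta_0$ is symmetric.

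Then split $\epsilon$:
\begin{itemize}
\item If $\epsilon\geq 2\eta_0$, use $|f-M_f|>\epsilon\Rightarrow|f-M_g|>\epsilon/2$. This gives a bound $\Prob(S^{D-1}\setminus\Omega)+\exp(-D\epsilon^2/(8L^2))$.
\item If $\epsilon< 2\eta_0$, the claimed right-hand side is already at least $1$ once constants are arranged, so the bound is trivial.
\end{itemize}

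\textbf{Getting the stated constant.} The split above yields $8$ in the exponent rather than the stated $4$ with prefactor $2$. To reach the exact statement, I would instead follow the isoperimetric proof directly. Let $A=\{f\leq M_f\}\cap\Omega$, which has measure at least $1/4$.

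On $\Omega$, the set $\{f> M_f+\epsilon\}$ is disjoint from the $\epsilon/L$-neighbourhood of $A$. The spherical isoperimetric inequality for sets of measure at least $1/4$ shows that the complement of this neighbourhood has measure at most $2\exp(-D\epsilon^2/(4L^2))$, split over the two tails. This bookkeeping is the delicate part. I would settle for matching the constants of the cited reference, since only the form of the bound is used later.
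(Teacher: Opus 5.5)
Your Lipschitz extension and your median comparison are both sound, but the proposal does not establish the inequality as stated, and one step is false. In the case $\epsilon<2\eta_0$ you claim the right-hand side is at least $1$. With your choice $e^{-D\eta_0^2/(2L^2)}\approx 1/4$ this fails. For $\epsilon$ just below $2\eta_0$, the stated term $2e^{-D\epsilon^2/(4L^2)}$ is about $2e^{-D\eta_0^2/L^2}\approx 1/8$. Even your weaker term $e^{-D\epsilon^2/(8L^2)}$ is only about $1/4$. So nothing is trivial there when $\Prob(S^{D-1}\setminus\Omega)$ is small. Handled correctly, your first route gives only $\Prob(S^{D-1}\setminus\Omega)+4e^{-D\epsilon^2/(8L^2)}$.

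The isoperimetric paragraph does not close this gap. Its only nontrivial claim is that the complements of the $\epsilon/L$-neighbourhoods of $\{f\le M_f\}\cap\Omega$ and $\{f\ge M_f\}\cap\Omega$ (sets of measure at least $1/4$) together have measure at most $2e^{-D\epsilon^2/(4L^2)}$. You assert this without proof. It is not a direct consequence of Theorem \ref{th:Levy}, which, applied to distance functions, only controls neighbourhoods of sets of measure at least $1/2$. Proving it needs the same enlarge-then-absorb bookkeeping you were trying to avoid. Deferring to ``the constants of the cited reference'' is not an argument: the paper gives no proof and simply quotes \cite[Corollary 5.35]{AubrunSzarek2017}. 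Moreover, the constants $2$ and $4L^2$ are used explicitly in the proof of Corollary \ref{cor:mean-median}.

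The repair stays inside your first route. You lose constants in two places: using the two-sided bound to compare medians, and halving $\epsilon$.

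1. Set $\eta_0:=L\sqrt{2\ln 2/D}$, so that $\frac12e^{-D\eta_0^2/(2L^2)}=\frac14$.
2. Run your contradiction argument with the one-sided bound $\Prob(g>M_g+\eta)\le\frac12e^{-D\eta^2/(2L^2)}<\frac14$, valid for every $\eta>\eta_0$. This gives $|M_f-M_g|\le\eta_0$.
3. For $\epsilon>\eta_0$, since $f=g$ on $\Omega$,
\[ \Prob\left(|f-M_f|>\epsilon\right)\le\Prob\left(S^{D-1}\setminus\Omega\right)+\Prob\left(|g-M_g|>\epsilon-\eta_0\right)\le\Prob\left(S^{D-1}\setminus\Omega\right)+e^{-D(\epsilon-\eta_0)^2/(2L^2)}. \]
The identity $(\epsilon-\eta_0)^2=\frac{\epsilon^2}{2}-\eta_0^2+\frac12(\epsilon-2\eta_0)^2$, together with $e^{D\eta_0^2/(2L^2)}=2$, bounds the last term by $2e^{-D\epsilon^2/(4L^2)}$.
4. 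For $\epsilon\le\eta_0$ one has $2e^{-D\epsilon^2/(4L^2)}\ge 2e^{-(\ln 2)/2}=\sqrt2>1$, so the bound is trivial.

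This yields the statement with exactly the stated constants.
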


In fact, once again, what we will actually require is a version of the above result stated in terms of mean rather median. We derive such statement below.

\begin{corollary} \label{cor:mean-median}
    Let $\Omega\subset S^{D-1}$ be a subset of measure at least $3/4$ and let $f:S^{D-1}\to\mathbb R$ be a function such that the restriction of $f$ to $\Omega$ is $L$-Lipschitz (with respect to the Euclidean norm). Suppose additionally that $|f|$ is upper bounded by $R$ on $S^{D-1}$. Then, for every $\epsilon>\varepsilon(D,\Omega,L,R)$,
    \[ \Prob\left(\left|f-\E(f)\right|>\epsilon\right) \leq \Prob\left(S^{D-1}\setminus\Omega\right) + 2\exp\left(-\frac{D\epsilon^2}{16L^2}\right), \]
    where we have set
    \[ \varepsilon(D,\Omega,L,R) = 4\left(R\Prob\left(S^{D-1}\setminus\Omega\right) + \frac{\sqrt{\pi}L}{\sqrt{D}}\right). \]
\end{corollary}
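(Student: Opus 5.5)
The strategy is to reduce to Theorem~\ref{th:Levy-local} by controlling the gap between the median $M_f$ and the mean $\E(f)$. Write $p:=\Prob(S^{D-1}\setminus\Omega)$. By the triangle inequality,
\[ \Prob\left(|f-\E(f)|>\epsilon\right) \leq \Prob\left(|f-M_f|>\epsilon-|\E(f)-M_f|\right). \]
So it is enough to show that $|\E(f)-M_f|\leq \varepsilon(D,\Omega,L,R)/2$. Indeed, for $\epsilon>\varepsilon(D,\Omega,L,R)$ we then have $\epsilon-|\E(f)-M_f|\geq\epsilon/2$. Theorem~\ref{th:Levy-local}, applied with $\epsilon/2$, then gives the bound $p+2\exp(-D\epsilon^2/(16L^2))$, which has exactly the announced constant $16=4\cdot 4$.

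To bound the mean--median gap, I would write $|\E(f)-M_f|\leq\E|f-M_f|$ and split the expectation over $\Omega$ and its complement.

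On the complement, we have $|f|\leq R$. For $D\geq2$ the median also satisfies $|M_f|\leq R$, so this part contributes at most $2Rp$.

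On $\Omega$, I would integrate the tail bound of Theorem~\ref{th:Levy-local}:
\[ \E\left[|f-M_f|\mathbf 1_\Omega\right]\leq\int_0^\infty \Prob\left(|f-M_f|>u,\ \Omega\right)du . \]
The natural bound for the integrand is $\min\{p,\cdot\}$ plus the Gaussian term, but integrating the constant $p$ up to $\infty$ diverges. I would therefore use instead that on $\Omega$, $|f-M_f|\leq 2R$, and cut the integral at $2R$. The $p$ part then contributes at most $2Rp$. The Gaussian part contributes
\[ \int_0^\infty 2e^{-Du^2/(4L^2)}\,du=2\sqrt{\pi}\,L/\sqrt D. \]

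Combining the pieces gives $|\E(f)-M_f|\leq 4Rp+2\sqrt\pi L/\sqrt D$. This is slightly worse than the target $\varepsilon/2=2Rp+2\sqrt\pi L/\sqrt D$ in the $R$ term.

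Fixing this factor is the main (bookkeeping) obstacle. I would first try to avoid double counting: bound $\E|f-M_f|=\int_0^{2R}\Prob(|f-M_f|>u)\,du$ directly, without splitting over $\Omega$. This uses $|f-M_f|\leq 2R$ everywhere together with Theorem~\ref{th:Levy-local}, and gives $2Rp+2\sqrt\pi L/\sqrt D$, which is exactly $\varepsilon/2$.

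Should the constants still not match precisely, the fallback is to require $|\E(f)-M_f|\leq\epsilon/2$ from the weaker estimate. That only changes the threshold $\varepsilon$ by a factor of at most $2$ in the $R$-term, and I would adjust the definition accordingly.
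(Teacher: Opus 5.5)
Your proposal is correct, and your final version is exactly the paper's argument. Integrating $\Prob(|f-M_f|>u)$ directly over $[0,2R]$, using $|f-M_f|\leq 2R$ and Theorem~\ref{th:Levy-local}, gives $|\E(f)-M_f|\leq 2R\Prob(S^{D-1}\setminus\Omega)+2\sqrt{\pi}L/\sqrt{D}=\varepsilon(D,\Omega,L,R)/2$, and the reduction to Theorem~\ref{th:Levy-local} at level $\epsilon/2$ then yields the constant $16$. The initial split over $\Omega$ and its complement (which double counts the $R$-term) and the fallback of enlarging the threshold are therefore unnecessary, and $|M_f|\leq R$ holds without any restriction such as $D\geq 2$.
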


\begin{proof}
    We start by observing that, if $|f|$ is upper bounded by $R$ on $S^{D-1}$, then Theorem \ref{th:Levy-local} implies that
    \begin{equation} \label{eq:mean-median}
        \left|\mathbb E(f)-M_f\right| \leq 2\left(R\Prob\left(S^{D-1}\setminus\Omega\right) + \frac{\sqrt{\pi}L}{\sqrt{D}}\right).
    \end{equation}  
    Indeed, by Jensen's inequality, we first have
    \[ \left|\mathbb E(f)-M_f\right| = \left|\mathbb E(f-M_f)\right| \leq \mathbb E\left(\left|f-M_f\right|\right). \]
    Now, the fact that $|f|\leq R$ clearly implies that $|M_f|\leq R$, and therefore by the triangle inequality $|f-M_f|\leq|f|+|M_f|\leq 2R$. Hence,
    \begin{align*}
        \mathbb E\left(\left|f-M_f\right|\right) & = \int_0^{2R} \Prob\left(\left|f-M_f\right|>t\right)dt \\
        & \leq \int_0^{2R} \left(\Prob\left(S^{D-1}\setminus\Omega\right) + 2\exp\left(-\frac{Dt^2}{4L^2}\right)\right)dt \\
        & \leq 2R\Prob\left(S^{D-1}\setminus\Omega\right) + \frac{2\sqrt{\pi}L}{\sqrt{D}},
    \end{align*}
    where the first inequality is by Theorem \ref{th:Levy-local} and the second inequality is because, for all $\sigma>0$, $2\int_0^{2R}e^{-t^2/2\sigma^2}dt \leq 2\int_0^\infty e^{-t^2/2\sigma^2}dt = \sqrt{2\pi\sigma^2}$. So inequality \eqref{eq:mean-median} does hold.

    As a consequence, we have that, setting
    \[ \varepsilon(D,\Omega,L,R) = 4\left(R\Prob\left(S^{D-1}\setminus\Omega\right) + \frac{\sqrt{\pi}L}{\sqrt{D}}\right), \]
    for any $\epsilon>\varepsilon(D,\Omega,L,R)$,
    \begin{align*}
        \Prob\left(\left|f-\E(f)\right|>\epsilon\right) & \leq \Prob\left(\left|f-M_f\right|>\epsilon-\frac{\varepsilon(D,\Omega,L,R)}{2}\right) \\
        & \leq \Prob\left(\left|f-M_f\right|>\frac{\epsilon}{2}\right) \\
        & \leq \Prob\left(S^{D-1}\setminus\Omega\right) + 2\exp\left(-\frac{D\epsilon^2}{16L^2}\right),
    \end{align*} 
    where the first inequality is by inequality \eqref{eq:mean-median} and the last inequality is by Theorem \ref{th:Levy-local}.
\end{proof}

In what follows, we will work in the space of $d^2\times s$ complex matrices, which we can identify with $\R^{2d^2s}$, and whose Euclidean norm is the Hilbert--Schmidt norm (or Schatten $2$-norm), defined by, for all $X\in M_{d^2,s}(\C)$,
\[ \|X\|_2 := \left(\Tr\left(XX^*\right)\right)^{1/2}. \]

\subsection{Application to moments of partially transposed random states} \hfill\vspace{0.1cm}

\begin{lemma} \label{lem:Lipschitz-constant}
Given $k\in\mathbb N$, define $f_k:M_{d^2,s}(\mathbb C)\to\C$ by, for all $X\in M_{d^2,s}(\mathbb C)$,
\begin{equation} \label{eq:def-f}
f_k(X) := \Tr\left(\left(\left(XX^*\right)^\Gamma\right)^k\right). 
\end{equation}
$f_k$ satisfies, for all $X,Y\in M_{d^2,s}(\mathbb C)$,
\[ \left|f_k(X)-f_k(Y)\right| \leq 2kd\max\left(\left\|(XX^*)^\Gamma\right\|_\infty,\left\|(XX^*)^\Gamma\right\|_\infty\right)^{k-1} \max\left(\|X\|_\infty,\|Y\|_\infty\right) \|X-Y\|_2. \]
\end{lemma}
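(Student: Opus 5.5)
The plan is to split the difference into a telescoping sum in the partially transposed matrices, and then to bound the variation of the partial transposition itself, which is where the factor $d$ enters. Set $A=(XX^*)^\Gamma$ and $B=(YY^*)^\Gamma$; both are Hermitian. (The statement presumably intends the maximum of $\|(XX^*)^\Gamma\|_\infty$ and $\|(YY^*)^\Gamma\|_\infty$.) The algebraic identity
\[ A^k-B^k=\sum_{j=0}^{k-1}A^{j}(A-B)B^{k-1-j} \]
gives, via $|\Tr(Z)|\leq \|Z\|_1$ and Hölder's inequality $\|PZQ\|_1\leq\|P\|_\infty\|Z\|_1\|Q\|_\infty$,
\[ |f_k(X)-f_k(Y)|\leq k\max\left(\|A\|_\infty,\|B\|_\infty\right)^{k-1}\|A-B\|_1 . \]
It therefore remains to show $\|A-B\|_1\leq 2d\max(\|X\|_\infty,\|Y\|_\infty)\|X-Y\|_2$.

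To do this, I first pass from the trace norm to the Hilbert--Schmidt norm. Since $A-B$ is a $d^2\times d^2$ matrix, $\|A-B\|_1\leq d\|A-B\|_2$. Partial transposition merely permutes matrix entries, so it is an isometry for $\|\cdot\|_2$, and hence $\|A-B\|_2=\|XX^*-YY^*\|_2$.

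Next I write $XX^*-YY^*=X(X-Y)^*+(X-Y)Y^*$. The bound $\|PZ\|_2\leq\|P\|_\infty\|Z\|_2$ then yields
\[ \|XX^*-YY^*\|_2\leq \left(\|X\|_\infty+\|Y\|_\infty\right)\|X-Y\|_2\leq 2\max\left(\|X\|_\infty,\|Y\|_\infty\right)\|X-Y\|_2 . \]
Combining these bounds gives exactly the stated factor $2kd$.

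There is no real obstacle here. The only point that needs care is where the dimensional factor $d$ enters. Partial transposition is not contractive in trace norm; one can only say $\|Z^\Gamma\|_1\leq d\|Z\|_1$. Routing the argument through the Hilbert--Schmidt norm, where $\Gamma$ is an isometry, and paying $\sqrt{d^2}=d$ once for the norm comparison, produces precisely the factor $d$. Everything else is the standard telescoping estimate.
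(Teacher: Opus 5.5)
Your proposal is correct and follows essentially the same route as the paper: the telescoping identity combined with H\"older's inequality, the comparison $\|A-B\|_1\leq d\|A-B\|_2$, the fact that $\Gamma$ is an isometry for the Hilbert--Schmidt norm, and the splitting $XX^*-YY^*=X(X-Y)^*+(X-Y)Y^*$. You are also right that the first maximum in the statement should be of $\|(XX^*)^\Gamma\|_\infty$ and $\|(YY^*)^\Gamma\|_\infty$, which is what the paper's proof actually uses.
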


\begin{proof}
    Set $A=(XX^*)^\Gamma$ and $B=(YY^*)^\Gamma$. Observe that we can re-write $A^k-B^k$ as a telescoping sum, namely
    \[ A^k-B^k = \sum_{j=0}^{k-1} A^j(A-B)B^{k-1-j}. \]
    We thus get by the triangle inequality
    \[ \left|\Tr\big(A^k\big)-\Tr\big(B^k\big)\right| = \left|\Tr\left(\sum_{j=0}^{k-1} A^j(A-B)B^{k-1-j}\right)\right| \leq \sum_{j=0}^{k-1} \left|\Tr\left(A^j(A-B)B^{k-1-j}\right)\right| .\]
    Now, for each $0\leq j\leq k-1$, we have by H\"{o}lder's inequality
    \begin{align*}
        \left|\Tr\left(A^j(A-B)B^{k-1-j}\right)\right| & \leq \big\|A^j\big\|_\infty \big\|B^{k-1-j}\big\|_\infty \|A-B\|_1 \\
        & = \|A\|_\infty^j\|B\|_\infty^{k-1-j}\|A-B\|_1 \\
        & \leq \max\left(\|A\|_\infty,\|B\|_\infty\right)^{k-1}\|A-B\|_1.
    \end{align*} 
    And we therefore obtain
    \[ \left|\Tr\big(A^k\big)-\Tr\big(B^k\big)\right| \leq k\max\left(\|A\|_\infty,\|B\|_\infty\right)^{k-1}\|A-B\|_1. \]

    Next, using the fact that, for any $M\in M_{d^2}(\mathbb C)$, $\|M\|_1\leq d\|M\|_2$ and $\|M^\Gamma\|_2=\|M\|_2$, we get
    \[ \|A-B\|_1 \leq d\|A-B\|_2 = d\left\|(XX^*)^\Gamma-(YY^*)^\Gamma\right\|_2 = d\|XX^*-YY^*\|_2. \]
    And finally, observing that $XX^*-YY^*=X(X^*-Y^*)+(X-Y)Y^*$, we have by the triangle inequality and H\"{o}lder's inequality
    \begin{align*}
        \|XX^*-YY^*\|_2 & \leq \|X(X^*-Y^*)\|_2 + \|(X-Y)Y^*\|_2 \\ 
        & \leq \|X\|_\infty\|X^*-Y^*\|_2+\|Y^*\|_\infty\|X-Y\|_2 \\
        & \leq 2\max\left(\|X\|_\infty,\|Y\|_\infty\right)\|X-Y\|_2.
    \end{align*}
    Putting everything together yields precisely the announced result.
\end{proof}

Lemma \ref{lem:Lipschitz-constant} shows that the function $f_k$, as defined in equation \eqref{eq:def-f}, has a `small' Lipschitz constant (with respect to the Hilbert--Schmidt norm) on subsets of the unit sphere of $M_{d^2,s}(\mathbb C)$ (for the Hilbert--Schmidt norm) where $X\mapsto\|X\|_\infty$ and $X\mapsto\|(XX^*)^\Gamma\|_\infty$ are upper bounded by a `small' value. So we now turn to upper bounding the typical value of these two functions for $X\in M_{d^2,s}(\mathbb C)$ uniformly distributed on the unit sphere for the Hilbert--Schmidt norm.

\begin{proposition} \label{prop:large-set}
Let $X\in M_{d^2,\lambda d^2}(\C)$ be uniformly distributed on the unit sphere for the Hilbert--Schmidt norm. Then, for any $k\in\mathbb N$, as soon as $d\geq (2+2/\sqrt{\lambda})^k$, we have
\[ \Prob\left(\|X\|_\infty > \frac{2}{d^{1-1/k}}\right) \leq e^{-\lambda d^{2+2/k}}. \]
\end{proposition}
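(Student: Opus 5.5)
Since the Hilbert--Schmidt unit sphere of $M_{d^2,\lambda d^2}(\C)$ is the Euclidean unit sphere $S^{D-1}$ with $D=2\lambda d^4$, the plan is to write $X=G/\|G\|_2$ with $G\in M_{d^2,s}(\C)$ a standard Gaussian matrix, $s=\lambda d^2$. Then
\[ \|X\|_\infty=\frac{\|G\|_\infty}{\|G\|_2}. \]
We therefore have to show that, outside an event of probability at most $e^{-\lambda d^{2+2/k}}$, the numerator is small while the denominator is not too small. Concretely, we aim for two bounds: $\|G\|_2\geq \sqrt{d^2s}/\sqrt2=\sqrt{\lambda}d^2/\sqrt2$, and $\|G\|_\infty\leq \sqrt2\,\sqrt{\lambda}\,d^{1+1/k}$. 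Their ratio is then at most $2/d^{1-1/k}$, as required. (The constants may need slight rebalancing, but this is the shape.)

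\emph{Denominator.} The quantity $\|G\|_2^2$ is a sum of $2d^2s$ squares of real $\mathcal N(0,1/2)$ variables, so it is a Gamma/chi-square variable with mean $d^2s=\lambda d^4$. A standard Chernoff lower-tail bound gives $\Prob(\|G\|_2^2\leq \lambda d^4/2)\leq e^{-c\lambda d^4}$. Alternatively, one can use Gaussian concentration for the $1$-Lipschitz function $\|\cdot\|_2$, whose mean is about $\sqrt{\lambda}d^2$. Since $k\geq 1$ gives $d^4\geq d^{2+2/k}$, this failure probability is at most $\frac12 e^{-\lambda d^{2+2/k}}$ once $d$ is large, and the hypothesis $d\geq(2+2/\sqrt\lambda)^k$ should absorb the constants.

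\emph{Numerator.} The function $G\mapsto\|G\|_\infty$ is $1$-Lipschitz for the Hilbert--Schmidt norm. With the paper's normalisation, each real coordinate has variance $1/2$, so Gaussian concentration gives $\Prob(\|G\|_\infty\geq \E\|G\|_\infty+t)\leq e^{-t^2}$. The standard Gordon/Chevet bound for complex Gaussian matrices gives $\E\|G\|_\infty\leq \sqrt{d^2}+\sqrt{s}=(1+\sqrt\lambda)d$. We then choose $t=\sqrt{\lambda}\,d^{1+1/k}$, which yields a tail of $e^{-\lambda d^{2+2/k}}$, exactly the exponent in the statement. It remains to check that
\[ (1+\sqrt\lambda)d+\sqrt\lambda d^{1+1/k}\leq \sqrt{2\lambda}\,d^{1+1/k}, \]
or an analogous inequality after adjusting constants. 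This reduces to $d^{1/k}\geq C(1+1/\sqrt\lambda)$, which is precisely where the hypothesis $d\geq(2+2/\sqrt\lambda)^k$, i.e.\ $d^{1/k}\geq 2+2/\sqrt\lambda$, enters.

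\emph{Main obstacle.} The difficulty is pure constant bookkeeping: the final bound has no prefactor, so after taking a union bound the two failure probabilities must be arranged to sum to at most $e^{-\lambda d^{2+2/k}}$. The cleanest way I see is to avoid splitting the probability. Since $X$ is independent of $\|G\|_2$, for a unit-sphere statement one can condition (or use the Gaussian-to-sphere comparison) so that only the numerator event costs probability. Alternatively, one can use slack in $t$, taking $t$ slightly larger so that the tail becomes $\frac12e^{-\lambda d^{2+2/k}}$, and absorb the extra factor using $d^{1/k}\geq 2+2/\sqrt\lambda$. I would first try the latter route, allocating the factor $2$ in the target $2/d^{1-1/k}$ between $\sqrt2$ from the denominator and $\sqrt2$ from the numerator.
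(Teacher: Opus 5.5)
The union-bound route you say you would pursue does not prove the statement as written, and here the constants are the whole content. The range $d\ge(2+2/\sqrt\lambda)^k$, the threshold $2/d^{1-1/k}$ and the prefactor-free bound $e^{-\lambda d^{2+2/k}}$ are all fixed, so nothing can be absorbed ``once $d$ is large''.

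With your $\sqrt2/\sqrt2$ split, the numerator check $(1+\sqrt\lambda)d+\sqrt\lambda d^{1+1/k}\le\sqrt{2\lambda}\,d^{1+1/k}$ needs $d^{1/k}\ge(1+1/\sqrt\lambda)/(\sqrt2-1)\approx2.41(1+1/\sqrt\lambda)$. The hypothesis only gives $d^{1/k}\ge2(1+1/\sqrt\lambda)$.

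Your denominator claim $e^{-c\lambda d^4}\le\frac12e^{-\lambda d^{2+2/k}}$ is false for $k=1$: the Chernoff rate at threshold $\frac12$ is $\ln2-\frac12\approx0.19<1$. (For $k=1$ the statement is in fact trivial, since $\|X\|_\infty\le\|X\|_2=1<2$.) For $k\ge2$ the claim needs $d^{2-2/k}\gtrsim1/c$, which is again not implied.

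Rebalancing within your toolkit does not help either. Suppose you require $\|G\|_2^2\ge\theta\lambda d^4$. Near the boundary of the hypothesis, $(1+\sqrt\lambda)d\approx\frac12\sqrt\lambda d^{1+1/k}$. So Gaussian concentration around the Gordon bound gives $e^{-\lambda d^{2+2/k}}$ only if $2\sqrt\theta-\frac12\ge1$, i.e.\ $\theta\ge9/16$. But then $\theta-1-\ln\theta<0.14$. By Cram\'er's theorem, the failure event $\{\|G\|_2^2<\theta\lambda d^4\}$ genuinely has probability of order $e^{-(\theta-1-\ln\theta)\lambda d^4}$, up to polynomial factors. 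For instance, $k=2$, $d=5$, $\lambda=100$ satisfies the hypothesis. Yet the numerator bound needs $\theta\ge0.556$, while the failure probability of the denominator exceeds $e^{-\lambda d^3}$ unless $\theta\lesssim0.49$.

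The missing idea is not to pay for $\|G\|_2$ in probability at all. The paper applies L\'evy's lemma on $S^{D-1}$, $D=2\lambda d^4$, directly to the $1$-Lipschitz function $\|\cdot\|_\infty$. This gives $\Prob(\|X\|_\infty>\E\|X\|_\infty+\epsilon)\le e^{-\lambda d^4\epsilon^2}$, which is exactly $e^{-\lambda d^{2+2/k}}$ at $\epsilon=d^{-(1-1/k)}$. Independence of $X$ and $\|G\|_2$ is used only for means: $\E\|X\|_\infty=\E\|G\|_\infty/\E\|G\|_2\le(1+\sqrt\lambda)d/(\sqrt\lambda d^2/2)$. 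The hypothesis is precisely what makes this at most $d^{-(1-1/k)}$.

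Alternatively, the conditioning option you mention but set aside can be made precise. Set $a=2/d^{1-1/k}$. Independence gives $\Prob(\|X\|_\infty>a)\,\Prob(\|G\|_2\ge r)\le\Prob(\|G\|_\infty>ar)$. Take $r=0.8\sqrt\lambda d^2$; then $\Prob(\|G\|_2\ge r)\ge\frac12$ by Chernoff, since $\lambda d^4=sd^2\ge9$. The deviation satisfies $1.6\sqrt\lambda d^{1+1/k}-(1+\sqrt\lambda)d\ge1.1\sqrt\lambda d^{1+1/k}$. This yields $\Prob(\|X\|_\infty>a)\le2e^{-1.21\lambda d^{2+2/k}}\le e^{-\lambda d^{2+2/k}}$, because $\lambda d^{2+2/k}=s\,d^{2/k}\ge4$.
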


\begin{proof}
    We apply Theorem \ref{th:Levy} with $D=2\lambda d^4$ and $f=\|\cdot\|_\infty$, which is clearly $1$-Lipschitz. We get that, for all $\epsilon>0$,
    \begin{equation} \label{eq:deviation-infty}
    \Prob\left(\|X\|_\infty > \E\left(\|X\|_\infty\right)+\epsilon\right) \leq e^{-\lambda d^4\epsilon^2}. 
    \end{equation}
    The only thing that remains to be done is to estimate $\E(\|X\|_\infty)$. We know that $X$ can be written as $X=G/\|G\|_2$, where $G\in M_{d^2,\lambda d^2}(\C)$ is a matrix whose entries are independent complex Gaussians with mean $0$ and variance $1$. It is well-known that, for such Gaussian matrix $G$, $\|G\|_2$ and $G/\|G\|_2$ are independent, which implies that
    \[ \E\left(\|X\|_\infty\right) = \frac{\E\left(\|G\|_\infty\right)}{\E\left(\|G\|_2\right)}. \]
    Now, we have (see e.g.~\cite[Proposition 6.31]{AubrunSzarek2017} for the first estimate and \cite[Proposition A.1]{AubrunSzarek2017} for the second one)
    \[ \E\left(\|G\|_\infty\right) \leq \sqrt{d^2}+\sqrt{\lambda d^2} = (1+\sqrt{\lambda})d \quad \text{and} \quad \E\left(\|G\|_2\right) \geq \sqrt{\lambda d^4-1} \geq \frac{\sqrt{\lambda}}{2}d^2 . \]
    We thus have
    \[ \E\left(\|X\|_\infty\right) \leq \frac{2+2/\sqrt{\lambda}}{d}. \]
    Plugging this upper bound into the concentration inequality \eqref{eq:deviation-infty} and taking $\epsilon=1/d^{1-1/k}$ gives exactly the announced result, after observing that $(2+2/\sqrt{\lambda})/d\leq 1/d^{1-1/k}$ for $d\geq(2+2/\sqrt{\lambda})^k$.
\end{proof}

\begin{lemma} \label{lem:op-norm-W-Gamma}
    Fix $\lambda>0$ and let $W\sim\mathcal W_{d^2,\lambda d^2}$. Then, 
    \[ \E\left(\big\|W^\Gamma\big\|_\infty\right) \leq C_0\lambda^2 d^2, \]
    where $C_0>0$ is an absolute constant.
\end{lemma}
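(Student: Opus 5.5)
We need a bound $\E\|W^\Gamma\|_\infty \le C_0\lambda^2 d^2$. The plan is to use the moment method with moments of order growing logarithmically in $d$. For any even $2q$, Jensen gives $\E\|W^\Gamma\|_\infty\le (\E\|W^\Gamma\|_\infty^{2q})^{1/(2q)}\le (\E\Tr((W^\Gamma)^{2q}))^{1/(2q)}$. Proposition \ref{prop:first_moment} provides the exact formula
\[ \E\Tr\left((W^\Gamma)^{2q}\right)=\sum_{\pi\in S_{2q}} d^{\#(\gamma^{-1}\pi)+\#(\pi\gamma)}s^{\#(\pi)}, \]
with $s=\lambda d^2$. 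We will bound each exponent crudely but uniformly. By inequality \eqref{eq:triangle-inequality-gamma-id}, $\#(\gamma^{-1}\pi)+\#(\pi)\le 2q+1$ and $\#(\pi\gamma)+\#(\pi)\le 2q+1$. Hence each term is at most $\lambda^{\#(\pi)}d^{4q+2}\le \max(1,\lambda)^{2q}d^{4q+2}$.

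The remaining task is to control the number of terms without paying $(2q)!$. To do so, group permutations by the genus defect $g(\pi):=(4q+2)-[\#(\gamma^{-1}\pi)+\#(\pi\gamma)+2\#(\pi)]$. This defect is even by the parity remark in Section \ref{sec:combinatorics}. The $g=0$ terms number at most $|NC(2q)|\le 4^{2q}$. For general $g$, we use the standard bound that the number of $\pi\in S_{2q}$ with $\#(\gamma^{-1}\pi)+\#(\pi)=2q+1-2h$ is at most $\mathrm{Cat}(2q)\,(Cq^{2})^{h}$ (a Harer--Zagier type genus expansion estimate). A term with defect $g$ forces at least one of the two triangle inequalities to be off by $\ge g/2$, so it carries a factor $d^{-g}$. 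Summing then gives
\[ \E\Tr\left((W^\Gamma)^{2q}\right)\le 4^{2q}\max(1,\lambda)^{2q}d^{4q+2}\sum_{h\ge0}\left(\frac{Cq^2}{d^2}\right)^{h}. \]
This sum is $O(1)$ provided $q\le c\,d$.

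To conclude, choose $q=\lceil \log d\rceil$. Then $(d^{4q+2})^{1/(2q)}=d^2\cdot d^{1/q}\le e\,d^2$, and the overall $2q$-th root gives $\E\|W^\Gamma\|_\infty\le C\max(1,\lambda)d^2$. Since the statement allows $C_0\lambda^2$, we need to absorb the case $\lambda<1$ separately. A simpler route there is to note $\|W^\Gamma\|_\infty\le\|W^\Gamma\|_2=\|W\|_2$, which may not suffice. Alternatively, since $\lambda$ is fixed, we can let $C_0$ depend on the regime and absorb $\max(1,\lambda)\le C\lambda^2$ for $\lambda\ge1$. For $\lambda<1$ we would track $\lambda^{\#(\pi)}$ more carefully, using $\#(\pi)\ge1$; honestly, the stated $\lambda^2$ form looks like it is only meant as a convenient upper bound.

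The main obstacle is the counting step: bounding the number of permutations with a given defect by roughly $\mathrm{Cat}(2q)(Cq^2/d^2)^h$ when summed with the $d^{-2h}$ weights. If a clean genus-counting citation is unavailable, the fallback is a comparison argument. We would use the fact that $W^\Gamma$ has the same moments, up to lower order, as a shifted semicircular-type matrix, or use the known bound $\E\|W^\Gamma\|_\infty\lesssim \|\E\|+$ (Gaussian chaos bounds). The simplest fallback is a direct $\epsilon$-net plus Hanson--Wright bound on $\sup_{x,y}\langle x,W^\Gamma y\rangle$ over unit vectors. Here $\langle x, W^\Gamma y\rangle$ is a Gaussian quadratic form in $G$ whose operator norm is controlled by $\max_{\|x\|=1}\|x^\Gamma\|$-type quantities of order $d^0$ and Hilbert--Schmidt norm of order $\sqrt{s}$. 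A net of size $9^{2d^2}$ would then yield an $O(d^2)$ bound with constants polynomial in $\lambda$.
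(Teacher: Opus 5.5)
Your strategy is the same as the paper's. You bound $\E\|W^\Gamma\|_\infty$ by $(\E\Tr((W^\Gamma)^{k}))^{1/k}$ for a moment order growing with $d$, and organise the sum of Proposition~\ref{prop:first_moment} by the genus of $\pi$ relative to $\gamma$. The paper takes $k\approx d^{2/3}$. Your even order $2q$ is the cleaner choice, since $\Tr((W^\Gamma)^k)=\|W^\Gamma\|_k^k$ requires $k$ even. One step is wrong as stated, and your worry about $\lambda<1$ is well founded.

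\textbf{The counting bound.} The estimate you call standard is false. For fixed $h$, the number of $\pi\in S_n$ with $\#(\gamma^{-1}\pi)+\#(\pi)=n+1-2h$ grows like $n^{3h}\mathrm{Cat}(n)$, not $n^{2h}\mathrm{Cat}(n)$. Already for $h=1$ it equals
\[ \frac{(2n-3)!}{6\,(n-2)!\,(n-3)!}=\frac{(n+1)n(n-1)(n-2)}{24(2n-1)}\,\mathrm{Cat}(n), \]
which gives $1,10,70,420$ for $n=3,4,5,6$. So no constant $C$ makes $\mathrm{Cat}(n)(Cn^2)^h$ an upper bound. What is true, and what the paper uses, is a bound of the form $4^n n^{3h+1}$. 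Write $h_1,h_2$ for the genera of $\pi$ relative to $\gamma$ and $\gamma^{-1}$, and simply use $d^{-2h_1-2h_2}\le d^{-2h_1}$. This gives
\[ \E\Tr\left(\left(W^\Gamma\right)^{2q}\right)\le \max(1,\lambda)^{2q}\,4^{2q}\,2q\,d^{4q+2}\sum_{h=0}^{q}\left(\frac{(2q)^3}{d^2}\right)^h . \]
The series is now controlled only for $q\lesssim d^{2/3}$, not $q\le cd$. Your choice $q=\lceil\log d\rceil$ is well inside that range, and small $d$ only affect the constant. So your conclusion $\E\|W^\Gamma\|_\infty\le C\max(1,\lambda)\,d^2$, with $C$ absolute, survives once the citation is corrected.

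\textbf{The case $\lambda<1$.} No finer bookkeeping of $\lambda^{\#(\pi)}$ can produce the $\lambda^2$ (or even a $\lambda$) prefactor, because with an absolute $C_0$ the statement is false for small $\lambda$. Since $\mathrm{rank}(W)\le s$,
\[ \left\|W^\Gamma\right\|_\infty^2\ge\frac{\Tr\left(\left(W^\Gamma\right)^2\right)}{d^2}=\frac{\Tr\left(W^2\right)}{d^2}\ge\frac{\left(\Tr W\right)^2}{s\,d^2}. \]
Hence $\E\|W^\Gamma\|_\infty\ge\E\Tr(W)/(d\sqrt{s})=\sqrt{\lambda}\,d^2$, which exceeds $C_0\lambda^2d^2$ as soon as $\lambda<C_0^{-2/3}$.

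So the bound $C\max(1,\lambda)\,d^2$ you obtain is the correct form of the lemma. The paper's proof has the same blind spot. It bounds $\lambda^{\#(\pi)}$ by $\lambda^k$, which requires $\lambda\ge 1$. It also ends with $C\lambda d^2$ rather than $C\lambda^2 d^2$, and the lower bound above rules that out too for $\lambda<C^{-2}$.

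The only downstream use is in Proposition~\ref{prop:large-set'}, where a bound of order $O_\lambda(d^2)$ suffices. There, only the $\lambda$-dependence of the constants changes.
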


\begin{proof}
    We start from the following simple observation: for any $k\in\mathbb N$, to be determined later, we have by ordering of Schatten norms and by Jensen's inequality
    \begin{equation}  \label{eq:ub-op-norm} \E\left(\big\|W^\Gamma\big\|_\infty\right) \leq \E\left(\big\|W^\Gamma\big\|_k\right) \leq \left(\E\left(\Tr\left(\left(W^\Gamma\right)^k\right)\right)\right)^{1/k}. \end{equation}
    We are now going to derive a rough upper bound on the right hand side of the above inequality, valid for $k$ growing sufficiently slowly with $d$. We recall that Proposition \ref{prop:first_moment} gives
    \[ \E \left(\Tr\left(\left(W^\Gamma\right)^k\right)\right) = \sum_{\pi\in S_k} \lambda^{\#(\pi)}d^{\#(\gamma^{-1}\pi) + \#(\pi\gamma) + 2\#(\pi)}. \]
    As explained in detail in the proof of Theorem \ref{th:first_moment_asymptotic}, for any $\pi\in S_k$, $\#(\gamma^{-1}\pi)+\#(\pi)\leq k+1$ and $\#(\pi\gamma)+\#(\pi)\leq k+1$. For all $0\leq r\leq\lfloor k/2\rfloor$, we define the following subsets of $S_k$:
    \begin{align*} S_k^\leftarrow(r) & := \left\{\pi\in S_k : \#(\gamma^{-1}\pi)+\#(\pi)=k+1-2r\right\}, \\
    S_k^\rightarrow(r) & := \left\{\pi\in S_k : \#(\gamma\pi)+\#(\pi)=k+1-2r\right\}. \end{align*}
    We can thus rewrite
    \[ \E \left(\Tr\left(\left(W^\Gamma\right)^k\right)\right) = \sum_{r=0}^{\lfloor k/2\rfloor} \left(\sum_{r'=0}^r \left(\sum_{\pi\in S_k^\leftarrow(r')\cap S_k^\rightarrow(r-r')} \lambda^{\#(\pi)}\right)\right)d^{2k+2-2r}. \]
    Now, for any $0\leq r\leq\lfloor k/2\rfloor$ and $0\leq r'\leq r$, we clearly have
    \[ \sum_{\pi\in S_k^\leftarrow(r')\cap S_k^\rightarrow(r-r')} \lambda^{\#(\pi)} \leq \left|S_k^\leftarrow(r')\cap S_k^\rightarrow(r-r')\right|\lambda^k \leq \left|S_k^\leftarrow(r')\right|\lambda^k. \]
    And we know from \cite[Lemma 12]{Mon13} that, for any $0\leq r\leq\lfloor k/2\rfloor$,
    \[ \left|S_k^\leftarrow(r)\right| \leq 4^{k}k^{3r+1}. \]
    Hence, putting everything together, we get
    \begin{align*} 
    \E \left(\Tr\left(\left(W^\Gamma\right)^k\right)\right) \leq \left(4\lambda d^{2}\right)^kkd^2 \left(\sum_{r=0}^{\lfloor k/2\rfloor} \frac{1}{d^{2r}}\left(\sum_{r'=0}^\delta k^{3r'}\right)\right) \leq \left(4\lambda d^{2}\right)^kkd^2 \left(\sum_{r=0}^{\lfloor k/2\rfloor}(r+1)\left(\frac{k^3}{d^2}\right)^r\right) .
    \end{align*}
    We now just have to note that, if $k\leq d^{2/3}$, then $(k^3/d^2)^r\leq 1$ for all $0\leq r\leq\lfloor k/2\rfloor$, and if $k\geq 3$, then $(\lfloor k/2\rfloor+1)(\lfloor k/2\rfloor+2)/2\leq k^2$. We therefore have, for all $3\leq k\leq d^{2/3}$, 
    \[ \E \left(\Tr\left(\left(W^\Gamma\right)^k\right)\right) \leq \left(4\lambda d^{2}\right)^k k^3d^2. \]
    Inserting the above upper bound into inequality \eqref{eq:ub-op-norm}, with $k=d^{2/3}$, gives
    \[  \E\left(\big\|W^\Gamma\big\|_\infty\right) \leq 4\lambda d^{2}d^{4/d^{2/3}} \leq 4\times 5^{4/5^{2/3}}\lambda d^2, \]
    where the last inequality is because $\max\{d^{4/d^{2/3}},\,d\in\mathbb N\}$ is attained for $d=5$. 
\end{proof}

\begin{remark}
    We mention that we could have established a more precise version of Lemma \ref{lem:op-norm-W-Gamma} by using \cite[Proposition 7.1]{Aubrun12}. Indeed, working a bit on the latter (in a similar fashion to what we did in the proof of Lemma \ref{lem:op-norm-W-Gamma}), one can derive that
    \[ \E\left(\left\|\left(W-\mathrm{diag}(W)\right)^\Gamma\right\|\right) \underset{d\to\infty}{=} \left(2\sqrt{\lambda}+o(1)\right)d^2. \]
    Next, we just have to observe that each entry $W_{ii}$, $1\leq i\leq d^2$, of $\mathrm{diag}(W)$ is distributed as $\|g_i\|^2$, for $g_1,\ldots,g_{d^2}\in\C^{\lambda d^2}$ independent standard Gaussian vectors. This implies that
    \[ \E\left(\left\|\mathrm{diag}(W)^\Gamma\right\|\right) = \E\left(\left\|\mathrm{diag}(W)\right\|\right) = \E\left(\max_{1\leq i\leq d^2} \|g_i\|^2\right) \underset{d\to\infty}{=} \left(\lambda+o(1)\right)d^2. \]
    And combining these two estimates gives 
    \[ \E\left(\left\|W^\Gamma\right\|\right) \leq \E\left(\left\|\left(W-\mathrm{diag}(W)\right)^\Gamma\right\|\right) + \E\left(\left\|\mathrm{diag}(W)^\Gamma\right\|\right) \underset{d\to\infty}{=} \left(\lambda+2\sqrt{\lambda}+o(1)\right)d^2. \]
    However, we do not need such tight upper bound for our purposes, which is why we chose to go through a rougher but more direct argumentation.
\end{remark}

\begin{proposition} \label{prop:large-set'}
Let $X\in M_{d^2,\lambda d^2}(\C)$ be uniformly distributed on the unit sphere for the Hilbert--Schmidt norm. Then, for any $k\in\mathbb N$, as soon as $d\geq\max(2+2/\sqrt{\lambda},\sqrt{C_0})^{k}$, we have
\[ \Prob\left(\big\|(XX^*)^\Gamma\big\|_\infty > \frac{2}{d^{2-2/k}}\right) \leq 3e^{-\lambda d^{2+2/k}/32}. \]
\end{proposition}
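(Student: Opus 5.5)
We use the Gaussian representation $X=G/\|G\|_2$ with $G\in M_{d^2,\lambda d^2}(\C)$ standard Gaussian, so that $(XX^*)^\Gamma = W^\Gamma/\Tr(W)$ with $W=GG^*\sim\mathcal W_{d^2,\lambda d^2}$ and $\Tr(W)=\|G\|_2^2$. The event $\|(XX^*)^\Gamma\|_\infty>2/d^{2-2/k}$ is then contained in the union of two events: $\Tr(W)<\lambda d^4/2$, and $\|W^\Gamma\|_\infty>\lambda d^{2+2/k}$. Indeed, on the complement of both, $\|(XX^*)^\Gamma\|_\infty\leq \lambda d^{2+2/k}/(\lambda d^4/2)=2/d^{2-2/k}$. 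We bound the two probabilities separately and take a union bound; the factor $3$ in the statement suggests one term contributes $e^{-\cdots}$ and the other $2e^{-\cdots}$, or we simply lose constants.

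\emph{First event.} $\|G\|_2$ is a $1$-Lipschitz function of a standard Gaussian vector in $\R^{2\lambda d^4}$ (with real/imaginary variance $1/2$), with $\E\|G\|_2\geq \sqrt{\lambda}d^2/\sqrt{2}\cdot\sqrt2\cdot(1-o(1))$, in fact $\geq\sqrt{\lambda d^4-1}$ as used in Proposition \ref{prop:large-set}. Gaussian concentration, with variance parameter $1/2$ per real coordinate, gives $\Prob(\|G\|_2<\E\|G\|_2-t)\leq e^{-t^2}$. Taking $t$ of order $\sqrt{\lambda}d^2$, so that $\E\|G\|_2-t\geq\sqrt{\lambda/2}\,d^2$, yields a bound $e^{-c\lambda d^4}$, which is much smaller than $e^{-\lambda d^{2+2/k}/32}$ for $k\geq1$ once $d$ is large. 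A chi-square tail bound would work equally well.

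\emph{Second event, the main step.} The map $G\mapsto\|(GG^*)^\Gamma\|_\infty$ is not globally Lipschitz, so we pass through square roots. We use $\|(GG^*)^\Gamma\|_\infty^{1/2}$, or alternatively bound $\|W^\Gamma\|_\infty\leq$ something controlled by $\|G\|_\infty$. The cleanest route I would try first is to use that $M\mapsto\|M^\Gamma\|_\infty$ is bounded by $d\,\|M\|_\infty$; this is too lossy. Instead, I would use Lemma \ref{lem:op-norm-W-Gamma} for the mean, $\E\|W^\Gamma\|_\infty\leq C_0\lambda^2d^2$, together with concentration of $F(G)=\|(GG^*)^\Gamma\|_\infty$ restricted to the set $\{\|G\|_\infty\leq R\}$. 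By the argument in Lemma \ref{lem:Lipschitz-constant} with $k=1$ and the operator norm in place of the trace, using $\|M^\Gamma\|_\infty\leq\|M^\Gamma\|_2=\|M\|_2$, we get $|F(G)-F(H)|\leq 2\max(\|G\|_\infty,\|H\|_\infty)\|G-H\|_2$. On $\{\|G\|_\infty\leq 2(1+\sqrt\lambda)d\}$, whose complement has probability $e^{-c\lambda d^2}$ by Gaussian concentration of $\|G\|_\infty$ (the unnormalized analogue of Proposition \ref{prop:large-set}), $F$ is $L=O(d)$-Lipschitz. The Gaussian version of the restricted Lévy lemma (Theorem \ref{th:Levy-local} / Corollary \ref{cor:mean-median}) then gives
\[
\Prob\bigl(F>\E F+\epsilon\bigr)\leq \Prob(\|G\|_\infty>2(1+\sqrt\lambda)d)+2e^{-c\epsilon^2/d^2}.
\]
We take $\epsilon=\lambda d^{2+2/k}/2$ and use $C_0\lambda^2d^2\leq\lambda d^{2+2/k}/2$ for $d\geq(\sqrt{C_0}\cdots)^k$, which explains the hypothesis $d\geq\sqrt{C_0}^{\,k}$. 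This produces a tail $e^{-c\lambda^2 d^{2+4/k}}$, better than needed.

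The main obstacle is making the restricted-Lipschitz step rigorous. The restricted Lévy lemma is stated on the sphere, and the bad set for $\|G\|_\infty$ must itself have exponentially small probability of the right order. I would work directly on the sphere instead: apply the same argument to $X$, using Proposition \ref{prop:large-set} for $\|X\|_\infty\leq 2/d^{1-1/k}$ (probability $e^{-\lambda d^{2+2/k}}$) and $D=2\lambda d^4$. Then $L=4/d^{1-1/k}$ and Corollary \ref{cor:mean-median} gives $2\exp(-D\epsilon^2/(16L^2))$ with $\epsilon=1/d^{2-2/k}$, i.e.\ $2e^{-\lambda d^{2+2/k}/128}$-type bounds. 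The mean term $\E\|(XX^*)^\Gamma\|_\infty\lesssim C_0\lambda^2 d^2/(\lambda d^4)$, obtained by independence of $\Tr W$ and $W/\Tr W$, is absorbed using $d\geq\sqrt{C_0}^{\,k}$. Matching the exact constant $1/32$ may require adjusting $\epsilon$, and the bad-set probability plus $2$ exponentials gives the factor $3$.
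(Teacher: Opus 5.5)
Your final, sphere-based route is exactly the paper's proof. It applies Corollary \ref{cor:mean-median} with $D=2\lambda d^4$ and $R=1$, uses the good set $\{\|X\|_\infty\leq 2/d^{1-1/k}\}$ from Proposition \ref{prop:large-set} with Lipschitz constant $4/d^{1-1/k}$ on it, takes $\epsilon=1/d^{2-2/k}$, and bounds the mean via the independence of $\|G\|_2$ and $G/\|G\|_2$ plus Lemma \ref{lem:op-norm-W-Gamma}. You should drop the Gaussian detour: as written it cannot work, because thresholding $\|G\|_\infty$ at scale $d$ only makes the bad set $e^{-cd^2}$-small, which is too weak for the target $e^{-c\lambda d^{2+2/k}}$.

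Two constant-level points. Your exponent $1/128$ is what this computation actually yields; the paper's $1/32$ appears to be an arithmetic slip. Adjusting $\epsilon$ will not recover it, since the mean bound leaves room only for $\epsilon\leq 1/d^{2-2/k}$. Also, to absorb the mean under exactly $d\geq\sqrt{C_0}^{\,k}$, you need the bound $C_0\lambda d^2$ that the proof of Lemma \ref{lem:op-norm-W-Gamma} produces (as the paper uses), not the $C_0\lambda^2 d^2$ of its statement, which would leave a stray factor $\lambda$.
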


\begin{proof}
    We apply Corollary \ref{cor:mean-median} with $D=2\lambda d^4$ and $f:X\mapsto\|(XX^*)^\Gamma\|_\infty$, which is upper bounded by $1$ on the unit sphere for the Hilbert--Schmidt norm. This is because, for all $X\in M_{d^2,\lambda d^2}(\C)$,
    \[ \big\|(XX^*)^\Gamma\big\|_\infty \leq \big\|(XX^*)^\Gamma\big\|_2 = \left\|XX^*\right\|_2 = \|X\|_4^2 \leq \|X\|_2^2. \] 
    We also define
    \[ \Omega_k = \left\{ X\in M_{d^2,\lambda d^2}(\C) : \|X\|_2=1 \text{ and } \|X\|_\infty\leq\frac{2}{d^{1-1/k}} \right\}. \]
    We know from Proposition \ref{prop:large-set} that $\Omega_k$ has measure at least $1-e^{-\lambda d^{2+2/k}}$, which is larger than $3/4$. What is more, $f$ is $4/d^{1-1/k}$-Lipschitz on $\Omega_k$. Indeed, for all $X,Y\in M_{d^2,\lambda d^2}(\C)$,
    \begin{align*} 
    \left| \big\|(XX^*)^\Gamma\big\|_\infty - \big\|(YY^*)^\Gamma\big\|_\infty \right| & \leq \big\|(XX^*)^\Gamma - (YY^*)^\Gamma\big\|_\infty \\
    & \leq \big\|(XX^*)^\Gamma - (YY^*)^\Gamma\big\|_2 \\
    & = \left\|XX^* - YY^*\right\|_2 \\
    & \leq 2\max\left(\|X\|_\infty,\|Y\|_\infty\right)\|X-Y\|_2.
    \end{align*}
    We thus get that, for all $\epsilon>4(e^{-\lambda d^{2+2/k}}+2\sqrt{2\pi/\lambda}/d^{3-1/k})$, hence a fortiori for all $\epsilon>16\sqrt{2\pi/\lambda}/d^{3-1/k}$,
    \[ \Prob\left(\big\|(XX^*)^\Gamma\big\|_\infty > \mathbb E\left(\big\|(XX^*)^\Gamma\big\|_\infty\right) + \epsilon \right) \leq e^{-\lambda d^{2+2/k}} + 2e^{-\lambda d^{6-2/k}\epsilon^2/32}. \]
    In particular, taking $\epsilon=1/d^{2-2/k}$, we get
    \begin{equation} \label{eq:deviation-infty'}
    \Prob\left(\big\|(XX^*)^\Gamma\big\|_\infty > \mathbb E\left(\big\|(XX^*)^\Gamma\big\|_\infty\right) + \frac{1}{d^{2-2/k}} \right) \leq 3e^{-\lambda d^{2+2/k}/32}.
    \end{equation}
    
    The only thing that remains to be done is to estimate $\E(\|(XX^*)^\Gamma\|_\infty)$. As in the proof of Proposition \ref{prop:large-set}, we write $X=G/\|G\|_2$, where $G\in M_{d^2,\lambda d^2}(\C)$ is a matrix whose entries are independent complex Gaussians with mean $0$ and variance $1$. We then have
    \[ \E\left(\big\|(XX^*)^\Gamma\big\|_\infty\right) = \frac{\E\left(\big\|(GG^*)^\Gamma\big\|_\infty\right)}{\E\left(\|G\|_2^2\right)} . \]
    Now, we have 
    \[ \E\left(\big\|(GG^*)^\Gamma\big\|_\infty\right) \leq C_0\lambda d^2 \quad \text{and} \quad \E\left(\|G\|_2^2\right) = \lambda d^4 , \]
    where the first estimate is by Lemma \ref{lem:op-norm-W-Gamma}. We thus have
    \[ \E\left(\big\|(XX^*)^\Gamma\big\|_\infty\right) \leq \frac{C_0}{d^2}. \]
    Plugging this upper bound into the concentration inequality \eqref{eq:deviation-infty'} gives exactly the announced result, after observing that $C_0/d^2\leq 1/d^{2-2/k}$ for $d\geq\sqrt{C_0}^{k}$.
\end{proof}

\begin{theorem} \label{th:pk_concentration_improved}
    Fix $k\in\mathbb N$ and define $f_k$ as in equation \eqref{eq:def-f}. Let $X\in M_{d^2,\lambda d^2}(\C)$ be uniformly distributed on the unit sphere for the Hilbert--Schmidt norm. Then, as soon as $d\geq C(\lambda,k)$, for every $\delta>C'(\lambda,k)/d^{1+1/k}$,
    \[ \Prob\left(\left|f_k(X)- \frac{M_k(\lambda)}{\lambda^k}\frac{1}{d^{2k-2}}\right| > \frac{\delta}{d^{2k-2}} \right) \leq e^{-c(\lambda,k)d^{2+2/k}\delta^2}, \]
    where $C(\lambda,k),C'(\lambda,k),c(\lambda,k)>0$ are constants depending only on $\lambda$ and $k$.
\end{theorem}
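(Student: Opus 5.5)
We will apply Corollary~\ref{cor:mean-median} on the Hilbert--Schmidt unit sphere of $M_{d^2,\lambda d^2}(\C)\simeq\R^{D}$, with $D=2\lambda d^4$, to the function $f=f_k$. The good set will be
\[ \Omega_k=\left\{X:\ \|X\|_2=1,\ \|X\|_\infty\le \frac{2}{d^{1-1/k}},\ \big\|(XX^*)^\Gamma\big\|_\infty\le\frac{2}{d^{2-2/k}}\right\}. \]
By Propositions~\ref{prop:large-set} and~\ref{prop:large-set'}, for $d\ge C(\lambda,k)$ its complement has measure at most $e^{-\lambda d^{2+2/k}}+3e^{-\lambda d^{2+2/k}/32}\le 4e^{-\lambda d^{2+2/k}/32}$, which is in particular less than $1/4$.

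For the Lipschitz constant on $\Omega_k$, Lemma~\ref{lem:Lipschitz-constant} gives
\[ L\le 2kd\left(\frac{2}{d^{2-2/k}}\right)^{k-1}\frac{2}{d^{1-1/k}}=\frac{2^{k+1}k}{d^{2k-2-(2k-1)/k}} . \]
Hence $L\lesssim_k d^{-(2k-2)}\,d^{(k-1)/k}$, since $2-(2k-1)/k=1/k$. For the sup bound, on the whole sphere $|f_k(X)|\le \|(XX^*)^\Gamma\|_2^k\,d^2\le d^2=:R$ crudely, or $|f_k|\le \|(XX^*)^\Gamma\|_2^2\le 1$ for $k\ge2$ via $\|\cdot\|_\infty\le\|\cdot\|_2\le1$. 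Either way $R\Prob(\Omega_k^c)$ is exponentially small. Corollary~\ref{cor:mean-median} then yields, for $\epsilon$ above $\varepsilon(D,\Omega_k,L,R)\lesssim e^{-c d^{2+2/k}}+L/d^2$,
\[ \Prob\big(|f_k-\E f_k|>\epsilon\big)\le 4e^{-\lambda d^{2+2/k}/32}+2\exp\left(-\frac{2\lambda d^4\epsilon^2}{16L^2}\right). \]
Setting $\epsilon=\delta/d^{2k-2}$ gives $D\epsilon^2/L^2\gtrsim \lambda d^{4}\delta^2/d^{2(k-1)/k}=\lambda d^{2+2/k}\delta^2$. This is the claimed exponent, and the threshold condition becomes $\delta\gtrsim L d^{2k-2}/d^2\sim d^{-1-1/k}$, matching $C'(\lambda,k)/d^{1+1/k}$. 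The first error term $4e^{-\lambda d^{2+2/k}/32}$ is absorbed into $e^{-c\,d^{2+2/k}\delta^2}$ as long as $\delta$ is bounded; for large $\delta$ the event is empty, since $|f_k|\le1$ while $\delta/d^{2k-2}$ eventually exceeds $2$. Shrinking $c$ absorbs the constant prefactors, after restricting to $\delta$ where the bound is nontrivial.

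It remains to replace $\E f_k$ by $M_k(\lambda)\lambda^{-k}d^{-(2k-2)}$. Writing $X=G/\|G\|_2$ with $\|G\|_2$ independent of $X$, we get $\E f_k(X)=\E p_k(W)/\E\|G\|_2^{2k}$. Here $\E\|G\|_2^{2k}=\prod_{j=0}^{k-1}(\lambda d^4+j)=\lambda^k d^{4k}(1+O(d^{-4}))$ (a Gamma moment), and $\E p_k(W)=M_k(\lambda)d^{2k+2}+O(d^{2k})$ by Theorem~\ref{th:first_moment_asymptotic}. So $|\E f_k - M_k\lambda^{-k}d^{-(2k-2)}|\le C d^{-2k}$. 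This is $\delta' d^{-(2k-2)}$ with $\delta'\sim d^{-2}\ll d^{-1-1/k}$, so it can be absorbed by halving $\delta$ within the allowed range.

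The main obstacle is the bookkeeping of exponents in the Lipschitz step: the local Lipschitz constant must scale exactly as $d^{-(2k-2)+(k-1)/k}$ so that the exponent comes out as $d^{2+2/k}$. This relies on the operator-norm controls of Propositions~\ref{prop:large-set} and~\ref{prop:large-set'}, which are the real input. Checking that the measure-of-complement term is dominated for all admissible $\delta$ needs mild care, which I would handle by the case split on $\delta$ described above.
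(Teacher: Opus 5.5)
Your route is the paper's: Corollary \ref{cor:mean-median} with $D=2\lambda d^4$, the good set $\Omega_k$ from Propositions \ref{prop:large-set} and \ref{prop:large-set'}, Lemma \ref{lem:Lipschitz-constant} for the local Lipschitz constant, then recentring via $X=G/\|G\|_2$. However, the step you flagged as the main obstacle contains a genuine error.

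Your displayed bound is correct: $L\le 2kd\,(2d^{-2+2/k})^{k-1}\,2d^{-1+1/k}=2^{k+1}k\,d^{-(2k-2)}d^{(2k-1)/k}$. The next line replaces $d^{(2k-1)/k}=d^{2-1/k}$ by $d^{(k-1)/k}=d^{1-1/k}$. This silently drops the factor $d$ that comes from $\|M\|_1\le d\|M\|_2$ in the lemma. With the correct $L\asymp_k d^{-(2k-4+1/k)}$ and $\epsilon=\delta/d^{2k-2}$, you get $D\epsilon^2/L^2\asymp_k\lambda d^{2/k}\delta^2$ and the admissibility condition $\delta\gtrsim d^{-1/k}$. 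So the argument as written only proves a bound $e^{-c\,d^{2/k}\delta^2}$ for $C d^{-1/k}<\delta\lesssim 1$, not the claimed exponent $d^{2+2/k}$ with threshold $d^{-1-1/k}$.

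The paper's own proof contains the same slip: it asserts that $f_k$ is $2^{k+1}k/d^{2k-3+1/k}$-Lipschitz on $\Omega_k$. This cannot be repaired by bookkeeping alone. For $k=2$, $f_2(X)=\|X\|_4^4$ has gradient $4XX^*X$. Take an interior point of $\Omega_2$ where half of the Hilbert--Schmidt mass sits on $d/2$ singular values equal to $d^{-1/2}$, with singular vectors $e_i\otimes e_i$, and the other half is in a generic bulk configuration. There the tangential gradient has norm about $2/d$, far above $16/d^{3/2}$. Reaching $d^{2+2/k}$ would need a different good set or finer gradient control, for instance through $\nabla f_k(X)=2k\,(A^{k-1})^\Gamma X$ with $A=(XX^*)^\Gamma$.

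A second, smaller issue: your case split on $\delta$ leaves the range $1\ll\delta<2d^{2k-2}$ uncovered. There $4e^{-\lambda d^{2+2/k}/32}$ is not dominated by $e^{-c\,d^{2+2/k}\delta^2}$. In fact no argument can cover this range for $k\ge 2$. Consider $X_0=(x_1\otimes x_2)y^*$: $(X_0X_0^*)^\Gamma$ is a rank-one projector, so $f_k(X_0)=1$, and on a cap of fixed radius around $X_0$ one has $f_k\ge 1/2$. Hence at $\delta=d^{2k-2}/4$ the probability is at least $e^{-C d^4}$, whereas the claimed bound is $e^{-c\,d^{4k-2+2/k}/16}$. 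The statement must therefore be read with $\delta$ bounded. The paper's proof tacitly works in that regime when it merges the two exponentials into $6e^{-c_{\lambda,k}d^{2+2/k}\delta^2}$.
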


\begin{proof}
    We apply Corollary \ref{cor:mean-median} with $D=2\lambda d^4$ and $f_k:X\mapsto\Tr(((XX^*)^\Gamma)^k)$, which is upper bounded by $1$ on the unit sphere for the Hilbert--Schmidt norm. This is because, for all $X\in M_{d^2,\lambda d^2}(\C)$, for $k=1$ we have
    \[ \Tr\left((XX^*)^\Gamma\right) = \Tr\left(XX^*\right) = \|X\|_2^2, \]
    while for $k\geq 2$ we have
    \[ \Tr\left(\big((XX^*)^\Gamma\big)^k\right) \leq \big\|(XX^*)^\Gamma\big\|_k^k \leq \big\|(XX^*)^\Gamma\big\|_2^k = \left\|XX^*\right\|_2^k = \|X\|_4^{2k} \leq \|X\|_2^{2k}. \] 
    We also define
    \[ \Omega_k = \left\{ X\in M_{d^2,\lambda d^2}(\C) : \|X\|_2=1 \text{ and } \|X\|_\infty\leq\frac{2}{d^{1-1/k}},\ \big\|(XX^*)^\Gamma\big\|_\infty\leq\frac{2}{d^{2-2/k}} \right\}. \]
    We know from Propositions \ref{prop:large-set} and \ref{prop:large-set'}, together with the union bound, that $\Omega_k$ has measure at least $1-e^{-\lambda d^{2+2/k}}-3e^{-\lambda d^{2+2/k}/32}$, hence at least $1-4e^{-\lambda d^{2+2/k}/32}$, which is larger than $3/4$. What is more, we know from Lemma \ref{lem:Lipschitz-constant} that $f_k$ has a Lipschitz constant $2^{k+1}k/d^{2k-3+1/k}$ on $\Omega_k$. 

    We thus get that, for all $\epsilon>4(4e^{-\lambda d^{2+2/k}/32}+2^kk\sqrt{2\pi/\lambda}/d^{2k-1+1/k})$, hence a fortiori for all $\epsilon>2^{k+3}k\sqrt{2\pi/\lambda}/d^{2k-1+1/k}$,
    \[ \Prob\left(\left|f_k(X)- \mathbb E\left(f_k(X)\right)\right| > \epsilon \right) \leq 4e^{-\lambda d^{2+2/k}/32} + 2e^{-\lambda d^{4k-2+2/k}/2^{2k+5}k^2}. \]
    In particular, taking $\epsilon=\delta/d^{2k-2}$, we get that, for all $\delta>C_{\lambda,k}/d^{1+1/k}$,
    \begin{equation} \label{eq:deviation-infty''}
    \Prob\left(\left|f_k(X)- \mathbb E\left(f_k(X)\right)\right| > \frac{\delta}{d^{2k-2}} \right) \leq 6e^{-c_{\lambda,k}d^{2+2/k}\delta^2},
    \end{equation}
    where $C_{\lambda,k}=2^{k+3}k\sqrt{2\pi/\lambda}$ and $c_{\lambda,k}=\lambda/2^{2k+5}k^2$.
    
    The only thing that remains to be done is to estimate $\E(f_k(X))$. As in the proofs of Propositions \ref{prop:large-set} and \ref{prop:large-set'}, we write $X=G/\|G\|_2$, where $G\in M_{d^2,\lambda d^2}(\C)$ is a matrix whose entries are independent complex Gaussians with mean $0$ and variance $1$. We then have
    \[ \E\left(f_k(X)\right) = \frac{\E\left(f_k(G)\right)}{\E\left(\|G\|_2^{2k}\right)} . \]
    Now, we have 
    \[ \E\left(f_k(G)\right) = M_k(\lambda)d^{2k+2} + O\left(d^{2k}\right) \quad \text{and} \quad \E\left(\|G\|_2^{2k}\right) = \lambda^k d^{4k} + O\left(d^{4k-4}\right), \]
    where the first estimate is by Theorem \ref{th:first_moment_asymptotic}. We thus have
    \[ \E\left(f_k(X)\right) = \frac{M_k(\lambda)}{\lambda^k}\frac{1}{d^{2k-2}} + O\left(\frac{1}{d^{2k}}\right). \]
    Plugging this estimate into the concentration inequality \eqref{eq:deviation-infty''} gives that, for all $\delta>C_{\lambda,k}/d^{1+1/k}$ (which implies that $\delta/d^{2k-2}>C_{\lambda,k}/d^{2k-1+1/k}=O(1/d^{2k})$),
    \[  \Prob\left(\left|f_k(X)- \frac{M_k(\lambda)}{\lambda^k}\frac{1}{d^{2k-2}}\right| > \frac{2\delta}{d^{2k-2}} \right) \leq 6e^{-c_{\lambda,k}d^{2+2/k}\delta^2}, \]
    which is exactly the announced result, up to relabeling constants.
\end{proof}

\section{Recurrence identity for asymptotic average moment coefficients}\label{app:moment}

We recall the expression of the asymptotic average moment coefficients, introduced in
the main text: for all $n\in\mathbb N$,
\[ M_n(\lambda)=\sum_{l=0}^{\lfloor n/2\rfloor}
\binom{n}{2l}\mathrm{Cat}(l)\lambda^{n-l}. \]
The goal of this appendix is to prove the recurrence identity used in Lemma~\ref{lem:genC}, namely:
for all $n\geq 2$,
\begin{equation} \label{eq:identity-M_n}
  M_n(\lambda)-\lambda M_{n-1}(\lambda)
  =\lambda\sum_{p+q=n-2} M_p(\lambda)\,M_q(\lambda).
\end{equation}

Let us start by estimating the left-hand side of equation \eqref{eq:identity-M_n}. For
all $n\ge2$, we have
\[
  M_n(\lambda)-\lambda M_{n-1}(\lambda)
  = \sum_{l=0}^{\lfloor n/2\rfloor}
    \left[\binom{n}{2l}-\binom{n-1}{2l}\right]
    \mathrm{Cat}(l)\lambda^{n-l}.
\]
The $l=0$ term vanishes, and for $l\ge1$, Pascal's identity gives
$\binom{n}{2l}-\binom{n-1}{2l}=\binom{n-1}{2l-1}$. So re-indexing with
$L=l-1$, we get
\begin{equation}\label{eq:lhs_moment}
  M_n(\lambda)-\lambda M_{n-1}(\lambda)
  = \sum_{L=0}^{\lfloor n/2\rfloor-1}
    \binom{n-1}{2L+1}\,\mathrm{Cat}(L+1)\lambda^{n-1-L}.
\end{equation}

Let us now turn to estimating the right-hand side of equation \eqref{eq:identity-M_n}.
For all $n\geq 2$, inserting the definitions of $M_p(\lambda)$ and $M_{n-2-p}(\lambda)$
for all $0\leq p\leq n-2$ and re-arranging the sums, we have
\[ \sum_{p=0}^{n-2} M_p(\lambda)M_{n-2-p}(\lambda)
= \sum_{l,l'=0}^{\lfloor(n-2)/2\rfloor}
    \mathrm{Cat}(l)\mathrm{Cat}(l')
    \lambda^{n-2-(l+l')}
    \left(\sum_{p=2l}^{n-2-2l'}
    \binom{p}{2l}\binom{n-2-p}{2l'}\right). \]
We can compute the inner sum by one of the binomial identities associated to the
Chu--Vandermonde convolution (see e.g.~\cite{Gould} for a list and proofs): for all
$0\leq l,l'\leq\lfloor(n-2)/2\rfloor$,
\[
  \sum_{p=2\ell}^{n-2-2l'} \binom{p}{2l}\binom{n-2-p}{2l'}
  = \binom{n-1}{2l+2l'+1}.
\]
Setting $L=l+l'$ and using the Catalan recurrence identity
$\mathrm{Cat}(L+1)=\sum_{l=0}^L
\mathrm{Cat}(l)\mathrm{Cat}(L-l)$, we thus get
\begin{align}\label{eq:rhs_moment}
  \sum_{p=0}^{n-2} M_p(\lambda)M_q(\lambda) & = \sum_{L=0}^{\lfloor n/2\rfloor-1}
    \binom{n-1}{2L+1} \lambda^{n-1-L} \left(\sum_{l=0}^L\mathrm{Cat}(l)\mathrm{Cat}(L-l)\right) \nonumber \\
  & = \sum_{L=0}^{\lfloor n/2\rfloor-1}
    \binom{n-1}{2L+1}\mathrm{Cat}(L+1)\lambda^{n-1-L}.
\end{align}

Comparing equations \eqref{eq:lhs_moment} and \eqref{eq:rhs_moment} completes the proof of identity \eqref{eq:identity-M_n}.

\bibliographystyle{alpha}
\bibliography{references}

\end{document}